
\documentclass[11pt]{article}

\usepackage{graphicx, verbatim, fullpage, hyperref, amssymb, amsmath, amsthm, enumerate, multicol, algorithm}
\usepackage[margin=1in]{geometry}

\setlength{\multicolsep}{6.0pt plus 2.0pt minus 1.5pt}

\def\final{1}  
\def\iflong{\iffalse}
\ifnum\final=0  
\newcommand{\knote}[1]{[{\tiny karthik: \bf #1}]\marginpar{*}}
\newcommand{\vnote}[1]{[{\tiny Laci: \bf #1}]\marginpar{*}}
\newcommand{\snote}[1]{[{\tiny Santosh: \bf #1}]\marginpar{*}}
\newcommand{\sidecomment}[1]{}
\else 
\newcommand{\knote}[1]{}
\newcommand{\vnote}[1]{}
\newcommand{\snote}[1]{}
\newcommand{\sidecomment}[1]{}
\fi  

\newtheorem{theorem}{Theorem}[section]
\newtheorem{prop}[theorem]{Proposition}
\newtheorem{lemma}[theorem]{Lemma}
\newtheorem{claim}[theorem]{Claim}
\newtheorem{corollary}[theorem]{Corollary}

\def\F{\mathcal{F}}

\def\L{\mathcal{L}}
\def\K{\mathcal{K}}
\def\H{\mathcal{H}}

\def\C{\mathcal{C}}
\def\O{\mathcal{O}}
\def\R{\mathbb{R}}

\def\Z{\mathbb{Z}}
\def\V{\mathcal{V}}

\def\eps{\epsilon}
\def\supp{\text{supp}}
\def\o{\text{odd}}

\newcommand{\fc}[1]{#1\text{-critical}}
\newcommand{\pfc}[1]{#1\text{-positively-critical}}

\newcommand{\proofbox}{\hfill$\square$}


\title{The Cutting Plane Method is Polynomial for Perfect
  Matchings \thanks{This work was supported in part by NSF award
    AF0915903, and the second author was also supported by NSF Grant
    CCF-0914732. This work was done while the first two authors were
    affiliated with  the College of Computing, Georgia Institute of Technology. }}

\author{Karthekeyan Chandrasekaran \footnote{School of Engineering and
    Applied Sciences, Harvard University. Email:
    karthe@seas.harvard.edu.
}
\and
L\'aszl\'o A. V\'egh \footnote{Department of Management, London School
  of Economics. Email: l.vegh@lse.ac.uk.
}
\and
Santosh S. Vempala \footnote{College of Computing, Georgia Institute of Technology. Email: vempala@cc.gatech.edu.}
}

\date{}

\begin{document}
\maketitle

\begin{abstract}
The cutting plane approach to finding minimum-cost perfect matchings has been discussed by
several authors
over past decades \cite{Padberg82,Grotschel85,Lovasz86,Trick87,fischetti07}, and
its convergence has been an open question.
We give a cutting plane algorithm that converges in polynomial-time using only Edmonds' blossom inequalities; it maintains 
half-integral intermediate LP solutions supported by a disjoint union of odd cycles and edges.  
Our main insight is a method to retain only a subset of the previously added cutting planes based on their dual values. 
This allows us to quickly find violated blossom inequalities and argue convergence by tracking the number of odd cycles in the support of intermediate solutions.
\end{abstract}

\section{Introduction}
Integer programming is a powerful and widely used approach for modeling and solving
discrete optimization problems \cite{nemhauser-wolsey,schrijver-IP-book}. Not
surprisingly, it is NP-complete and the fastest known algorithms are exponential in the
number of variables (roughly $n^{O(n)}$ \cite{Kannan87}). In spite of this
intractability, integer programs of considerable sizes are routinely solved in
practice. A popular approach is the cutting plane method, proposed by Dantzig,
Fulkerson and Johnson \cite{CP-intro-TSP-DFJ54} and pioneered by Gomory \cite{Gomory58,
gomory-cutting-plane-G60, Gomory63}. This approach can be summarized as follows:
\begin{enumerate}
\item Solve a linear programming relaxation (LP) of the given integer program (IP) to
    obtain a basic optimal solution $x$.
\item If $x$ is integral, terminate. If $x$ is not integral, find a linear inequality
    that is valid for the convex hull of all integer solutions but violated by $x$.
\item Add the inequality to the current LP, possibly drop some other inequalities and
    solve the resulting LP to obtain a basic optimal solution $x$. Go back to Step 2.
\end{enumerate}

For the method to be efficient, we require the following: {(a)}  an efficient procedure
for finding a violated inequality (called a cutting plane), {(b)} convergence of the
method to an integral solution using the efficient cut-generation procedure and {(c)} a
bound on the number of iterations to convergence. Gomory gave the first efficient
cut-generation procedure and showed that the cutting plane method implemented using
his procedure always converges to an integral solution \cite{Gomory63}. Today, there
is a rich theory on the choice of cutting planes, both in general and for specific
problems of interest.
This theory includes interesting families of cutting planes with efficient
cut-generation procedures
\cite{Gomory58,intersection-cuts-B71,cg-rank-C73,disjunctive-cuts-B79,split-cuts-CKS90,
MIR-cuts-NW90, lift-and-project-BCC93, lovasz-schrijver91,sherali-adams94}, valid
inequalities, closure properties and a classification of the strength of inequalities
based on their {\em rank} with respect to cut-generating procedures
\cite{cutting-plane-closures-CL00} (e.g., the Chv\'atal-Gomory rank
\cite{cg-rank-C73}), and testifies to the power and generality of the cutting plane
method.

To our knowledge, however, there are no polynomial bounds on the number of iterations
to convergence of the cutting plane method even for specific problems using specific
cut-generation procedures. The best bound for general $0$-$1$ integer programs remains
Gomory's bound of $2^n$
\cite{Gomory63}. It is possible that such a bound can be significantly
improved for IPs with small Chv\'atal-Gomory rank
\cite{fischetti07}. A more realistic possibility is that the approach
is provably efficient for combinatorial optimization problems that are
known to be solvable in polynomial time.
An ideal candidate could be a problem that (i) has a
polynomial-size IP-description (the LP-relaxation is polynomial-size), and (ii)
the convex-hull of integer solutions has a polynomial-time separation oracle.
Such a problem admits a polynomial-time algorithm via the Ellipsoid method \cite{GLS}.
Perhaps the first such interesting problem is  minimum-cost perfect matching:
{\em given a graph with costs on the edges, find a perfect matching of
  minimum total cost.} This is a very well-studied problem with efficient algorithms \cite{Lovasz86, Schrijver03}.

A polyhedral characterization of the matching problem was discovered by Edmonds
\cite{Edmonds65}.  Basic solutions (extreme points of
the polytope) of the following linear program correspond to perfect matchings of the graph.
\begin{equation}\tag{P}\label{prog:P-PM}
\begin{aligned}
\min& \sum_{uv\in E} c(uv) x(uv) \\
x(\delta(u))&=1\quad\forall u\in V\notag\\
x(\delta(S))&\ge 1\quad \forall S \subsetneq V, |S| \mbox{ odd}, 3\le|S|\le|V|-3
\notag\\
x&\ge0\notag
\end{aligned}
\end{equation}
The relaxation with only the degree and nonnegativity constraints,
known as the {\em bipartite relaxation}, suffices to characterize the convex-hull of
perfect matchings in bipartite graphs, and serves as a natural starting relaxation. The
inequalities corresponding to sets of odd cardinality greater than $1$ are called {\em
blossom} inequalities. These inequalities have Chv\'atal rank 1, i.e., applying one
round of all possible Gomory cuts to the bipartite relaxation suffices to recover the
perfect matching polytope of any graph \cite{cg-rank-C73}. Moreover, although the
number of blossom inequalities is exponential in the size of the graph, for any point
not in the perfect matching polytope, a violated (blossom) inequality can be found in
polynomial time \cite{Padberg82}. This suggests a natural cutting plane algorithm
(Algorithm \ref{alg:cutting-plane}), proposed by Padberg and Rao \cite{Padberg82} and
discussed by Lov\'asz and Plummer \cite{Lovasz86}. Experimental evidence suggesting that this method converges quickly
was given by Gr\"{o}tschel and Holland \cite{Grotschel85},  by Trick \cite{Trick87},
and by Fischetti and Lodi \cite{fischetti07}. It has been open to rigorously explain
their findings. In this paper, we address the question of whether the method can be
implemented to converge in polynomial time.

\begin{algorithm}[ht]
\caption{Cutting plane method for perfect matching}
\label{alg:cutting-plane}
\begin{enumerate}
\item Start by solving the bipartite relaxation.
\item While the current solution has a fractional coordinate,
\begin{enumerate}
	\item Find a violated blossom inequality and add it to the LP.
	\item Solve the new LP.
\end{enumerate}
\end{enumerate}
\end{algorithm}

Known polynomial-time algorithms for minimum-cost perfect matching are
variants of Edmonds' weighted matching algorithm
\cite{Edmonds65}. A natural idea is to interpret the Edmonds' algorithm, which maintains a partial matchng and shrinks and unshrinks odd sets, as a cutting plane algorithm, possibly
by adding cuts corresponding to the shrunk sets in the iterations of Edmonds'
algorithm. However, there seems to be no correspondence between LP solutions and partial matchings and shrunk sets in his algorithm. 
It is even possible that the initial
bipartite relaxation already has an integer optimal solution, whereas Edmonds'
algorithm proceeds by shrinking and deshrinking a long sequence of odd sets.
So we take a different route.

The bipartite relaxation has the nice property that any
basic solution is half-integral and its support is a disjoint union of edges and odd
cycles. This makes it particularly easy to find violated blossom inequalities---any
odd component of the support gives one. This is also the simplest heuristic that is
employed in the implementations \cite{Grotschel85,Trick87} for finding violated blossom
inequalities. However, if we have a fractional solution in a later phase, there is no
guarantee that we can find an odd connected component whose blossom inequality is
violated, and therefore sophisticated and significantly slower separation methods are
needed for finding cutting planes, e.g., the Padberg-Rao procedure \cite{Padberg82}.
Thus, it is natural to wonder if there is a choice of cutting planes that maintains
half-integrality of intermediate LP optimal solutions.

\begin{figure}[ht]
\centering
\begin{tabular}{ccccccc}
\includegraphics[scale=0.4]{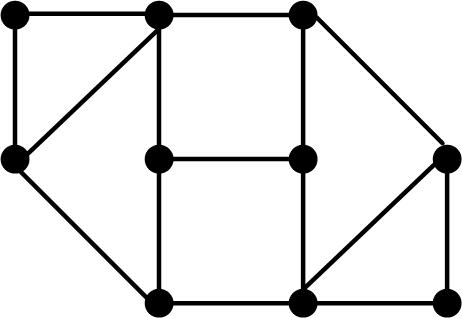}
& & &
\includegraphics[scale=0.4]{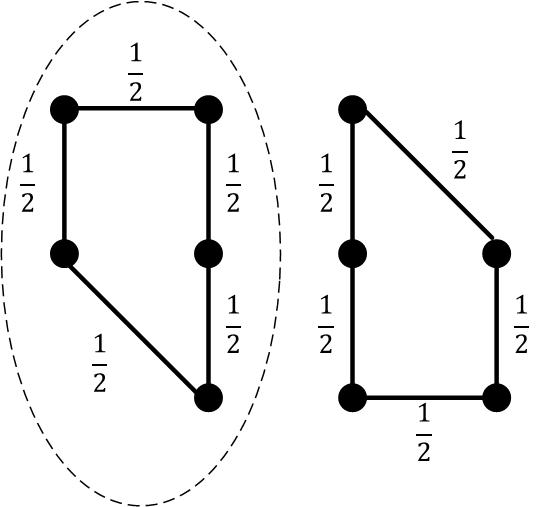}
& & &
\includegraphics[scale=0.4]{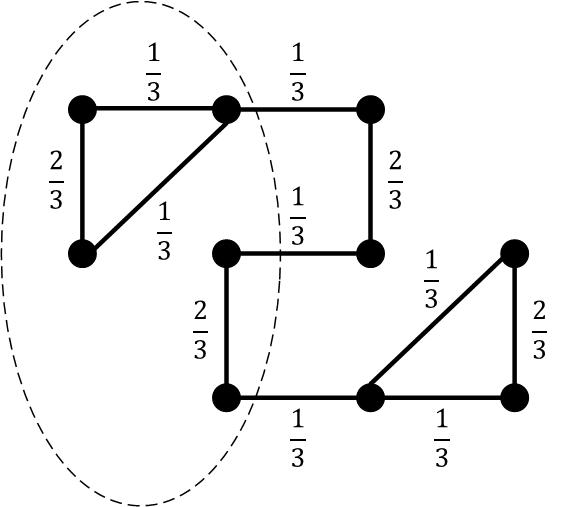}\\
Graph $G$ with all & & & The starting optimum $x_0$& & & Basic feasible solution obtained \\
edge costs one& & &  and the cut to be imposed & & & after imposing the cut
\end{tabular}
\caption{An example where half-integrality is not preserved by the black-box LP solver.}\label{fig:five-cycles}
\end{figure}

At first sight, maintaining half-integrality using a black-box LP solver seems to be impossible.
Figure~\ref{fig:five-cycles} shows an example where the starting solution consists of
two odd cycles. There is only one reasonable way to impose cuts, and it leads to a basic feasible solution that is not half-integral. We observe however, that in the example, the bipartite relaxation also has an integer optimal solution. The issue here seems to be the existence of multiple basic optimal solutions. To avoid such degeneracy, we will ensure that all linear systems that we encounter have unique optimal solutions.

This uniqueness is achieved by a simple deterministic perturbation of the integer
cost function, which increases the input size polynomially. However, this
perturbation is only a first step towards maintaining half-integrality of intermediate
LP optima. As we will see presently, more careful cut retention and cut addition procedures are needed even to
maintain half-integrality.

\subsection{Main result}
To state our main result, we first recall the definition of a laminar family:
A family $\F$ of subsets of $V$ is called {\em laminar}, if for any two sets 
$X,Y\in {\cal F}$, either $X\cap Y=\emptyset$ or $X\subseteq Y$ or $Y\subseteq X$.

Next we define a perturbation to the cost function that will help avoid some degeneracies.
Given an integer cost function $c:E\rightarrow \Z$ on
the edges of a graph $G=(V,E)$, let us define the perturbation $\tilde
c$ by ordering the edges arbitrarily, and increasing the cost of edge
$i$ by $1/2^{i}$.

We are now ready to state our main theorem.

\begin{theorem}\label{thm:main}
Let $G=(V,E)$ be a graph on $n$ nodes with edge costs $c:E \rightarrow \Z$ and let
$\tilde{c}$ denote the perturbation of $c$. Then,
there exists an implementation of the cutting plane method that finds the minimum
$\tilde{c}$-cost perfect matching such that
\begin{enumerate}[(i)]
\item every intermediate LP is defined by the bipartite relaxation constraints and a
    collection of blossom inequalities corresponding to a laminar family of odd
    subsets,
\item every intermediate LP optimum is unique, half-integral, and supported by a
    disjoint union of edges and odd cycles and
\item the total number of iterations to arrive at a minimum $\tilde{c}$-cost perfect
    matching is $O(n \log n)$.
\end{enumerate}
The collection of blossom inequalities used at each step can be identified by
solving an LP of the same size as the current LP. Further, the minimum $\tilde{c}$-cost perfect
matching is also a minimum $c$-cost perfect matching.
\end{theorem}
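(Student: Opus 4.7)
The plan is to design the algorithm so that the three invariants (i), (ii), (iii) are maintained simultaneously, and then to bound the number of iterations by a potential argument driven by the odd-cycle structure. First, I would dispose of the two auxiliary claims. The total perturbation satisfies $\sum_{i=1}^{|E|} 2^{-i} < 1$, so two integer perfect matchings differ in $\tilde c$-cost only if they differ in $c$-cost, giving the final sentence. The exponential-in-$i$ weights also enforce a strict tie-breaking lexicographic order on edges, which is what will let us argue that each intermediate LP optimum is unique: among all optimal bases of a fixed LP with half-integral optima, the $\tilde c$-perturbation is large enough to distinguish any two of them.

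For the structural invariants (i) and (ii) I would proceed by induction on the iteration count. The base case is the bipartite relaxation, whose basic feasible solutions are well known to be half-integral with support equal to a disjoint union of edges and odd cycles; the empty collection of blossom cuts is trivially laminar. For the inductive step I would specify both the cut-addition rule and the cut-retention rule. The natural cut to add is the blossom inequality $x(\delta(S))\ge 1$ where $S$ is the vertex set of some odd cycle in the current half-integral support; this cut is violated with slack exactly $1$ since the support gives $x(\delta(S))=0$. For retention, I would keep only those previously added blossom constraints whose dual value in the current LP is strictly positive, which is the ``subset based on dual values'' idea from the abstract. The core lemmas to prove are then: (a) the dual-positive blossom constraints, together with the newly added odd-component cut, form a laminar family, and (b) the resulting LP, together with the perturbed objective, again has a unique basic optimum that is half-integral with edge/odd-cycle support. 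Claim (a) uses laminarity of the previous iteration and the fact that the new $S$ is exactly a connected component of the current support, so it is either disjoint from or nested inside every existing laminar set. Claim (b) is a structural theorem about LPs with degree constraints, nonnegativity, and laminar blossom constraints; I would prove it by uncrossing / contracting the laminar family and reducing to the base case.

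The heart of the theorem is the iteration count $O(n\log n)$. I would introduce a potential function $\Phi$ on the current LP optimum combining (the number of odd cycles in the support) with a ``depth'' term measuring how refined the laminar family is underneath each cycle, and argue that each iteration either strictly reduces the number of odd cycles or halves the ``size'' of some cycle in a precise sense (each cycle can be shrunk only $O(\log n)$ times before becoming an edge or disappearing, because contractions and the laminar nesting depth are bounded by $\log n$ at each chain). Since the support contains at most $n/3$ odd cycles at any moment, a telescoping bound gives $O(n\log n)$. The claim that the next set of cuts can be identified by an LP of the same size follows from reading off the dual multipliers of the current LP.

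The main obstacle will be showing simultaneously that the retention rule preserves laminarity, half-integrality, and uniqueness: dropping a dual-zero constraint could in principle introduce a new optimal basis that is no longer half-integral, or create two competing optima; conversely, adding the new odd-component cut could interact with surviving laminar sets in ways that break the clean edge/odd-cycle support structure. Bridging this requires leveraging both the perturbation (for uniqueness) and a careful case analysis of how odd cycles in the support can sit with respect to the laminar family. The secondary obstacle is the amortized $\log n$ factor in the iteration bound, which forces the potential function to be more subtle than a plain count of odd cycles.
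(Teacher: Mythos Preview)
Your high-level plan is reasonable, but several of the concrete mechanisms you propose do not work, and the gaps coincide with the main technical contributions of the paper.

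\textbf{Which dual?} You write ``keep only those previously added blossom constraints whose dual value in the current LP is strictly positive.'' The LP typically has many dual optima, and an arbitrary one does not suffice: the paper shows that one must use a specific \emph{$\Gamma$-extremal} dual optimum (minimizing a weighted $\ell_1$-distance to the previous dual), and that this choice is what guarantees the resulting dual is \emph{$\F$-positively-critical}. That property, in turn, is what makes your claim~(b) true. Simply ``contracting the laminar family and reducing to the base case'' does not go through for an arbitrary laminar family of blossom cuts; the contraction argument (Lemma~\ref{lem:contract-factor-critical}) requires that each contracted set be $(\Pi,\F)$-factor-critical in the graph of $\Pi$-tight edges, which is exactly what positively-critical encodes. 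Without the extremal-dual step you have no control over this.

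\textbf{Laminarity of the new cut.} Your claim~(a) is false as stated. The vertex set $V(C)$ of an odd cycle in the support need \emph{not} be laminar with the retained sets: if $S\in\H'$ has $\Pi(S)>0$, complementary slackness forces $x(\delta(S))=1$, and inside $S$ the support restricts to matching edges plus one even path, so the cycle $C$ may well cross $S$. The paper's fix is to add not $V(C)$ but $\hat C$, the union of $V(C)$ with all maximal sets of $\H'$ intersecting it; one then has to argue that $\hat C$ is itself odd and that these unions are pairwise disjoint.

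\textbf{The $O(n\log n)$ bound.} Your proposed potential (odd-cycle count plus a ``depth'' term, with iterations halving the size of some cycle) is not the mechanism, and I do not see how to make a halving argument work here. The paper's argument is different and crucially relies on the extremal-dual choice: one shows (Lemma~\ref{lem:odd-cycles}) that $\o(x)$ is non-increasing, and (Lemma~\ref{lem:strong-progress}) that in any stretch of iterations where $\o(x)$ stays equal to $\ell$, \emph{all} of the $\ell$ newly added cuts in each iteration are retained in every subsequent family $\F_k$. Since a laminar family of nontrivial odd sets has at most $n/2$ members, such a stretch has length at most $n/(2\ell)$, and summing $\sum_\ell n/(2\ell)$ gives $O(n\log n)$. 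Proving the retention lemma is the hard part and is done by coupling the cutting-plane iterations with an auxiliary half-integral primal--dual procedure; this is also where the extremal property of the dual is indispensable, since one must show that the LP dual coincides with the dual produced by that procedure.
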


To our knowledge, this is the first polynomial bound on the convergence of a cutting plane method for matchings using a black-box LP solver.
It is easy to verify that for an $n$-vertex graph, a laminar family of nontrivial odd sets may
have at most $n/2$ members, hence every intermediate LP has at most $3n/2$
inequalities apart from the non-negativity constraints. This ensures that the intermediate LPs do not blow-up in size. While the LPs could be solved using a black-box LP solver, we also provide a combinatorial algorithm that could be used to solve them.

\subsection{Related work}

\snote{To be edited!}
We discovered after completing this work that a very similar question
was addressed by Bunch in his thesis \cite{Bunch97}.
He proposed a cutting plane algorithm for the more general
$b$-matching problem, maintaining a sequence of half-integral primal
solutions.
The intermediate LPs are solved using a primal-dual simplex method. Each new cut and simplex pivot is chosen carefully so that the primal/dual solution  resulting after the pivot step can also be obtained through a combinatorial operation in an associated graph. So, the intermediate solutions correspond to the intermediate solutions of a combinatorial algorithm. This combinatorial algorithm is a variant of 
Miller-Pekny's combinatorial algorithm \cite{Miller95} that proceeds by maintaining a sequence of half-integral solutions and
 is known to terminate in polynomial-time.
Consequently, the intermediate primal solutions of Bunch's algorithm are also half-integral and the algorithm terminates in polynomial time.

The main advantages of
our algorithm compared to Bunch's work are 
the following:
\begin{itemize}
\item We present a purely cutting plane method, using a black-box LP
  solver.  In contrast, the cut generation method in Bunch's algorithm,
  similar to Gomory cuts, relies heavily on the optimal simplex
  tableaux that is derived by the primal-dual simplex method; this
  mimics a certain combinatorial algorithm.
  We will also refer to a similar combinatorial algorithm, however, it
  will only be used in the analysis.
\item We provide a simple and concise sufficient condition  for the LP
  defined by the bipartite relaxation constraints and a subset of
  blossom inequalities to have a half-integral optimum (Lemma
  \ref{lem:pof-uniqueness-implies-half-integrality}). No such insight
  is given in \cite{Bunch97}.
\item Our treatment is substantially simpler, both regarding the algorithm and the proofs.
We prove a bound $O(n\log n)$ on the number of cutting plane iterations, whereas no explicit bound is given in \cite{Bunch97}.
\end{itemize}


\subsection{Cut selection via dual values}
Uniqueness of optimum LP
solutions does not suffice to maintain half-integrality of optimal solutions upon
adding any sequence of blossom inequalities. At any iteration, inequalities that are tight for the current optimal solution are
natural candidates to be retained in the next iteration while the new inequalities are
determined by odd cycles in the support of the current optimal solution. However, it
turns out that keeping all tight inequalities does not maintain half-integrality. 
In fact, as shown in Fig. \ref{fig:five-cycles}, even a laminar family of blossom
inequalities is insufficient to guarantee the nice structural property on the
intermediate LP optimum. Thus the new cuts have to be added carefully 
and it is also crucial that we choose carefully which older cuts to retain.
\snote{We could add the second non-half-integrality example here that holds even with uniqueness.}\knote{The example is too large. I have a picture of it. It has more than 20 nodes. I don't think it is insightful for the reader.}

Our main algorithmic insight is that the choice of cuts for the next iteration can be
determined by examining optimal dual solutions to the current LP --- we retain those cuts whose dual values are strictly positive. Since there could be multiple dual optimal solutions, we use a restricted type of dual optimal solution (called {\em positively-critical dual} in this paper) that can be computed either by solving a single LP of the same size or
via a combinatorial subroutine. We ensure that the set of cuts imposed in any LP are
laminar and correspond to blossom inequalities.
We remark that eliminating cutting planes that have zero dual values 
is common
in implementations of the cutting plane algorithm.  

\subsection{Algorithm C-P-Matching}
All graphs in the paper will be undirected. For a graph $G=(V,E)$, and a subset $S\subseteq V$, let $\delta(S)$ denote the set of edges in $E$ with exactly one endpoint in $S$, and $E[S]$ the set of edges in $E$ with both endpoints inside $E$. For a node $u\in V$, $\delta(u)$ will be used to denote $\delta(\{u\})$, the set of edges incident to $u$. For a vector $x:E\rightarrow \R_+$, $\supp(x)$ will denote its support, i.e., the set of edges $e\in E$ with $x(e)>0$.

We now describe our cutting plane algorithm. Let $G=(V,E)$ be a graph, $c:E\rightarrow\R$ a cost function on the
edges, and assume $G$ has a perfect matching. The {\em
  bipartite relaxation} polytope and its dual are specified as
follows.

\begin{multicols}{2}
\noindent
\begin{equation}\tag{$P_0(G,c)$}\label{prog:P-bip}%
\begin{aligned}%
\min& \sum_{uv\in E} c(uv) x(uv) \\
x(\delta(u))&=1\quad\forall u\in V\\
x&\ge0
\end{aligned}
\end{equation}
\begin{equation}\tag{$D_0(G,c)$}\label{prog:D-bip}%
\begin{aligned}%
\max\sum_{u\in V}\pi(u)& \\
\pi(u)+\pi(v) &\le c(uv) \quad \forall uv\in E\\
~
\end{aligned}
\end{equation}
\end{multicols}
We call a vector $x \in \R^E$ {\em proper-half-integral} if $x(e) \in \{0,1/2,1\}$ for
every $e \in E$, and its support $\supp(x)$ is a disjoint union of edges and odd cycles.
The bipartite relaxation of any graph has the following well-known property.

\begin{prop}\label{prop:proper}
Every basic feasible solution $x$ of \ref{prog:P-bip} is proper-half-integral.
\proofbox
\end{prop}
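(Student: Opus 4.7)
The plan is to show that any basic feasible solution $x$ takes values in $\{0,1/2,1\}$ and that its support splits into a matching on the edges of value $1$ and a disjoint union of odd cycles on which every edge carries value $1/2$. I would begin by separating the support into $S_1 := \{e : x(e) = 1\}$ and $F := \{e : 0 < x(e) < 1\}$. Since each degree constraint reads $x(\delta(v)) = 1$ with $x \ge 0$, any vertex incident to an edge of $S_1$ has all other incident support edges equal to $0$. Hence $S_1$ forms a (partial) matching, vertex-disjoint from $F$, and at every vertex of the fractional subgraph $H := (V(F), F)$ the incident values in $F$ sum to exactly $1$.

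Next I would prove that every connected component of $H$ is a single odd cycle. Since each fractional value is strictly less than $1$ but they sum to $1$ at every vertex of $H$, every vertex of $H$ has $H$-degree at least $2$, so every component contains at least one cycle. If a component either contains an even cycle, contains two distinct cycles, or has a vertex of degree strictly greater than $2$, one can extract a substructure admitting a nonzero signed edge-vector $y$ with $y(\delta(v)) = 0$ for all $v$: alternate $\pm 1$ around an even cycle; for two vertex-disjoint odd cycles joined by a path $P$, alternate $\pm 1$ around each cycle (each contributing $\pm 2$ at its endpoint on $P$) and alternate $\mp 2$ along $P$ to absorb the imbalance; analogous constructions handle cycles sharing a vertex. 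Since $y$ is supported where $0 < x(e) < 1$, both $x \pm \epsilon y$ are feasible for small $\epsilon > 0$, writing $x = \tfrac{1}{2}((x+\epsilon y) + (x-\epsilon y))$ as a proper convex combination of two distinct feasible points and contradicting basicness. So each component of $H$ is a single cycle, which must be odd by the even-cycle elimination above.

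Finally, on an odd cycle with consecutive fractional values summing to $1$ at every vertex, the cyclic equations $x(e_i) + x(e_{i+1}) = 1$ force $x(e_i) = x(e_{i+2})$; oddness then forces all edges to share a common value, which must equal $1/2$. Together with the matching structure of $S_1$, this shows $x$ is proper-half-integral. The main technical step is the case-by-case construction of the perturbation $y$ in the forbidden substructures; once those circulation directions are in hand, the rest of the argument is immediate.
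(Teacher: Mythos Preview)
Your argument is correct and follows the standard route to this well-known fact. Note, however, that the paper does not actually supply a proof of this proposition: it is stated with a bare $\square$ and treated as folklore about the bipartite relaxation. So there is no ``paper's own proof'' to compare against; you have simply filled in the omitted details.

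A small remark on presentation: your case analysis for ruling out components that are not a single odd cycle is slightly redundant (degree $\ge 3$ at some vertex already forces two distinct cycles, given that every vertex has degree $\ge 2$), and you wave your hands a bit at the case where the two odd cycles share vertices or edges. A cleaner way to organize this step is via linear independence: for $x$ to be basic, the columns of the node--edge incidence matrix indexed by $\supp(x)$ must be linearly independent, which forces each connected component of the support to have at most as many edges as vertices and to be non-bipartite if it has a cycle; combined with minimum degree $2$ on $F$, each fractional component is exactly one odd cycle. But your perturbation-based argument is equally valid once the remaining substructure cases are written out.
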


Let $\O$ be the set of all odd subsets of $V$ of size at least $3$, and let $\V$ denote the set of one element
subsets of $V$.
For a family of odd sets ${\cal F}\subseteq {\cal O}$, consider the
following pair of linear programs.

\begin{multicols}{2}
\noindent
\begin{align}
\min& \sum_{uv\in E} c(uv) x(uv) \tag{$P_{\F}(G,c)$}\label{prog:P-F}\\
x(\delta(u))&=1\quad\forall u\in \V\notag\\
x(\delta(S))&\ge 1\quad \forall S\in {\cal F}\notag\\
x&\ge0\notag
\end{align}
\begin{align}
\max 
\sum_{S\in\V\cup {\cal F}}\Pi(S)& \tag{$D_{\cal F}(G,c)$}\label{prog:D-F}\\
\sum_{S\in\V\cup {\cal F}:uv\in \delta(S)} \Pi(S)&\le c(uv) \quad \forall
uv\in E\notag \\
\Pi(S)&\ge0\quad\forall S\in \F\notag
\end{align}
\end{multicols}
Note that $P_\emptyset(G,c)$ is identical to
\ref{prog:P-bip}, whereas $P_{\cal O}(G,c)$  is identical
to (\ref{prog:P-PM}). Every intermediate
LP in our cutting plane algorithm will be \ref{prog:P-F} for some
laminar family $\cal F$.
We will use $\Pi(v)$ to denote $\Pi(\{v\})$ for dual solutions.

\medskip
Assume we are given a dual feasible solution $\Gamma$ to $D_{\F}(G,c)$. We say that a
dual optimal solution $\Pi$ to $D_{\F}(G,c)$ is {\em $\Gamma$-extremal}, if it
minimizes
\[h(\Pi,\Gamma)=\sum_{S\in \V\cup\F}\frac{|\Pi(S)-\Gamma(S)|}{|S|}\]
among all dual optimal solutions $\Pi$. A $\Gamma$-extremal dual optimal solution can
be found by solving a single LP if we are provided with the primal optimal solution to
\ref{prog:P-F}  (see Section \ref{sec:extremal-dual-solutions}).

Our proposed  cutting plane implementation is Algorithm \ref{alg:main-alg}.
From the current set of cuts, we retain only those which have a positive value in an
extremal dual optimal solution; let $\H'$ denote this set of cuts.
The new set of cuts $\H''$ correspond to odd cycles in
the support of the current solution. However, in order to maintain
laminarity of the cut family, we do not add the vertex sets of these
cycles but instead  their union with all the sets in $\H'$ that they intersect. We
will show that these unions are also odd sets and thus give blossom inequalities. It will follow from our analysis
that each set in $\H'$ intersects at most new one odd cycle, so the sets added in $\H''$ are disjoint. 
This step is illustrated in Fig. \ref{fig:algo}.
(In the first iteration, there is no need to solve the dual LP as $\F$ will be
empty.)

\begin{algorithm}[ht]\caption{Algorithm C-P-Matching} \label{alg:main-alg}
\begin{enumerate}
\item Let $c$ be the cost function on the edges after perturbation (i.e., after
    ordering the edges arbitrarily and increasing the cost of edge $i$ by $1/2^i$).
\item {\bf Initialization.}  $\F=\emptyset$, $\Gamma\equiv 0$.
\item {\bf Repeat} until $x$ is integral:
\begin{enumerate}
\item {\bf Solve LP.} Find an optimal solution $x$ to $P_{\F}(G,c)$.
\item {\bf Choose old cutting planes.} Find a $\Gamma$-extremal dual optimal solution
    $\Pi$ to $D_{\F}(G,c)$.
 Let
\[\H'=\{S\in {\F}: \Pi(S)>0\}.\]
\item {\bf Choose new cutting planes.} Let $\C$ denote the set of odd cycles in
    $\supp(x)$. For each $C\in \C$, define $\hat C$ as the union of $V(C)$ and the
    maximal sets of $\H'$ intersecting it. Let
\[\H''= \{\hat C: C\in \C\}.\]
\item Set the next $\F=\H'\cup \H''$ and $\Gamma=\Pi$.
\end{enumerate}
\item {\bf Return} the minimum-cost perfect matching $x$.\\
\end{enumerate}
\end{algorithm}

\begin{figure}[ht]
\centering
\includegraphics[scale=0.45]{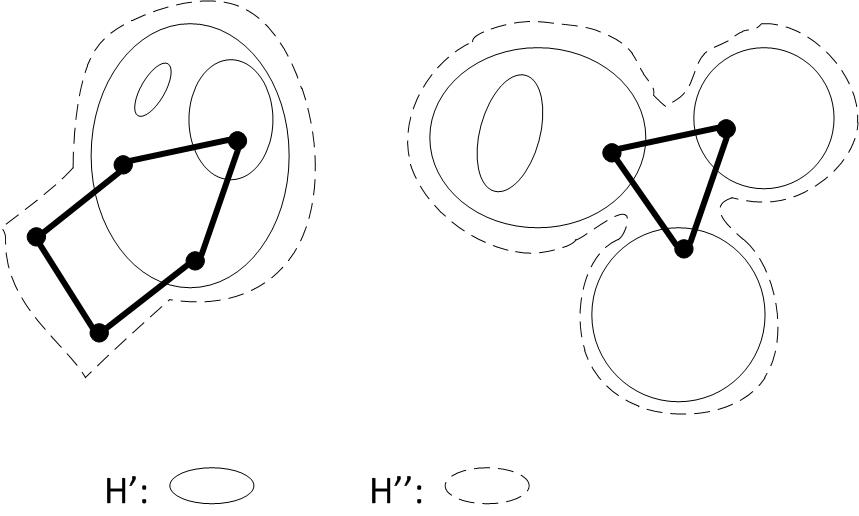}
\caption{Adding laminar sets for new cycles.}\label{fig:algo}
\end{figure}

\subsection{Overview of the analysis}
The aim of our analysis is two-fold: to show that the algorithm maintains half-integrality and that it converges quickly.

Our analysis to show half-integrality is based on extending the notion of factor-criticality.  Recall that a graph is {\em factor-critical} if deleting any node leaves the graph with a perfect matching. Factor-critical graphs play an important role in matching algorithms
(for more background, we refer to the books \cite{Lovasz86} and \cite{Schrijver03}). 
As an important example, the sets contracted during the course of Edmonds' matching algorithms (both unweighted \cite{Edmonds65matching} and weighted \cite{Edmonds65}) are factor-critical subgraphs. 
We define a notion of factor-criticality for weighted graphs that also takes a laminar odd family $\F$ into account.

To prove convergence, we use the number
of odd cycles in the support of an optimal half-integral solution as a
potential function. We first show
$\o(x_{i+1})\le \o(x_i)$, where $x_i,x_{i+1}$ are consecutive LP optimal solutions, and
$\o(x)$ is the number of odd cycles in the support of $x$. We further show that the cuts added
in iterations where $\o(x_i)$ does not decrease continue to be retained until $\o(x_i)$
decreases. Since the maximum size of a laminar family of nontrivial odd sets is $n/2$,
we get a bound of $O(n\log{n})$ on the number of iterations.

The analysis of the potential function behavior is quite intricate. It proceeds by
designing a {\em half-integral} combinatorial procedure for
minimum-cost perfect matching, and arguing that the optimal solution to the extremal
dual LP must correspond to the one found by this procedure. 
We emphasize that this procedure is used only in the analysis.
(It could also be used as a combinatorial method to solve the intermediate LPs in place of a black-box LP solver.) 
A complete, stand-alone extension of the half-integral combinatorial procedure to obtain min-cost perfect matchings is given in \cite{our-matching-alg}.
\section{Factor-critical sets}
In this section, we define factor-critical sets and factor-critical
duals. 

Let $H=(V,E)$ be a graph and $\F\subseteq\O$ be a laminar family of odd subsets of
$V$. We say that an edge set $M\subseteq E$ is an {\em $\F$-matching},
if it is a matching, and for any $S\in\F$, $|M\cap \delta(S)|\le
1$. For a set $U\subseteq V$, we call a set $M$ of edges to be an {\em
  $(U,\F)$-perfect-matching}, if it is an $\F$-matching covering
precisely the vertex set $U$.

A set $S\in \F$ is defined to be {\em $(H,\F)$-factor-critical} or {\em $\F$-factor-critical} in the graph $H$, if for every node
$u\in S$, there exists an $(S\setminus\{u\},\F)$-perfect-matching using the edges of
$H$. For a laminar family $\F$ and a feasible solution $\Pi$  to $D_{\F}(G,c)$, let
$G_\Pi=(V,E_\Pi)$ denote the graph of tight edges. For simplicity we will say that
a set $S \in \F$ is $(\Pi,\F)$-factor-critical if it is $(G_\Pi,\F)$-factor critical, i.e.,
$S$ is $\F$-factor-critical in $G_\Pi$. For a vertex $u \in S$,
the corresponding matching $M_u$ is called the {\em $\Pi$-critical-matching} for $u$. (If there are multiple such matchings, select $M_u$ arbitrarily.)
If $\F$ is clear from the context, then we simply say $S$ is
$\Pi$-factor-critical. 

Fig. \ref{fig:factor-critical} gives an example of an $(H,\F)$ factor-critical set. The three sets in $\F$ besides $S$ are indicated with circles. For any vertex $u \in S$, deleting $u$ leaves an $(S\setminus u, \F)$-perfect matching. However, when the edge $e$ is removed, the graph on the vertex set $S$ remains factor-critical, but the only perfect matching for $S\setminus u$ has two edge crossing $T$, a set in $\F$, and therefore $S$ is not $(H,\F)$-factor-critical.
\begin{figure}[ht]
\centering
\begin{tabular}{cc}
\includegraphics[scale=0.4]{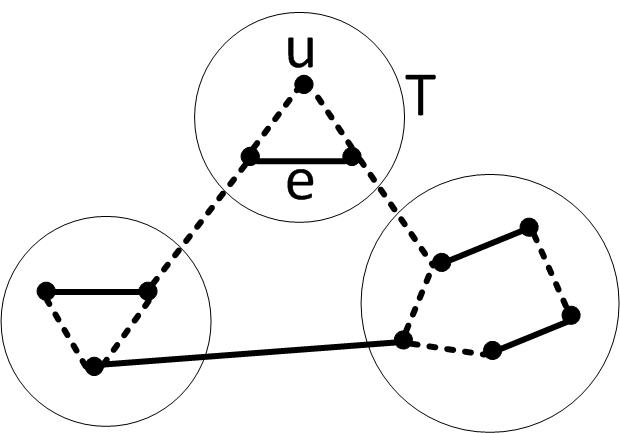}
& 
\includegraphics[scale=0.4]{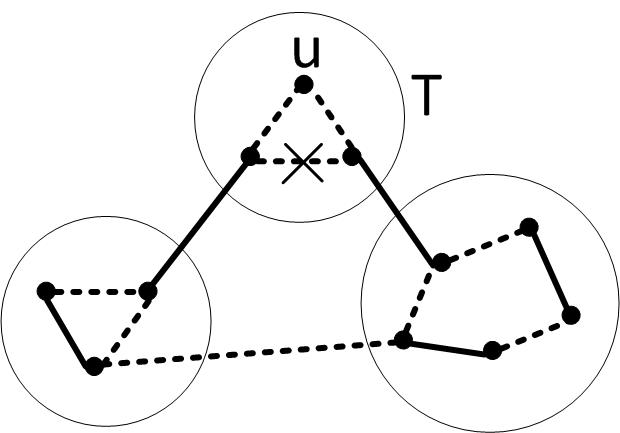}\\
An $(S\setminus u, \F)$ p.m.; $S$ is $(H,\F)$-factor-critical.  &   After deleting $e$, $S$ is not $(H,\F)$-factor-critical\\
& even though $S$ is factor-critical.
\end{tabular}
\caption{$(H,\F)$-factor-critical vs factor-critical}\label{fig:factor-critical}
\end{figure}

A feasible solution $\Pi$ to $D_\F(G,c)$ is an {\em $\fc{\F}$} dual,
if every $S\in \F$ is $(\Pi,\F)$-factor-critical, and $\Pi(T)>0$ for
every non-maximal set $T$ in $\F$. A family $\F\subseteq \O$ is called
a {\em critical family}, if $\F$ is laminar, and there exists an
$\fc{\F}$ dual solution.
This will be a significant notion: the set of cuts imposed in every iteration of
the cutting plane algorithm will be a critical family.
The following observation provides some context and motivation for these definitions.
\begin{prop}
Let $\F$ be the set of contracted sets at some stage of Edmonds' matching algorithm.
Then the corresponding dual solution $\Pi$ in the algorithm is an
$\fc{\F}$ dual.\proofbox
\end{prop}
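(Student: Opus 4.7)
The plan is to verify the three defining conditions of an $\fc{\F}$ dual directly from the standard invariants of Edmonds' weighted matching algorithm. Dual feasibility for $D_\F(G,c)$ and laminarity of $\F$ are maintained throughout the algorithm, so it remains to verify (a) that every $S \in \F$ is $(\Pi,\F)$-factor-critical and (b) that $\Pi(T)>0$ for every non-maximal $T \in \F$.

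For factor-criticality, I would induct on the depth of $S$ in the laminar family $\F$. Edmonds' algorithm forms each $S \in \F$ by contracting an odd alternating cycle $C$ of $\Pi$-tight edges whose nodes are either original vertices or previously-formed (hence, by induction, $\F$-factor-critical) sub-blossoms $T_1,\dots,T_k \in \F$ strictly contained in $S$. A key invariant is that after all subsequent dual updates, the edges of $C$ remain tight. Given $u \in S$, let $T^\ast$ be the super-vertex of $C$ that contains $u$ (possibly $T^\ast = \{u\}$). Since $C$ has odd length, removing $T^\ast$ from $C$ leaves a path of even length on super-vertices that admits a perfect matching $M_C$; pick either of the two such $M_C$. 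For each other super-vertex $T_j$ of $C$, the unique edge of $M_C$ incident to $T_j$ meets $T_j$ at some vertex $v_j$; extend $M_C$ by the $\Pi$-critical-matching $M_{v_j}$ inside $T_j$, which by induction is a $(T_j \setminus v_j,\F)$-perfect-matching using only $\Pi$-tight edges. Finally, if $T^\ast = T_i$ is a sub-blossom, extend by the inductively supplied $(T_i \setminus u,\F)$-perfect-matching. The union $M$ covers exactly $S \setminus u$ and uses only $\Pi$-tight edges.

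The main technical obstacle is checking that $M$ is genuinely an $\F$-matching, i.e.\ $|M \cap \delta(T)| \le 1$ for every $T \in \F$. The analysis splits by position of $T$ relative to $S$: if $T$ is disjoint from $S$ the claim is trivial; if $T = T_j$ is a super-vertex of $C$, then $M$ contributes exactly one edge of $C$ crossing $T_j$ and all other edges of $M$ touching $T_j$ lie inside $T_j$, so $|M\cap\delta(T_j)|=1$; if $T \subsetneq T_j$ for some $j$, then only the internal matching inside $T_j$ can cross $\delta(T)$, and by the inductive hypothesis that internal matching is itself an $\F$-matching; and the case $T = S$ is vacuous since we only need factor-criticality of $S$, not a bound at $S$ itself. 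This establishes $(\Pi,\F)$-factor-criticality of every $S \in \F$.

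For the positivity condition, I would invoke the standard rule of Edmonds' algorithm that a contracted blossom is immediately expanded (and deleted from $\F$) as soon as its dual value drops to zero. Consequently, any blossom that is properly contained in another blossom currently in $\F$ must still carry strictly positive dual, for otherwise it would have been unshrunk before the enclosing blossom could have been built on top of it. Combined with the two previous steps, this shows that the pair $(\F,\Pi)$ maintained by Edmonds' algorithm satisfies the definition of an $\fc{\F}$ dual at every stage. \proofbox
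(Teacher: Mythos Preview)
The paper does not actually prove this proposition: the trailing \verb|\proofbox| marks it as a stated observation about Edmonds' algorithm, offered only as motivation for the definitions. Your write-up is therefore not competing with an existing argument but supplying one, and what you have is essentially the standard verification. The inductive construction of the $\Pi$-critical matching from the blossom's odd cycle together with the critical matchings of the sub-blossoms is correct, and your case analysis for the $\F$-matching condition is exactly what is needed (it is the same reasoning the paper later packages as Lemma~\ref{lem:critical-matching}(i)).

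Two small points. First, when you remove the super-vertex $T^\ast$ from the odd cycle $C$, the remaining even path has a \emph{unique} perfect matching, not two; the phrase ``pick either of the two such $M_C$'' should be dropped. Second, your positivity argument is slightly loose: a freshly created blossom has $\Pi$-value $0$, and in principle a new blossom could be detected on top of it before any dual adjustment occurs. In the usual presentation of Edmonds' algorithm this is handled by the convention that after shrinking one continues scanning for tight edges and performs a dual change only when none remain, but one still has to argue (or stipulate) that a nested blossom has received at least one strictly positive dual increment before being enclosed. This is a known implementation detail rather than a gap in your reasoning, and it is presumably why the paper leaves the proposition unproved.
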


We call $\Pi$ to be an {\em $\pfc{\F}$} dual, if $\Pi$ is a feasible solution to
\ref{prog:D-F}, and every $S\in \F$ such that $\Pi(S)>0$ is $(\Pi,\F)$-factor-critical.
Clearly, every $\fc{\F}$ dual is also an $\pfc{\F}$ dual, but the converse
is not true. The extremal dual optimal solutions
found in every iteration of Algorithm C-P-Matching will be $\pfc{\F}$, where $\F$ is the family of blossom inequalities imposed in that iteration.

The next lemma summarizes elementary properties of $\Pi$-critical
matchings.
\begin{lemma}\label{lem:critical-matching}
Let $\F$ be a laminar odd family, $\Pi$ be a feasible solution to
\ref{prog:D-F}, and $S\in \F$ be a $(\Pi,\F)$-factor-critical set.
For $u,v\in S$, let $M_u$, $M_v$ be the $\Pi$-critical-matchings for $u$, $v$
respectively.
\begin{enumerate}[(i)]
\item
For every $T\in \F$ such that $T\subsetneq S$,
\begin{align*}
|M_u\cap\delta(T)|=
\begin{cases}
1 &\mbox{ if } u\in S\setminus T,\\
0 &\mbox{ if } u\in T.
\end{cases}
\end{align*}
\item Assume the symmetric difference of $M_u$ and $M_v$ contains a
  cycle $C$. Then $M_u\Delta C$ is also a $\Pi$-critical
  matching for $u$. 
\end{enumerate}
\end{lemma}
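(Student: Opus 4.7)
The plan is to prove (i) by a simple parity count and then use (i) as the main input to (ii). For part (i), fix $T\subsetneq S$ with $T\in\F$. Since $M_u$ is an $(S\setminus\{u\},\F)$-perfect-matching, every vertex of $T\setminus\{u\}$ is saturated by $M_u$, and no edge of $M_u$ is incident to $u$. Writing $a$ for the number of edges of $M_u$ with both endpoints in $T$ and $b=|M_u\cap\delta(T)|$, counting endpoints of edges of $M_u$ at vertices of $T\setminus\{u\}$ gives $2a+b=|T\setminus\{u\}|$. The right-hand side equals $|T|$ (odd) when $u\notin T$ and $|T|-1$ (even) when $u\in T$, so $b$ is forced to be odd in the first case and even in the second. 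Combined with the $\F$-matching bound $b\le 1$, this gives $b=1$ and $b=0$ respectively.

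For part (ii), set $M'=M_u\triangle C$ and check in turn that $M'$ is a matching, is contained in $E_\Pi$, saturates precisely $S\setminus\{u\}$, and is an $\F$-matching. The first three facts are routine: swapping along an alternating cycle produces a matching; $C\subseteq M_u\cup M_v\subseteq E_\Pi$; and every vertex of $C$ is saturated by both $M_u$ and $M_v$, so $V(C)\subseteq S$ and the set of saturated vertices is unchanged by symmetric-differencing with $C$. The content is therefore verifying $|M'\cap\delta(T)|\le 1$ for every $T\in\F$. By laminarity of $\F$ with $S\in\F$, any $T\in\F$ with $T\not\subsetneq S$ falls into $T\cap S=\emptyset$, $T=S$, or $T\supsetneq S$, and in each case $M_u\cap\delta(T)=C\cap\delta(T)=\emptyset$ because the edges of $M_u\cup C$ all live in $E[S]$. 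So it suffices to handle $T\subsetneq S$.

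For such $T$, part (i) applies to both $M_u$ and $M_v$. Set $\alpha=|M_u\cap C\cap\delta(T)|$ and $\beta=|M_v\cap C\cap\delta(T)|$; since $C$ alternates between $M_u$ and $M_v$ these two sets partition $C\cap\delta(T)$, and since $C$ is a cycle $\alpha+\beta$ is even. Then $|M'\cap\delta(T)|=|M_u\cap\delta(T)|-\alpha+\beta$, and the bounds $\alpha\le|M_u\cap\delta(T)|\le 1$, $\beta\le|M_v\cap\delta(T)|\le 1$, together with the parity of $\alpha+\beta$, collapse into a short four-way case check on whether $u$ and $v$ lie in $T$; each case yields $|M'\cap\delta(T)|\le 1$. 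I anticipate this final case analysis is the only step requiring any care, but it is purely mechanical once the parity of $|C\cap\delta(T)|$ and part (i) are in hand.
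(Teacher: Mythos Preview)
Your proof is correct and follows essentially the same approach as the paper. The only minor difference is in the finishing step of part~(ii): rather than a four-way case split on whether $u,v\in T$, the paper observes directly that $\alpha,\beta\le 1$ together with $\alpha+\beta$ even forces $\alpha=\beta$, whence $|M'\cap\delta(T)|=|M_u\cap\delta(T)|-\alpha+\beta=|M_u\cap\delta(T)|\le 1$ with no casework.
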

\begin{proof}
{\em (i)} $M_u$ is a perfect matching of $S\setminus \{u\}$, hence for every
$T\subsetneq S$, 
\[
|M_u\cap \delta(T)|\equiv |T\setminus \{u\}|\pmod{2}.
\]
By definition of $M_u$, $|M_u\cap \delta(T)|\le 1$ for any $T\subsetneq
S$, $T\in \F$, implying the claim.

{\em (ii)} Let $M'=M_u\Delta C$. First observe that $u,v\not\in V(C)$. Hence $M'$ is a
perfect matching on $S\setminus \{u\}$ using only tight edges w.r.t.
$\Pi$. It remains to show that $|M'\cap \delta(T)|\le 1$ for every
$T\in \F$, $T\subsetneq S$. Let $\gamma_u$ and $\gamma_v$ denote the
number of edges in $C\cap \delta(T)$ belonging to $M_u$ and $M_v$,
respectively. Since these are critical matchings, we have $\gamma_u,\gamma_v\le 1$. On
the other hand, since $C$ is a cycle, $|C\cap \delta(T)|$ is even and hence
$\gamma_u+\gamma_v=|C\cap \delta(T)|$ is even. These imply that $\gamma_u=\gamma_v$.
The claim follows since $|M'\cap \delta(T)|=|M_u\cap \delta(T)|-\gamma_u+\gamma_v$.
\end{proof}

The following corollary shows that $(\Pi,\F)$-factor-critical property of a set implies that all sets contained inside it are also $(\Pi,\F)$-factor-critical.
\begin{corollary}\label{cor:factor-criticality-of-inside-sets}
Let $\F$ be a laminar family, $\Pi$ be a feasible solution to $D_{\F}(G,c)$, and $S\in \F$ be a $(\Pi,\F)$-factor-critical set. Then, every set $T\subseteq S,\ T\in \F$ is also $(\Pi,\F)$-factor-critical.
\end{corollary}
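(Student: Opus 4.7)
The plan is to unpack the definition of $(\Pi,\F)$-factor-criticality for $T$ and produce, for each $u \in T$, the required $(T\setminus\{u\},\F)$-perfect-matching of tight edges directly from the critical matching $M_u$ that $S$ already provides. The key observation is that, because $\F$ is laminar and $T \subsetneq S$ with both in $\F$, part (i) of Lemma~\ref{lem:critical-matching} tells us exactly how $M_u$ interacts with the boundary of $T$.

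More concretely, fix $u \in T$. Since $T \subseteq S$, $u$ is also a vertex of $S$, and $(\Pi,\F)$-factor-criticality of $S$ gives us a $\Pi$-critical matching $M_u$ for $u$: a subset of tight edges that is an $(S\setminus\{u\},\F)$-perfect-matching. Apply Lemma~\ref{lem:critical-matching}(i) to $S$, $M_u$, and the inner set $T \in \F$, $T \subsetneq S$: since $u \in T$, we obtain $|M_u \cap \delta(T)| = 0$. Thus no edge of $M_u$ crosses $\delta(T)$, which means the vertices of $T$ are matched by $M_u$ only to other vertices of $T$.

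Now define $M' := M_u \cap E[T]$. I would then verify the three defining properties of a $\Pi$-critical matching for $u$ in $T$: (a) $M'$ uses only tight edges, as it is a subset of $M_u$; (b) $M'$ covers exactly $T \setminus \{u\}$, because $M_u$ covers $S \setminus \{u\} \supseteq T \setminus \{u\}$ and, by the previous paragraph, no matched edge of $M_u$ with an endpoint in $T$ leaves $T$; (c) $M'$ is an $\F$-matching, since for every $R \in \F$ we have $|M' \cap \delta(R)| \le |M_u \cap \delta(R)| \le 1$. Thus $M'$ is an $(T\setminus\{u\},\F)$-perfect-matching in $G_\Pi$, and since $u \in T$ was arbitrary, $T$ is $(\Pi,\F)$-factor-critical.

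There is no real obstacle here: the statement is essentially a direct consequence of Lemma~\ref{lem:critical-matching}(i), together with the trivial fact that restricting an $\F$-matching to a vertex subset preserves the $\F$-matching property. The only point that merits a brief check is the coverage in (b), where laminarity of $\F$ is implicitly used to guarantee that the containment $T \subseteq S$ with $T \in \F$ makes the boundary equation from the lemma applicable.
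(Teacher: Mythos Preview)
Your proof is correct and follows essentially the same route as the paper: both restrict the $\Pi$-critical matching $M_u$ of $S$ to $E[T]$, invoke Lemma~\ref{lem:critical-matching}(i) to get $|M_u\cap\delta(T)|=0$ (so the restriction covers exactly $T\setminus\{u\}$), and then observe that the $\F$-matching property is inherited by any subset of $M_u$.
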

\begin{proof}
By definition, for each vertex $u\in S$, we have a matching $M_u$ supported on the tight edges of $\Pi$ such that (1) $M_u$ is a perfect matching on $S\setminus \{u\}$ and (2) $|M_u\cap\delta(U)|\le 1$ for all sets $U\subseteq S, U\in \F$.

Now, for any vertex $u\in T$, take $N_u=M_u\cap E[T]$. By Lemma \ref{lem:critical-matching}, we have that $|M_u\cap \delta(T)|=0$ and hence $N_u$ is a perfect matching on $T\setminus \{u\}$. Further, for each set $U\subseteq T, U\in \F$, we have that $|N_u\cap \delta(U)|\le |M_u\cap\delta(U)|\le 1$. Thus, $N_u$ is the required $\Pi$-critical-matching.
\end{proof}

The next claim is straightforward, since setting some components of a feasible solution of \ref{prog:D-F} to 0 also gives a feasible solution.
\begin{claim}\label{claim:downwards}
Let $\F$ be a critical family, and $\H\subseteq \F$ a downwards closed subfamily, i.e., if $S,T\in \F$, $S\subseteq T$ and $T\in\H$, then $S\in H$. Then $\H$ is also a critical family.
\end{claim}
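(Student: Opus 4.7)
The plan is to construct an $\fc{\H}$ dual for $\H$ by restricting an $\fc{\F}$ dual $\Pi$ of $\F$ to $\V \cup \H$. Concretely, set $\Pi'(S) = \Pi(S)$ for every $S \in \V \cup \H$. Feasibility of $\Pi'$ for \ref{prog:D-F} (with $\H$ in place of $\F$) is immediate from the hint: for each edge $uv$,
\[
\sum_{S \in \V \cup \H\,:\,uv \in \delta(S)} \Pi'(S)
\ \le\ \sum_{S \in \V \cup \F\,:\,uv \in \delta(S)} \Pi(S)
\ \le\ c(uv),
\]
using $\Pi(S) \ge 0$ for $S \in \F\setminus \H$. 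Non-negativity on $\H$ and laminarity of $\H$ are inherited from $\F$, so it remains to check the two defining properties of a factor-critical dual.

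The key step is to verify that every $S \in \H$ is $(\Pi',\H)$-factor-critical. Pick $u \in S$ and let $M_u$ be the $\Pi$-critical matching for $u$ guaranteed by $(\Pi,\F)$-factor-criticality of $S$, so $M_u$ is a perfect matching on $S\setminus\{u\}$ using $\Pi$-tight edges, and $|M_u \cap \delta(T)| \le 1$ for every $T \in \F$. The main observation is: for any $T \in \F\setminus \H$, downward-closedness of $\H$ forbids $T \subseteq S$ (otherwise $T$ would lie in $\H$), so by laminarity either $T \cap S = \emptyset$ or $S \subsetneq T$. In either case no edge inside $S$ lies in $\delta(T)$, so the $T$-term drops out of the dual constraint for any $uv \in M_u$. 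Hence every edge of $M_u$ that was $\Pi$-tight remains $\Pi'$-tight, and $M_u$, which was an $\F$-matching and hence also an $\H$-matching since $\H \subseteq \F$, serves as a $\Pi'$-critical matching for $u$ inside $\H$.

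Finally, positivity on non-maximal sets of $\H$ is inherited: if $T \in \H$ is non-maximal in $\H$, some $T' \in \H$ strictly contains it, and since $T' \in \F$ as well, $T$ is non-maximal in $\F$, so $\Pi(T) > 0$ and therefore $\Pi'(T) > 0$. This gives all the conditions required for $\Pi'$ to be an $\fc{\H}$ dual, proving that $\H$ is a critical family. The only subtle point is the laminar case analysis showing that dropping the $\F\setminus \H$ coordinates of $\Pi$ preserves tightness of critical-matching edges inside each $S \in \H$; everything else reduces to direct inheritance.
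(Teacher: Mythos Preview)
Your proof is correct and follows exactly the approach the paper has in mind; the paper dismisses the claim as ``straightforward, since setting some components of a feasible solution of \ref{prog:D-F} to 0 also gives a feasible solution,'' and you have filled in the details of that sketch. In particular, your laminarity case analysis showing that no $T\in\F\setminus\H$ can cut an edge inside $S\in\H$ is precisely the observation needed to see that tightness of the critical-matching edges survives the restriction, which the paper leaves implicit.
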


The following {\em uniqueness} property is used to
guarantee the existence of a proper-half-integral solution in each step. We require
that the cost function $c:E\rightarrow \R$ satisfies:
\begin{align}
\mbox{For every critical family $\F$, \ref{prog:P-F} has a unique optimal
solution}.\tag{$\star$}\label{prop:uniqueness}
\end{align}
The next lemma shows that an arbitrary integer cost function can be
perturbed to satisfy this property. The proof of the lemma is presented in Section
\ref{sec:uniqueness}.
\begin{lemma}\label{lem:make-unique}
Let $c:E\rightarrow\Z$ be an integer cost function, and $\tilde c$ be its perturbation.
Then $\tilde c$ satisfies the uniqueness property (\ref{prop:uniqueness}).
\end{lemma}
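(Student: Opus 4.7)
The plan is to proceed by contradiction. Assume $P_{\F}(G,\tilde c)$ has two distinct optimal solutions for some critical family $\F$; then the optimum face contains two distinct vertex optima $x$ and $y$. I would first invoke the structural fact (to be established separately and closely related to Proposition~\ref{prop:proper}) that basic feasible solutions of $P_\F(G,c)$ for a laminar $\F$ are proper-half-integral---this can be obtained by viewing $P_\F$ as the bipartite relaxation on a graph where the maximal sets of $\F$ have been contracted. Consequently $z:=x-y$ is a nonzero half-integer ``circulation'': $z(e)\in\{-1,-\tfrac12,0,\tfrac12,1\}$ for every $e\in E$ and $z(\delta(v))=0$ for every $v\in V$.

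Write $\tilde c=c+\mu$ with $\mu(e_i)=2^{-i}$. The hypothesis $\tilde c\cdot z=0$ reads $c\cdot z=-\mu\cdot z$. Integrality of $c$ and half-integrality of $z$ give $c\cdot z\in \tfrac12\Z$, while $|\mu\cdot z|\le \sum_i 2^{-i}<1$. So $c\cdot z\in\{-\tfrac12,0,\tfrac12\}$ and correspondingly $\mu\cdot z=-c\cdot z\in\{-\tfrac12,0,\tfrac12\}$. The remaining (and main) task is to show that no nonzero half-integer circulation $z$ can satisfy $\mu\cdot z\in\{-\tfrac12,0,\tfrac12\}$.

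For this, I would examine $e_{j^*}$, the smallest-index edge in $\supp(z)$. The leading contribution $2^{-j^*} z(e_{j^*})$ has absolute value either $2^{-j^*-1}$ or $2^{-j^*}$, while the tail $|\sum_{i>j^*} 2^{-i} z(e_i)|$ is at most $2^{-j^*}-2^{-|E|}$. When $|z(e_{j^*})|=1$ the leading term strictly dominates, immediately ruling out $\mu\cdot z=0$ and, except in the borderline case $j^*=1$, also ruling out $\mu\cdot z=\pm\tfrac12$; the $j^*=1$ case is handled by stripping $e_1$ off of $z$ and iterating the same analysis on the restriction to $\{e_2,\ldots,e_{|E|}\}$, using the induced ``deficiency'' at the endpoints of $e_1$ to constrain the residual. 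When $|z(e_{j^*})|=\tfrac12$, the naive tail bound is insufficient, so I use the circulation condition $z(\delta(v))=0$ at each endpoint of $e_{j^*}$: this forces further nonzero edges at indices $>j^*$ whose $z$-values sum to $-\tfrac12$ at each endpoint. I then induct on $|\supp(z)|$ by peeling off an alternating cycle through $e_{j^*}$ chosen from the signed simple-cycle decomposition of the integer circulation $2z$ (whose entries lie in $\{-2,-1,0,1,2\}$), producing a strictly smaller half-integer circulation and applying the inductive hypothesis together with uniqueness of binary expansions to reach a contradiction.

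The main obstacle will be the $|z(e_{j^*})|=\tfrac12$ case: constructing an alternating cycle through $e_{j^*}$ inside $\supp(z)$ whose subtraction keeps the residual values in $\{-1,-\tfrac12,0,\tfrac12,1\}$ requires a careful sign-compatibility argument in the cycle decomposition of $2z$. Once this peeling step is in place, the induction closes by tracking how $\mu\cdot z$ decomposes across the peeled cycles and showing, via the binary-uniqueness observation that alternating even cycles contribute distinct nonzero values to $\mu\cdot z$, that no cancellation can force $\mu\cdot z$ into $\{-\tfrac12,0,\tfrac12\}$.
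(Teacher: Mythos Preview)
Your approach has a genuine gap at the very first step. You assert that basic feasible solutions of $P_\F(G,c)$ for a laminar (or critical) family $\F$ are proper-half-integral, and you justify this by ``viewing $P_\F$ as the bipartite relaxation on a graph where the maximal sets of $\F$ have been contracted.'' This claim is false, and in fact Figure~\ref{fig:five-cycles} of the paper is a direct counterexample: after adding a single blossom inequality (which does form a critical family in that instance, since with unit costs the dual $\Pi\equiv\tfrac12$ makes every edge tight and the $5$-cycle is factor-critical), the LP acquires a basic optimal solution that is \emph{not} half-integral. The contraction you have in mind is not a purely structural operation on the polytope; in the paper it is defined with respect to a specific dual $\Pi$ and relies on $(\Pi,\F)$-factor-criticality to transfer solutions back and forth (Lemma~\ref{lem:contract-factor-critical}). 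There is no polyhedral identification of $P_\F(G,c)$ with a bipartite relaxation on a contracted graph. Moreover, Lemma~\ref{lem:pof-uniqueness-implies-half-integrality}, which is the paper's route to half-integrality, \emph{assumes} uniqueness as a hypothesis---so invoking half-integrality to prove uniqueness would be circular.

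Without half-integrality of $x$ and $y$, the difference $z=x-y$ need not take values in $\{-1,-\tfrac12,0,\tfrac12,1\}$, and your entire binary-expansion analysis of $\mu\cdot z$ collapses. The paper's proof avoids this by never appealing to half-integrality of the two optima: it instead uses the existence of an $\pfc{\F}$ dual $\Pi$ (Corollary~\ref{cor:combinatorial-positively-fitting-dual}, which does not require uniqueness) together with the $\Pi$-critical matchings to replace $x,y$ by modified optima $a,b$ and then extract an even alternating closed walk that crosses each relevant $\delta(S)$ in exactly one even and one odd edge. Alternating along that walk forces the $\tilde c$-cost sums on even and odd edges to agree, while the distinct powers of $2$ in the perturbation forbid this. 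The factor-critical structure is doing essential work here that your contraction shortcut cannot replace.
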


\section{Analysis outline and proof of the main theorem}
The proof of our main theorem is established in two parts. In the first part,
we show that half-integrality of the intermediate primal optimum solutions is
guaranteed by the existence of an $\F$-positively-critical dual optimal solution to
\ref{prog:D-F}.

\begin{lemma}\label{lem:pof-uniqueness-implies-half-integrality}
Let $\F$ be a laminar odd family and assume \ref{prog:P-F} has a
unique optimal solution $x$. If $D_{\F}(G,c)$ has an $\pfc{\F}$ dual optimal solution, then
$x$ is proper-half-integral.
\end{lemma}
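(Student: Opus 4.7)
The plan is to exhibit a proper-half-integral primal-feasible solution $y$ to \ref{prog:P-F} whose cost equals the dual objective $\sum_{S \in \V \cup \F}\Pi(S)$; by LP duality $y$ is then primal optimal, and the uniqueness assumption forces $y=x$, proving that $x$ is proper-half-integral.

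I construct $y$ by contraction and lifting. Let $\H = \{S \in \F : \Pi(S) > 0\}$ (each set of which is $(\Pi,\F)$-factor-critical by hypothesis), let $S_1,\dots,S_k$ be the maximal members of $\H$, and set $V_0 = V \setminus \bigcup_i S_i$. Contract each $S_i$ in the tight-edge subgraph $G_\Pi$ to a single vertex $s_i$, obtaining $G^* = (V^*, E^*)$ with $V^* = V_0 \cup \{s_1,\dots,s_k\}$ and natural reduced cost $c^*$; let $\F^*$ be the images of those $\F$-sets not contained in any $S_i$, each of which is odd by the parity identity $|T^*| \equiv |T| \pmod 2$ (valid because every $|S_i|$ is odd). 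Assuming a proper-half-integral optimum $y^*$ of $P_{\F^*}(G^*, c^*)$ is available, I would lift it back to $G$ as follows. For each $S_i$ incident in $y^*$ to a single weight-$1$ edge $s_iv$ with preimage $uv$ (where $u \in S_i$), include the integral $\Pi$-critical matching $M_u$ on $S_i\setminus u$ at weight $1$. For each $S_i$ incident to two $1/2$-edges (which necessarily lie on one odd $1/2$-cycle of $y^*$ through $s_i$), with preimages $u_1,u_2 \in S_i$, invoke Lemma~\ref{lem:critical-matching}(ii) to reduce $M_{u_1}\Delta M_{u_2}$ to a single $u_1$-$u_2$ path $P$ and set $y = 1$ on $M_{u_1}\cap M_{u_2}$ and $y = 1/2$ on $P$. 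External edges inherit their $y^*$-values.

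Verification of $y$ is mechanical. By construction $y$ is supported on $E_\Pi$, and it is proper-half-integral: the even-length path $P$ inside each half-integrally-covered $S_i$ joins the external odd $1/2$-cycle of $y^*$ through $s_i$ into a single odd $1/2$-cycle in $G$. Vertex-degree constraints hold by the defining properties of critical matchings. For $S \in \F$ with $S \subseteq S_i$, Lemma~\ref{lem:critical-matching}(i) gives $|M_{u_j}\cap\delta(S)| = 1 - [u_j \in S]$; setting $\alpha = |M_{u_1}\cap M_{u_2}\cap \delta(S)|$ and $\beta = |P \cap \delta(S)|$ yields $2\alpha + \beta = 2 - [u_1 \in S] - [u_2 \in S]$, so $y(\delta(S)) = \alpha + \beta/2 + \tfrac12([u_1\in S] + [u_2 \in S]) = 1$. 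For $S \in \F$ not contained in any $S_i$, $y(\delta(S)) = y^*(\delta^*(S^*)) \ge 1$ by feasibility of $y^*$. Finally, $c \cdot y = \sum_{S}\Pi(S) y(\delta(S)) = \sum_{S\in \V \cup \H}\Pi(S) = \sum_{S \in \V \cup \F}\Pi(S)$, matching the dual objective.

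The crux of the argument, and the main obstacle, is establishing the existence of the proper-half-integral optimum $y^*$ on $G^*$. Since every $T^* \in \F^*$ has $\Pi^*(T^*) = 0$, the restriction of $\Pi$ to singletons of $V^*$ is dual-feasible for the bipartite relaxation $P_0(G^*, c^*)$ with the same objective, so by Proposition~\ref{prop:proper} the optimal face of $P_0(G^*, c^*)$ contains proper-half-integral vertices, and the projection of $x$ to $G^*$ already witnesses that this face meets the $\F^*$-feasible region. The delicate step is to select a proper-half-integral vertex in that intersection. When $\H \neq \emptyset$, one has $|V^*| < |V|$ and the lemma itself can be applied inductively to $G^*, \F^*$: the restricted dual is vacuously $\F^*$-positively-critical, and uniqueness of the $P_{\F^*}(G^*, c^*)$-optimum follows from uniqueness of $x$, since two distinct proper-half-integral $\F^*$-feasible optima would lift, by the construction above, to two distinct $P_\F$-optima. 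When $\H = \emptyset$ no contraction helps, and one must show directly that $x$ is a vertex of $P_0(G, c)$; this requires separately exploiting the laminar-blossom structure of $\F$ via a half-integrality argument for the tight-constraint matrix, which I regard as the technical heart of the lemma.
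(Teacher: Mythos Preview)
Your strategy matches the paper's in outline—contract the sets in $\H=\{S\in\F:\Pi(S)>0\}$, argue in the smaller graph, and lift—but your inductive step has a genuine circularity, and you have inverted where the difficulty lies.

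The circularity: to invoke the lemma on $(G^*,c^*,\F^*)$ you must first know that $P_{\F^*}(G^*,c^*)$ has a \emph{unique} optimum. Your justification (``two distinct proper-half-integral $\F^*$-feasible optima would lift to two distinct $P_\F$-optima'') only rules out half-integral competitors, but half-integrality of the contracted optimum is exactly what the induction is supposed to deliver. What is actually needed is a lift that works for an \emph{arbitrary} optimum $y'$ of the contracted LP. The paper's device (Lemma~\ref{lem:contract-factor-critical}(iii)) is: inside each contracted $S_i$, let $\alpha_u$ be the total $y'$-weight on edges at $s_i$ whose preimage enters $S_i$ at $u$, and fill $S_i$ with the convex combination $\sum_{u\in S_i}\alpha_u M_u$ of $\Pi$-critical matchings. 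By Lemma~\ref{lem:critical-matching}(i) this lift is $P_\F$-feasible, and it satisfies complementary slackness with $\Pi$, hence is optimal and equals $x$; so $y'$ is forced to coincide with the image $\hat x$ of $x$. This convex-combination lift—not a half-integrality analysis of a tight-constraint matrix—is the engine of the proof.

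With it in hand, the paper's route is shorter than your induction and does not construct a separate $y^*$ at all. One shows directly that the image $\hat x$ of $x$ is the unique optimum of the bipartite relaxation $P_0(\hat G,\hat c)$ in the contracted graph (Corollary~\ref{cor:contract-bipartite-opt}); uniqueness makes $\hat x$ a vertex of $P_0$, so Proposition~\ref{prop:proper} yields proper-half-integrality of $\hat x$ immediately. The same Lemma~\ref{lem:contract-factor-critical}(iii) then unwinds this to $x$: inside each $S_i$ the solution is either a single critical matching $M_u$ or a half-sum $\tfrac12(M_{u_1}+M_{u_2})$, and Lemma~\ref{lem:critical-matching}(ii) forces the latter to be a matching plus one even path. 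Your case analysis of how many $\tfrac12$-edges hit $s_i$ is thus recovered as a consequence of uniqueness rather than imposed as an assumption on $y^*$, and the $\H=\emptyset$ situation is just the degenerate instance of the same argument rather than a separate ``technical heart.''
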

Lemma \ref{lem:pof-uniqueness-implies-half-integrality} is shown using a basic
contraction operation. Let $\Pi$ be an $\pfc{\F}$ dual optimal
solution for the  laminar odd family $\F$. Then contracting every set $S\in\F$ with
$\Pi(S)>0$  preserves primal and dual optimal solutions for the contracted graph and corresponding primal and dual LPs.
Lemma~\ref{lem:contract-factor-critical}.
Moreover, for a unique primal optimal solution $x$ to $P_{\F}(G,c)$, its image
$x'$ in the contracted graph is the
unique optimal solution; if $x'$ is proper-half-integral, then so is $x$.
Lemma~\ref{lem:pof-uniqueness-implies-half-integrality} then follows: we contract all
maximal sets $S\in\F$ with $\Pi(S)>0$. The image $x'$ of the unique optimal solution
$x$ is the unique optimal solution to the bipartite relaxation in the contracted graph,
and consequently, half-integral.

Such $\pfc{\F}$ dual optimal solutions are hence quite helpful, but their
existence is far from obvious.
We next show that if $\F$ is a critical family, then the extremal dual optimal
solutions found by the
algorithm are in fact $\pfc{\F}$ dual optimal solutions.

\begin{lemma}\label{lem:extremal-fit}
Suppose that in an iteration of Algorithm C-P-Matching, $\F$ is a critical family
with $\Gamma$ being an $\fc{\F}$ feasible solution to \ref{prog:D-F}. Then a $\Gamma$-extremal dual optimal
solution $\Pi$ is an $\pfc{\F}$ optimal solution to \ref{prog:D-F}. Moreover, the next set of
cuts $\H=\H'\cup\H''$ is a critical family with $\Pi$ being an $\fc{\H}$ dual to $D_{\H}(G,c)$.
\end{lemma}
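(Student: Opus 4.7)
The plan is to prove the two assertions in sequence. The first, that $\Pi$ is $\pfc{\F}$, is the harder part and exploits the $\Gamma$-extremality of $\Pi$. The second, that $\H = \H' \cup \H''$ is a critical family with $\Pi$ an $\fc{\H}$ dual, then follows from part (a) together with complementary slackness and the proper-half-integral structure of the primal optimum $x$ (itself guaranteed by Lemma~\ref{lem:pof-uniqueness-implies-half-integrality} once part (a) is known).

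For part (a), I would argue by contradiction: suppose $S \in \F$ has $\Pi(S) > 0$ but is not $(\Pi, \F)$-factor-critical, so for some $u \in S$ no $(S \setminus \{u\}, \F)$-perfect matching uses only tight edges of $\Pi$. Since $\Gamma$ is $\fc{\F}$, such a matching $M = M_u^{\Gamma}$ exists on tight edges of $\Gamma$, and at least one edge of $M$ must be slack with respect to $\Pi$. The aim is to produce a dual optimum $\Pi'$ with $h(\Pi', \Gamma) < h(\Pi, \Gamma)$, contradicting extremality. The direction I would move $\Pi$ in is supported on the dual coordinates $T \in \V \cup \F$ with $|M \cap \delta(T)| = 1$, which by Lemma~\ref{lem:critical-matching}(i) are precisely the singletons $\{v\}$ for $v \in S \setminus \{u\}$ and the sets $T \in \F$ with $T \subsetneq S$ and $u \notin T$; one shifts these coordinates a small step $\varepsilon(\Gamma - \Pi)$ while decreasing $\Pi(S)$ to rebalance the total dual value. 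The key identity
\[
\sum_{T : |M \cap \delta(T)|=1} \Gamma(T) \;=\; c(M) \;\ge\; \sum_{T : |M \cap \delta(T)|=1} \Pi(T),
\]
obtained by summing $\sum_{T: e \in \delta(T)} Y(T)$ over $e \in M$ for $Y = \Gamma$ and $Y = \Pi$, shows that the compensating decrease of $\Pi(S)$ is strictly positive precisely when some edge of $M$ is $\Pi$-slack, making the perturbation nontrivial and strictly $h$-improving. Feasibility is checked edge-by-edge using laminarity of $\F$ inside $S$: on tight edges of $M$ no net change occurs, on slack edges of $M$ the slack absorbs the $\varepsilon$ shift, and on edges outside $M$ the direction must be further refined to respect feasibility.

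For part (b), I would verify three properties of $\H$. Laminarity and oddness follow because distinct cycles in $\C$ are vertex-disjoint; complementary slackness and $x(\delta(S)) = 1$ for $S \in \H'$ combined with half-integrality of $x$ force any intersecting cycle to contribute at least $2 \cdot (1/2) = 1$ to $x(\delta(S))$, so at most one cycle meets any $\H'$-set. Hence the $\hat{C}$ sets are pairwise disjoint, and by maximality of the absorbed $H_i$, each $\H'$-set is either inside some $\hat{C}$ or disjoint from all, giving laminarity. A parity argument using odd $|V(C)|$ and odd $|H_i|$ yields $|\hat{C}|$ odd. Each $S \in \H'$ inherits $(\Pi, \H)$-factor-criticality from part (a), since its $\Pi$-critical matchings stay inside $E[S]$ and hence cross no $\hat{C}$ containing $S$. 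For $\hat{C} \in \H''$, I would contract each absorbed $H_i$ to a single vertex, show $C$ remains an odd cycle in the contracted graph, and for any $u \in \hat{C}$ build the critical matching by alternating along this contracted cycle avoiding the vertex containing $u$, and splicing in the $\Pi$-critical matching of each $H_i$ for its cycle-adjacent boundary vertex; the $\F$-matching property inherited from each spliced piece, combined with disjointness of the $H_i$'s, yields $|M \cap \delta(T)| \le 1$ for every $T \in \H$. Finally, non-maximal sets of $\H$ lie in $\H'$, where $\Pi$ is positive by definition.

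The main obstacle is part (a): constructing a perturbation that is feasible on \emph{every} edge (not just those of $M$), preserves the total dual value, and strictly decreases $h$. The cancellation on $M$-edges follows directly from Lemma~\ref{lem:critical-matching}(i), but controlling the change on edges outside $M$ requires a careful, possibly iterative, choice of the direction using the laminar structure of $\F$ inside $S$. The construction in part (b) for $\hat{C} \in \H''$ is largely mechanical but still requires an honest parity check to confirm that the contracted cycle is odd.
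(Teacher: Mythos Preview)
Your treatment of part (b) is essentially the paper's argument: contract the maximal sets $S_1,\ldots,S_\ell$ of $\H'$ meeting $V(C)$, observe that the image of $C$ is an odd cycle (the paper invokes Lemma~\ref{lem:contract-factor-critical}(iii) for this, which also handles the oddness check you flag), and splice together the alternating matching on the contracted cycle with the $\Pi$-critical matchings inside each $S_j$. So part (b) is fine.

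Part (a), however, has a real gap, and it is precisely the one you flag yourself. Your perturbation moves only the coordinates $T$ with $|M_u^\Gamma \cap \delta(T)| = 1$ toward $\Gamma$ and compensates on $\Pi(S)$. As you note, this controls only the edges of $M_u^\Gamma$; on other edges inside $S$ (in particular on edges of $\supp(x)\cap E[S]$, which are tight for $\Pi$ and must stay tight to preserve optimality) there is no cancellation mechanism, and no refinement is offered. A second issue you do not address: decreasing $\Pi(S)$ can \emph{increase} $|\Pi(S)-\Gamma(S)|$, so the claim that $h$ strictly decreases needs an argument exploiting the $1/|T|$ weights, not just the strict inequality $c(M)>\sum_T \Pi(T)$.

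The paper's route avoids both problems by first establishing a structural \emph{consistency} lemma (Lemma~\ref{lem:consistency-main}): for any $S\in\F$ with $x(\delta(S))=1$, the quantity $\Gamma_S(u)-\Pi_S(u)$ attains its maximum $\Delta=\Delta_{\Gamma,\Pi}(S)\ge 0$ at every $u\in S$ incident to $\supp(x)\cap\delta(S)$, and all edges of $\supp(x)\cap E[S]$ are tight for $\Gamma$ as well as for $\Pi$ (Lemma~\ref{lem:tight-inside}). This allows a \emph{uniform} perturbation: replace $\Pi(T)$ by $(1-\lambda)\Pi(T)+\lambda\Gamma(T)$ for \emph{every} $T\subsetneq S$ and decrease $\Pi(S)$ by $\lambda\Delta$. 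Lemma~\ref{lem:bar-q} shows this preserves not just feasibility but optimality on every edge. The strict $h$-decrease then comes from the $1/|T|$ weighting: the loss on $S$ is at most $\lambda\Delta/|S|$, while the gain on the chain through any $u$ with $\Gamma_S(u)-\Pi_S(u)=\Delta$ is at least $\lambda\Delta/\gamma$ with $\gamma<|S|$ (Lemma~\ref{lem:extreme}). In short, the missing idea in your part (a) is the consistency lemma; without it, a perturbation supported on a single $\Gamma$-critical matching cannot be completed to a feasible, $h$-improving dual optimum.
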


Our goal then is to show that a critical family $\F$ always admits an $\pfc{\F}$ dual optimum; and
that every extremal dual solution satisfies this property.
We need a deeper understanding of the structure of dual optimal
solutions. Section~\ref{sec:dual-structure} is dedicated to this
analysis. Let $\Gamma$ be an $\fc{\F}$ dual solution, and $\Pi$ be an
arbitrary dual optimal solution to $D_{\F}(G,c)$.
Lemma~\ref{lem:consistency-main} shows the following relation between
$\Pi$ and $\Gamma$ inside sets $S\in\F$ that are tight for a
primal optimal solution $x$:
 Let $\Gamma_S(u)$ and $\Pi_S(u)$ denote the sum of the dual
values of sets containing $u$ that are strictly contained inside $S$
in solutions $\Gamma$ and $\Pi$ respectively, and let $\Delta=\max_{u\in
  S}(\Gamma_S(u)-\Pi_S(u))$. Then, every edge in $\supp(x)\cap
\delta(S)$ is incident to some node $u\in S$ such that $\Gamma_S(u)-\Pi_S(u)=\Delta$.
Also, if $S\in\F$ is both $\Gamma$- and
$\Pi$-factor-critical,
then $\Gamma$ and $\Pi$ are identical inside $S$ (Lemma~\ref{lem:factorcrit-ident}).

If $\Pi(S)>0$ but $S$ is not $\Pi$-factor-critical, the
above property (called {\em consistency} later) enables us to modify $\Pi$ by moving
towards $\Gamma$ inside $S$, and decreasing $\Pi(S)$ so that optimality is maintained.
Thus, we either get that $\Pi$ and $\Gamma$ are identical inside $S$ thereby making $S$
to be $\Pi$-factor-critical or $\Pi(S)=0$. A sequence of such operations converts an
arbitrary dual optimal solution to an $\pfc{\F}$ dual optimal one, leading to a
combinatorial procedure to obtain positively-critical dual optimal solutions
(Section~\ref{sec:pos-fact-crit}). Moreover, such operations decrease the secondary
objective value $h(\Pi,\Gamma)$ and thus show that every $\Gamma$-extremal dual optimum
is also an $\pfc{\F}$ dual optimum.

Lemmas~\ref{lem:pof-uniqueness-implies-half-integrality} and \ref{lem:extremal-fit}
together guarantee that the unique primal optimal solutions obtained during the
execution of the algorithm are proper-half-integral. In the second part of the proof of
Theorem~\ref{thm:main}, we show convergence by considering the number of odd cycles,
${\o}(x)$,  in the support of the current primal optimal solution $x$.

\begin{lemma}\label{lem:odd-cycles}
Assume the cost function $c$ satisfies (\ref{prop:uniqueness}). Then $\o(x)$ is
non-increasing during the execution of Algorithm C-P-Matching.
\end{lemma}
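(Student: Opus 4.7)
The plan is to use the combinatorial procedure previewed in the analysis outline, together with the consistency lemmas to be developed in Section~\ref{sec:dual-structure}, to construct a primal-dual feasible pair $(y, \Lambda)$ for $P_{\F_{i+1}}(G,c)$ whose primal support contains no more odd cycles than $\supp(x_i)$. Uniqueness of the LP optimum (property \ref{prop:uniqueness}) then identifies $y$ with the next iterate $x_{i+1}$, yielding $\o(x_{i+1}) \le \o(x_i)$.

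First, I would apply the contraction framework implicit in Lemma \ref{lem:pof-uniqueness-implies-half-integrality} to reduce $x_i$ to a basic feasible solution $\bar{x}_i$ of a bipartite relaxation on the graph $G_i$ obtained from $G$ by contracting the maximal sets $S \in \F_i$ with $\Pi_i(S) > 0$. In $G_i$, $\supp(\bar{x}_i)$ is a disjoint union of edges and odd cycles, and each odd cycle $C$ of $\bar{x}_i$ yields a new cut $\hat{C} \in \H''$ of the algorithm. A quick check shows $x_i(\delta(\hat{C})) = 0$ (all edges of $C$ in $G$ lie inside $\hat{C}$, and no other edges of $\supp(x_i)$ exit $\hat{C}$ because the contracted sets inside $\hat{C}$ have their unit boundary entirely absorbed by $C$), so the new cuts are genuinely violated. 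Any odd cycles hidden entirely inside some contracted $S$ with $\Pi_i(S) > 0$ must be tracked separately, since they contribute to $\o(x_i)$ but are invisible in $\bar{x}_i$.

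Second, I would construct the new primal-dual pair by a sequence of combinatorial dual-update operations, one per newly added cut. For each $\hat{C} \in \H''$, increase $\Lambda(\hat C)$ from $0$ while simultaneously offsetting dual values on vertices and proper subsets of $\hat C$, using the consistency property (Lemma~\ref{lem:consistency-main}) to preserve dual feasibility and to drive a half-integral candidate primal $y$ toward optimality. The update rule, dictated by consistency, is engineered so that $C$ is replaced in $\supp(y)$ by a single matching edge exiting the contraction of $\hat C$; Lemma~\ref{lem:factorcrit-ident} controls behavior inside sets of $\H'$ that remain factor-critical under both $\Pi_i$ and $\Lambda$, keeping the hidden cycle count under control. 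Processing the new cuts in an appropriate order produces a primal-dual feasible pair for $P_{\F_{i+1}}(G,c)$, of matching objective, and hence optimal, with $\o(y) \le \o(x_i)$.

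The main obstacle is the careful book-keeping of odd cycles during this procedure: showing that the iterated dual surgeries never spontaneously create a fresh odd cycle, whether in the visible support or inside a newly-contracted set. This will hinge on the precise structure of extremal and positively-critical duals developed in Sections~\ref{sec:dual-structure} and \ref{sec:pos-fact-crit}, in particular on the fact that an extremal dual agrees with a reference $\fc{\F}$ dual inside every jointly factor-critical set. With that structural analysis in hand, the feasibility and optimality of $(y,\Lambda)$ combined with the uniqueness property \ref{prop:uniqueness} force $y = x_{i+1}$, completing the proof.
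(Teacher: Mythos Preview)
Your high-level strategy---build a combinatorial primal-dual pair for the next LP and invoke uniqueness to identify it with $x_{i+1}$---is exactly the paper's strategy. But the execution you sketch has real gaps that the paper fills in a way you have not anticipated.

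First, the ``dual surgeries'' you describe (raise $\Lambda(\hat C)$, offset inside, replace the cycle $C$ by one outgoing matching edge) are too optimistic. In the paper's Half-integral Matching procedure the primal update is driven by alternating walks from exposed nodes, and crucially a walk may terminate at a \emph{blossom} rather than at another exposed node or half-cycle (Case~I(c)); in that case a \emph{new} odd half-cycle is created. The correct invariant is not ``no new cycles appear'' but rather that $(\text{exposed nodes})+(\text{odd cycles})$ is non-increasing in the contracted graph (Lemma~\ref{lem:edmonds-decrease}). Your plan, which assumes each $\hat C$ is cleanly resolved into a single exiting edge, does not account for this, and as stated the book-keeping you flag as the ``main obstacle'' would fail.

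Second, and more structurally, you aim to build an optimal pair for $P_{\F_{i+1}}(G,c)$ directly. The paper does \emph{not} do this. It first reduces (via a minimal-counterexample contraction argument, Claim~\ref{cl:psi-zero}) to the situation where the next dual $\Psi$ vanishes on all of $\H'$; it then defines $\K=\{S\in\H'':\Psi(S)>0\}$ using the \emph{next} dual, runs the procedure on the configuration $(\H',\K,x,\Pi)$, and finally identifies its output $z$ with $y$ by nesting feasible regions: $y$ is optimal to $P_{\K}(G,c)$ because $\supp(\Psi)\subseteq\K$, feasible to the tighter $P_{\L}^{\K}(G,c)$, hence equal to $z$ by~(\ref{prop:uniqueness}). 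Without knowing $\K$ (which depends on $\Psi$) you cannot set up the right target LP, and without the minimal-counterexample reduction you cannot guarantee $(\H',\K,x,\Pi)$ is a valid configuration.

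A minor point: your worry about ``odd cycles hidden entirely inside some contracted $S$'' is unfounded. Lemma~\ref{lem:contract-factor-critical}(iii) shows that inside every contracted set the support of $x$ is a union of matching edges and one even path, so $\o(x)=\o(\hat x)$ after contraction; there is nothing to track separately.
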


We observe that similar to
Lemma~\ref{lem:pof-uniqueness-implies-half-integrality}, the above
Lemma~\ref{lem:odd-cycles} is also true if we choose an arbitrary
$\pfc{\F}$ dual optimal solution $\Pi$ in each iteration of the
algorithm. To show that the number of cycles 
has to strictly decrease within a polynomial number of iterations, we need the more specific
choice of extremal duals.

\begin{lemma}\label{lem:strong-progress}
Assume the cost function $c$ satisfies (\ref{prop:uniqueness}) and that  ${\o}(x)$ does
not decrease between iterations $i$ and $j$, for some $i < j$. Let $\F_k$ be the set of blossom
inequalities imposed in the $k$'th iteration and $\H''_k=\F_k\setminus \F_{k-1}$ be the
subset of new inequalities in this iteration. Then,
\[
\bigcup_{k=i+1}^{j} \H''_k \subseteq \F_{j+1}.
\]
\end{lemma}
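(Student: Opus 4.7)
The plan is to prove the lemma by induction on $j$ with $i$ fixed, reducing the whole statement to a positivity claim for dual values. The base case $j=i$ is vacuous. For the inductive step, suppose the conclusion holds with $j$ replaced by $j-1$; since $\H''_j\subseteq \F_j$ trivially and the inductive hypothesis gives $\bigcup_{k=i+1}^{j-1}\H''_k\subseteq \F_j$, what remains is to verify that every set $S\in \bigcup_{k=i+1}^{j}\H''_k\subseteq \F_j$ also lies in $\F_{j+1}$. Writing $\F_{j+1}=\H'_j\cup \H''_{j+1}$ with $\H'_j=\{S\in \F_j:\Pi_j(S)>0\}$, the entire inductive step is equivalent to the positivity claim: if $\o(x_{j-1})=\o(x_j)$, then $\Pi_j(S)>0$ for every $S\in \bigcup_{k=i+1}^{j}\H''_k$.

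To establish this, the central structural claim I would prove is that, when $\o$ does not decrease from $x_{j-1}$ to $x_j$, every such set $S$ contains an odd cycle of $\supp(x_j)$ entirely inside $E[S]$. Each such $S$ was introduced at some earlier iteration as $\hat C$ around an odd cycle of a prior half-integral solution, and intuitively that cycle has been continuously inherited inside $S$, because the invariance $\o(x_{j-1})=\o(x_j)$ prevents it from being resolved into an integer matching inside $S$. By half-integrality of $x_j$ and the disjoint-edges-and-odd-cycles structure of $\supp(x_j)$, containment of an odd cycle inside $S$ forces the blossom constraint $x_j(\delta(S))=1$ to be tight. This containment claim would be proved by walking down the laminar family $\F_j$ from outermost sets inward, using the uniqueness property (\ref{prop:uniqueness}) to pin down the support of $x_j$ within each laminar level.

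Granted tightness, the positivity $\Pi_j(S)>0$ in the extremal optimum follows from a perturbation-style argument together with the dual-structure analysis. Note first that $\Gamma_j=\Pi_{j-1}$ is an $\fc{\F_j}$ dual (apply Lemma~\ref{lem:extremal-fit} at iteration $j-1$), and the inductive construction gives $\Gamma_j(S)>0$ on every $S$ in our collection. If we had $\Pi_j(S)=0$, then $S$ would contribute $\Gamma_j(S)/|S|>0$ to $h(\Pi_j,\Gamma_j)$; using the consistency and modification results of Section~\ref{sec:dual-structure} (which describe how to move an arbitrary dual optimum towards an $\fc{}$ dual while preserving optimality), I would construct another dual optimum strictly closer to $\Gamma_j$, contradicting the extremality of $\Pi_j$. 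Hence $\Pi_j(S)>0$, placing $S\in \H'_j\subseteq \F_{j+1}$ and closing the induction.

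The main obstacle is the structural cycle-containment claim, since the interaction between $\supp(x_j)$ and the laminar family $\F_j$ is subtle: odd cycles of $x_j$ may migrate between subregions of a cut $S\in \H''_k$, and several half-edges can cross $\delta(S)$ simultaneously, so one must track carefully that no cycle ``escapes'' the laminar nest it lived in at iteration $j-1$. The cleanest route is via the half-integral combinatorial procedure alluded to in the analysis overview, which provides a canonical correspondence between positive-dual cuts of the extremal dual and the contracted blossoms of a matching-style algorithm; under that correspondence the hypothesis that $\o(x_k)$ is constant translates directly into ``no blossom is de-contracted'', making persistence of the cuts immediate. Executing this correspondence in detail, rather than through the indirect perturbation sketched above, is almost certainly how the authors complete the proof.
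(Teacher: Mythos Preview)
Your inductive reduction to showing $\Pi_j(S)>0$ for every $S\in\bigcup_{k=i+1}^j\H''_k$ is the right shape and matches the paper. But the two arguments you propose for this positivity both fail.

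The central structural claim---that each such $S$ contains an odd cycle of $\supp(x_j)$ entirely inside $E[S]$---is false. Consider the first step, where $S=\hat C$ was just added from an odd cycle $C$ of the previous primal $x$. Running the Half-integral Matching procedure on $(\H',\H'',x,\Pi)$ (which the paper shows outputs $z=y$, the next primal), the contracted node $S$ begins exposed; in the non-decreasing scenario only Cases I(c) and II occur, and Case I(c) places $S$ onto a new odd cycle \emph{of the contracted graph $G^*$}. The corresponding cycle of $\supp(y)$ therefore enters and leaves $S$, and by property (B) (equivalently Lemma~\ref{lem:contract-factor-critical}(iii)) $\supp(y)\cap E[S]$ is an even path plus matching edges---there is no odd cycle inside $S$. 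Your implication is also backwards: an odd cycle entirely inside $S$, together with a matching on the remaining (even-many) vertices of $S$, forces $x_j(\delta(S))$ to be even, not equal to $1$. As for the perturbation argument, the modifications of Section~\ref{sec:dual-structure} (Lemma~\ref{lem:bar-q}) act on a set with $\Psi(S)>0$ by moving $\Psi$ towards the critical dual inside $S$ while \emph{decreasing} $\Psi(S)$; they furnish no way to raise $\Pi_j(S)$ from $0$ while preserving optimality. Note also that $\Gamma_j(S)=\Pi_{j-1}(S)=0$ for the freshly added sets $S\in\H''_j$, so the premise $\Gamma_j(S)>0$ is not available across the whole collection.

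Your final paragraph names the right tool but mis-states the invariant: ``$\o$ constant $\Rightarrow$ no blossom is de-contracted'' is false, since sets in $\L\subseteq\H'$ are freely unshrunk even in non-decreasing phases. What the paper tracks is the \emph{outer} property in $G^*$. It defines $\out(x,\Pi,\H)$ as those $S\in\H$ with $\Pi(S)>0$ lying inside an outer node, observes $\H''\subseteq\out(x,\Pi,\H)$, and proves $\out(x,\Pi,\H)\subseteq\out(y,\Psi,\H_{\mathrm{next}})$ when $\o$ is unchanged (Lemma~\ref{lem:inductive-outer}). The crux is the identity $\Lambda=\Psi$ between the procedure's terminal dual and the extremal dual (Lemma~\ref{lem:dual-identity}); once that holds, Lemma~\ref{lem:inner-outer} gives persistence of outer nodes, and since dual values on outer nodes are non-decreasing during the procedure, $\Psi(S)=\Lambda(S)\ge\Pi(S)>0$ follows directly, with no cycle-containment or perturbation needed.
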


We prove this progress by coupling intermediate primal and
dual solutions with the solutions of a {\em Half-integral Matching} procedure that we design for this purpose. This procedure is 
a variation of Edmonds' primal-dual weighted matching algorithm and reveals the structure of the intermediate LP solutions. An extension of this procedure as described in \cite{our-matching-alg} leads to an algorithm for finding min-cost integral perfect matching. Unlike Edmonds' algorithm,
which maintains an integral matching and extends the matching to cover all vertices, the algorithm described in \cite{our-matching-alg} maintains a proper-half-integral perfect matching.

The main theorem can be proved using the above lemmas.
\begin{proof}[Proof of Theorem~\ref{thm:main}]
We use Algorithm C-P-Matching (Algorithm \ref{alg:main-alg}) for a perturbed cost
function. By Lemma~\ref{lem:make-unique}, this satisfies (\ref{prop:uniqueness}). Let
$i$ denote the index of the iteration. We prove by induction on $i$ that every
intermediate solution $x_{i}$ is proper-half-integral and thus {\em (i)} follows immediately by
the choice of the algorithm. The proper-half-integral property holds for the initial
solution $x_0$ by Proposition~\ref{prop:proper}. The induction step follows by Lemmas
\ref{lem:pof-uniqueness-implies-half-integrality} and \ref{lem:extremal-fit} and the
uniqueness property. Further, by Lemma \ref{lem:odd-cycles}, the number of odd cycles
in the support does not increase.

Assume the number of cycles in the $i$'th phase is $\ell$, and we have the same number
of odd cycles $\ell$ in a later iteration $j$. For $i\le k\le j$, the set
$\H''_k$ always contains $\ell$ cuts, and thus the number of
cuts added is at least $\ell (j-i)$. By
Lemma~\ref{lem:strong-progress}, all cuts in $\bigcup_{k=i+1}^{j} \H''_k $ are imposed
in
the family $\F_{j+1}$. Since $\F_{j+1}$ is a laminar odd family, it can contain
at most $n/2$ subsets, and therefore $j-i\le n/2\ell$. Consequently,
the number of cycles must decrease from $\ell$ to $\ell-1$ within
$n/2\ell$ iterations. Since $\o(x_0)\le n/3$, the number of iterations
is at most $O(n\log n)$.

Finally, we show that optimal solution returned by the algorithm using
$\tilde c$ is also optimal for the original cost function. Let $M$ be
the optimal matching returned by $\tilde c$, and assume for a
contradiction that there exists a different perfect matching $M'$
with $c(M')<c(M)$. Since $c$ is integral, it means $c(M')\le c(M)-1$. In the
perturbation, since $c(e)<\tilde c(e)$ for every $e\in E$, we have $c(M)<\tilde c(M)$,
and since $\sum_{e\in E}(\tilde c(e)-c(e))<1$, we have $\tilde c(M')< c(M')+1$.
This gives $\tilde c(M')<c(M')+1\le c(M)<\tilde c(M)$, a contradiction
to the optimality of $M$ for $\tilde c$.
\end{proof}

\section{Contractions and half-integrality}\label{sec:half-int}

We define an important contraction operation and derive
some fundamental properties. Let $\F$ be a laminar odd family, let $\Pi$ be a feasible
solution to $D_{\F}(G,c)$, and let $S\in \F$ be a $(\Pi,\F)$-factor-critical set. Let
us define
\[
\Pi_S(u):=\sum_{T\in \V\cup \F:T\subsetneq S, u\in T}\Pi(T)
\]
to be the total dual contribution of sets inside $S$ containing $u$.

By contracting $S$ w.r.t. $\Pi$, we mean the following: Let
$G'=(V',E')$ be the contracted graph on node set $V'=(V\setminus
S)\cup\{s\}$, $s$ representing the contraction of $S$. 
For each $u\in V$, we denote $u'$ to be the image of $u$, i.e., if $u\in S$, then $u'=s$, otherwise $u'=u$.
Let $\V'$ denote the set of one-element subsets of $V'$.
For a set $T\subseteq V$, let $T'$ denote its contracted image.
Let  $\F'$ be the set of nonsingular images of the sets of  $\F$,
that is, $T'\in \F'$ if $T\in \F$,
and $T\setminus S\neq\emptyset$. Let $E'$ contain all edges $uv\in
E$ with $u,v\notin S$ and for every edge $uv$ with $u\in S$, $v\in
V-S$ add an edge $u'v$.
(This may create parallel edges.) Let us define the image $\Pi'$ of $\Pi$ to be $\Pi'(T')=\Pi(T)$
for every $T'\in \V'\cup\F'$ and the image $x'$ of $x$ to be $x'(u'v')=x(uv)$ for every two vertices $u,v$ with at most one of them in $S$. Define
the new edge costs
\begin{align}
c'(u'v')=
\begin{cases}
c(uv)&\mbox{ if $uv\in E[V\setminus S]$},\notag\\
c(uv)-\Pi_S(u)&\mbox{ if $u\in S$, $v\in V\setminus S$}.
\end{cases}
\end{align}
We refer the reader to Figure \ref{fig:contraction} for an example of the contraction operation.
\begin{figure}[ht]\label{fig:contraction}
\centering
\includegraphics[scale=0.4]{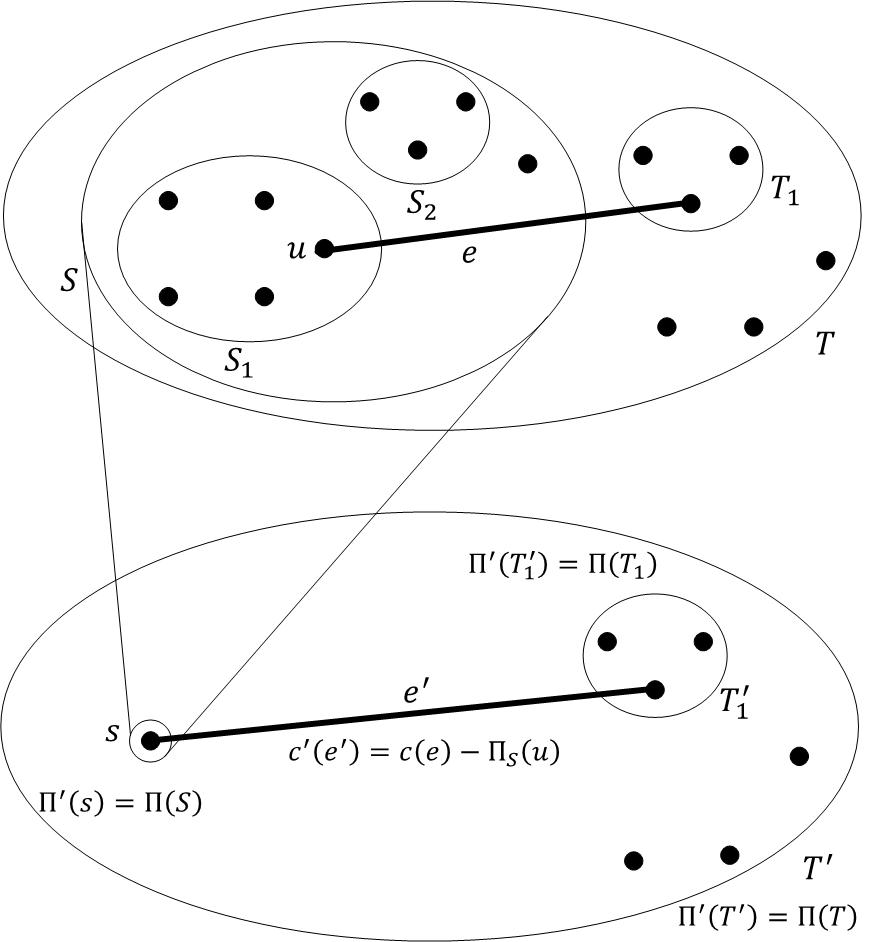}
\caption{Contraction operation: image of dual and new cost function}
\end{figure}

\begin{lemma}\label{lem:contract-factor-critical}
Let $\F$ be a laminar odd family, let $x$ be an optimal solution to
$P_{\F}(G,c)$, and let $\Pi$ be a feasible solution to $D_{\F}(G,c)$. Let $S\in\F$
be a $(\Pi,\F)$-factor-critical set, and let $G',c',\F'$ denote the
graph, costs and laminar family respectively obtained by contracting
$S$ w.r.t. $\Pi$; let $x',\Pi'$ be the images of $x,\Pi$
respectively. Then the following hold.
\begin{enumerate}[(i)]
\item $\Pi'$ is a feasible solution to $D_{\F'}(G',c')$. Furthermore, if a set $T\in
    \F$, $T\setminus S\neq \emptyset$ is $(\Pi,\F)$-factor-critical, then its image
    $T'$ is $(\Pi',\F')$-factor-critical.
\item Suppose $\Pi$ is an optimal solution to \ref{prog:D-F} and $x(\delta(S))=1$. Then
    $x'$ is an optimal solution to  $P_{\F'}(G',c')$ and $\Pi'$ is optimal to
    $D_{\F'}(G',c')$.
\item Suppose $x$ is the unique optimal solution to \ref{prog:P-F}, and $\Pi$ is an
  optimal solution to \ref{prog:D-F}. Then $x'$ is the unique optimal solution to
  $P_{\F'}(G',c')$. Moreover, $x'$ is proper-half-integral if and only if $x$ is
  proper-half-integral. If $x'$ is proper-half-integral, then $\o(x)=\o(x')$ and $\supp(x)\cap E[S]$ consists of a disjoint union of edges and an even path. 
	Further, assume $C'$ is an odd cycle in $\supp(x')$ and let $T$
  be the pre-image of $V(C')$ in $G$. Then, $\supp(x)\cap E[T]$ consists of an odd
  cycle and matching edges. 
  
\end{enumerate}
\end{lemma}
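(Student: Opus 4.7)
The plan is to handle the three parts in order, exploiting the fact that contracting $S$ is essentially the natural operation one would perform in Edmonds' algorithm, so the structural claims should follow from carefully tracking how primal slacks, dual slacks, and supports transform.

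For part (i), I would verify dual feasibility by examining an edge $u'v'\in E'$ coming from $uv\in E$. If neither endpoint lies in $S$, the corresponding constraint is unchanged. If $u\in S$, $v\notin S$, then the new coefficient of $\Pi'$ on sets containing $u'$ only includes sets $T'\in \F'$ with $T\not\subseteq S$, whereas the original inequality for $uv$ included all sets $T\in\F$ with $u\in T$; the missing contribution is precisely $\Pi_S(u)$, which is exactly what we subtracted in the definition of $c'$, so feasibility carries over verbatim. For the factor-criticality claim, I would take $T\in\F$ with $T\setminus S\neq\emptyset$; by laminarity, either $T\supseteq S$ or $T\cap S=\emptyset$. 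Given $u\in T\setminus S$ (or $u=s$), take the critical matching $M_u$ in $G$ and replace its edges inside $S$ via a $\Pi$-critical matching for $S$ rooted at the vertex touched by $M_u\cap \delta(S)$ (which exists by Lemma~\ref{lem:critical-matching}(i) applied inside $S$); its image $M'_{u'}$ is an $(T'\setminus\{u'\},\F')$-perfect-matching of tight edges.

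For part (ii), I would check complementary slackness. Since $S$ is $(\Pi,\F)$-factor-critical, every edge we modified (those in $\delta(S)$) is incident to $S$, and $c'(u'v')-\Pi_S(u)$ preserves tightness: an edge $uv\in E$, $u\in S$, $v\notin S$ is tight for $\Pi$ iff $u'v'$ is tight for $\Pi'$ in the contracted instance with cost $c'$. Combined with $x(\delta(S))=1$, the primal solution $x'$ is feasible in $P_{\F'}(G',c')$ (the degree constraint at $s$ becomes $\sum_{uv\in\delta(S)} x(uv)=x(\delta(S))=1$, and blossom constraints for $T'\in \F'$ translate directly). Since $\Pi$ was optimal, for every $e\in\supp(x)$ the constraint in \ref{prog:D-F} is tight, and $\Pi(T)>0$ implies $x(\delta(T))=1$; these slackness conditions transfer edge-for-edge and set-for-set to $(x',\Pi')$, giving primal-dual optimality.

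For part (iii), uniqueness follows by contradiction: if $y'\neq x'$ were also optimal for $P_{\F'}(G',c')$, lift it to $y$ in $G$ by fixing the half/integer values outside $E[S]$ from $y'$ and then, inside $S$, choosing a convex combination of critical matchings realizing the correct boundary pattern $y(\delta(S)\cap\delta(u))$ for each $u\in S$ (possible because $S$ is $\Pi$-factor-critical and the required boundary values sum to at most $1$); complementary slackness with $\Pi$ shows $y$ is optimal in \ref{prog:P-F}, contradicting uniqueness of $x$. The structural equivalence between $x$ being proper-half-integral and $x'$ being proper-half-integral follows from the fact that the lifting only adds a matching-plus-even-path pattern inside $E[S]$ (this is the main technical point): since $x(\delta(S))\in\{0,1\}$ (half-integrality outside $S$ together with $x'$'s proper-half-integrality force integrality of $x(\delta(S))$) and $x$ restricted to $E[S]$ corresponds to a fractional $S$-near-perfect matching that is a convex combination of $(S\setminus\{u\},\F)$-perfect-matchings, so its support decomposes into matching edges and a single even alternating path with endpoints at the boundary edge. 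Finally, if $C'$ is an odd cycle in $\supp(x')$, its preimage $T$ contains $S$ (if $s\in V(C')$) and the same path/matching decomposition inside $E[S]$ must close up with the cycle through the boundary, producing an odd cycle together with a matching inside $E[T]$, with $\o(x)=\o(x')$.

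The main obstacle will be the lifting argument in part (iii): proving that any optimal $y'$ lifts to an optimal $y$ in $G$ requires showing that the exact boundary flow pattern imposed by $y'$ on $\delta(S)$ can be realized inside $S$ using $\Pi$-critical matchings (as a convex combination), which is where the $(\Pi,\F)$-factor-criticality of $S$ is used in full force through Lemma~\ref{lem:critical-matching}. Once the lifting is controlled, uniqueness, the half-integrality transfer, and the decomposition of $\supp(x)\cap E[T]$ all follow from tracing what the lift looks like combinatorially.
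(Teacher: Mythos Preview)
Your approach is essentially the same as the paper's, and parts (i) and (ii) are handled correctly. In part (i), the step where you ``replace edges of $M_u$ inside $S$ by a $\Pi$-critical matching'' is unnecessary: after contraction all edges in $E[S]$ vanish, so the paper simply takes the image of $M_u$ and checks (via Lemma~\ref{lem:critical-matching}(i)) that it is a $(T'\setminus\{u'\},\F')$-perfect matching.

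In part (iii), your lifting is exactly what the paper does: set $\alpha_u$ to be the total $y'$-value on edges of $\delta(s)$ whose pre-image enters $S$ at $u$, and take $w=\sum_{u\in S}\alpha_u M_u$ inside $S$. Two points need tightening. First, the boundary values satisfy $\sum_u\alpha_u=y'(\delta(s))=1$ exactly (degree constraint), not ``at most $1$'', and likewise $x(\delta(S))=1$ (not merely in $\{0,1\}$). Second, and more importantly, your claim that the support inside $S$ is ``matching edges plus a single even path'' does not follow just from $w$ being a convex combination of critical matchings: with three or more nonzero $\alpha_u$ the structure could be far more complicated. The paper's argument here is that when $x'$ is proper-half-integral, at most two $\alpha_u$ are nonzero (one equal to $1$, or two equal to $\tfrac12$), so $w$ is either $M_u$ or $\tfrac12(M_{u_1}+M_{u_2})$; in the latter case, uniqueness of $x$ combined with Lemma~\ref{lem:critical-matching}(ii) forces $M_{u_1}\triangle M_{u_2}$ to contain no cycle, hence to be a single even path. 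You also omit the converse direction ($x$ proper-half-integral $\Rightarrow$ $x'$ proper-half-integral); the paper notes that an even $\tfrac12$-cycle in $\supp(x')$ would lift to an even cycle in $\supp(x)$.
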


\begin{proof}
{\em (i)}
For feasibility, it is sufficient to verify
\[
\sum_{T'\in \V'\cup \F':u'v'\in \delta(T')}\Pi'(T')\le c'(u'v')\quad \forall u'v'\in
E'.
\]
If $u,v\neq s$, this is immediate from feasibility of $\Pi$ to $D_{\F}(G,c)$.
Consider an edge $sv'\in E(G')$. Let $uv$ be the pre-image of this edge.
\begin{align*}
\sum_{T'\in \V'\cup \F':sv'\in \delta(T')}\Pi'(T')
&=\Pi(S)+\sum_{T\in \F:uv\in \delta(T), T\setminus S\neq\emptyset}\Pi(T)
&\le c(uv)-\Pi_S(u)=c'(sv').
\end{align*}
We also observe that $u'v'$ is tight in $G'$ w.r.t $\Pi'$ if and only if the pre-image
$uv$ is tight in $G$ w.r.t $\Pi$.

Let $T\in \F$ be a $(\Pi,\F)$-factor-critical set with $T\setminus S\neq
\emptyset$. It is sufficient to verify that $T'$ is
$(\Pi',\F')$-factor-critical whenever
 $T$ contains $S$. Let $u'\in T'$ be the image of $u\in V$, and
 consider the image $M'$ of the $\Pi$-critical-matching $M_u$. Every edge in $M'$ is tight with respect to $\Pi'$. 
Further, $M'$ is a matching, since $|M_u\cap \delta(S)|\le 1$ must hold.
Let $Z'\subsetneq
 T'$, $Z'\in \F'$ and let $Z$ be the pre-image of $Z'$. If $u'\neq s$, then $|M'\cap
 \delta(Z')|=|M_u\cap \delta(Z)|\le 1$ and since $|M_u\cap \delta(S)|=1$ by
 Lemma~\ref{lem:critical-matching}, the matching $M'$ is a $(T'\setminus
 \{u'\},\F')$-perfect-matching. If $u'=s$, then $u\in S$. By
 Lemma~\ref{lem:critical-matching}, $M_u\cap \delta(S)=\emptyset$ and hence, $M'$
 misses $s$. Also, $|M_u\cap \delta(Z)|\le 1$ implies $|M'\cap\delta(Z')|\le 1$ and
 hence $M'$ is a $(T'\setminus\{s\},\F')$-perfect-matching.

{\em (ii)} Since $x(\delta(S))=1$, we have $x'(\delta(v))=1$ for every
$v\in V'$. It is straightforward to verify that
$x'(\delta(T'))\ge 1$ for every $T'\in \F'$ with equality if
$x(\delta(T))=1$. Thus, $x'$ is feasible to
$P_{\F'}(G',c')$. Optimality follows as $x'$ and $\Pi'$ satisfy
complementary slackness, using that the image of tight edges is tight,
as shown
by the argument for part {\em (i)}.

{\em(iii)} For uniqueness, consider an arbitrary optimal solution $y'$ to
$P_{\F'}(G',c')$.
We partition the edge set $\delta_{E'}(s)$ into subsets $J(u)$ for $u \in S$ as follows.
If $sv\in \delta_{E'}(s)$ and $u\in S$ is the unique node with $uv\in E$, then $sv\in \delta(u)$.
If there are multiple nodes $u_1,\ldots,u_t\in S$ with $u_iv\in E$, then let $u_i$ be an arbitrary node with an edge to $v$ that  minimizes $c(u_iv)-\Pi_S(u_i)$ and let $sv\in J(u_i)$.
Define $\alpha_u=y'(J(u))$. It is straightforward by the contraction that $\sum_{u\in S} \alpha_u=y'(\delta(s))=1$.
Let $M_u$ be the $\Pi$-critical matching for $u$ in $S$.
Take $w=\sum_{u\in S}\alpha_u M_u$ and
\begin{equation*}
y(uv)=
\begin{cases}
y'(u'v') & \mbox{ if $uv\in E\setminus E[S]$},\\
w(uv) & \mbox{ if $uv\in E[S]$}.
\end{cases}
\end{equation*}
Now it is easy to verify that $y$ is a feasible solution to $P_{\F}(G,c)$. Moreover, $y'$ satisfies complementary
slackness with $\Pi'$, which is a dual optimal solution for $D_{\F}(G',c')$, and therefore $y$ satisfies complementary slackness with $\Pi$. \vnote{Does this need further argument?} \knote{This would blow-up the proof without any interesting insights.} 
Hence, $y$ is an optimal solution to $P_{\F}(G,c)$ and thus by uniqueness,
$y=x$. Consequently, $y'=x'$.

The above argument also shows that $x$ must be identical to $w$ inside $S$.
Suppose $x'$ is proper-half-integral. First, assume $s$ is covered by a matching edge
in $x'$. Then $\alpha_u=1$ for some $u\in S$ and $\alpha_v=0$ for every $v\neq u$.
Consequently, $w=M_u$ is a perfect matching on $S-u$. Next, assume $s$ is incident to
an odd cycle in $x'$. Then $\alpha_{u_1}=\alpha_{u_2}=1/2$ for some nodes $u_1,u_2\in
S$, and $w=\frac12 (M_{u_1}+M_{u_2})$. The uniqueness of $x$ implies the uniqueness of
both $M_{u_1}$ and $M_{u_2}$. Then by Lemma~\ref{lem:critical-matching}(ii), the
symmetric difference of $M_{u_1}$ and $M_{u_2}$ does not contain any even cycles. Hence,
$\supp(w)$ contains an even path between $u_1$ and $u_2$, and some matching edges.
Consequently, $x$ is proper-half-integral.
The above argument immediately shows the following.
\begin{claim}\label{cl:blow-cycle}
Let $C'$ be an odd (even) cycle such that $x'(e)=1/2$ for every $e\in C'$ in
$\supp(x')$ and let $T$ be the pre-image of the set $V(C')$ in $G$. Then, $\supp(x)\cap
E[T]$ consists of an odd (even) cycle $C$ and a (possibly empty) set $M$ of edges such
that $x(e)=1/2\ \forall\ e\in C$ and $x(e)=1\ \forall\ e\in M$.
\proofbox
\end{claim}

Next, we
 prove that if $x$ is proper-half-integral, then so is $x'$. It is clear that $x'$
 being the image of $x$ is half-integral. If $x'$ is not proper-half-integral, then
 $\supp(x')$ contains an even $1/2$-cycle, and thus by Claim~\ref{cl:blow-cycle},
 $\supp(x)$ must also contain an even cycle, contradicting that it was
 proper-half-integral.
 
 The above arguments also show $\supp(x)\cap E[S]$ consists of a disjoint union of edges and an even path, $\o(x)=\o(x')$, and finally, if $C'$ is an odd cycle in $\supp(x')$, then Claim~\ref{cl:blow-cycle} provides the required structure for $x$ inside $T$.
%
\end{proof}

Iteratively applying the lemma from the innermost to the outermost sets in $\F$, we obtain the following corollary.
\begin{corollary}\label{cor:contract-bipartite-opt}
Assume $x$ is the  optimal solution to \ref{prog:P-F} and there
exists an $\pfc{\F}$ dual optimal solution $\Pi$. Let $\hat G,\hat c$ be the graph, and cost
obtained by contracting all maximal sets $S\in\F$ with
$\Pi(S)>0$ w.r.t. $\Pi$, and let $\hat x$ be the image of $x$ in $\hat
G$.
\begin{enumerate}[(i)]
\item
$\hat x$ and $\hat \Pi$ are the optimal solutions to the
bipartite relaxation $P_0(\hat G,\hat c)$ and $D_0(\hat G,\hat c)$
respectively.
\item
If $x$ is the unique optimal solution to $P_{\F}(G,c)$, then $\hat x$ is the
unique optimal solution to $P_{0}(\hat G,\hat c)$. If
  $\hat x$ is proper-half-integral, then $x$ is also proper-half-integral.
\end{enumerate}
\end{corollary}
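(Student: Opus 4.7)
The plan is to derive the corollary by iteratively applying Lemma \ref{lem:contract-factor-critical}, contracting one maximal set of $\F$ with $\Pi(S)>0$ at a time. Enumerate these as $S_1,\ldots,S_k$; by laminarity they are pairwise disjoint. Since $\Pi$ is $\F$-positively-critical, each $S_i$ is $(\Pi,\F)$-factor-critical, and complementary slackness (using $\Pi(S_i)>0$) gives $x(\delta(S_i))=1$. Hence the hypotheses of Lemma \ref{lem:contract-factor-critical}(ii) and (iii) are met. I would contract $S_1$ first and invoke parts (i)--(iii) to obtain image primal/dual optima, to transfer uniqueness to the contracted LP, and to propagate the bidirectional proper-half-integrality correspondence. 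Part (i) of that lemma also ensures the remaining $S_j$ retain $(\Pi',\F')$-factor-criticality, and the tightness $x'(\delta(S_j'))=1$ together with $\Pi'(S_j')=\Pi(S_j)>0$ carries over, so the procedure repeats cleanly for $S_2,\ldots,S_k$ and produces $\hat G,\hat c,\hat x,\hat \Pi$.

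For part (i) of the corollary, every set $T\in\hat\F$ is the image of some $T\in\F$ not contained in any $S_i$; by laminarity and maximality of the $S_i$ among sets with $\Pi>0$, any such $T$ must have $\Pi(T)=0$, whence $\hat\Pi(T)=0$. Thus the dual objective of $\hat\Pi$ in $D_{\hat\F}(\hat G,\hat c)$ coincides with the objective of its singleton restriction in $D_0(\hat G,\hat c)$, which is dual feasible. Primal feasibility of $\hat x$ for $P_0(\hat G,\hat c)$ holds because $P_0$ simply drops constraints relative to $P_{\hat\F}(\hat G,\hat c)$, and matching primal and dual objective values then give LP optimality in the bipartite relaxation.

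For part (ii), iterated application of Lemma \ref{lem:contract-factor-critical}(iii) immediately yields uniqueness of $\hat x$ for $P_{\hat\F}(\hat G,\hat c)$ and that $\hat x$ is proper-half-integral if and only if $x$ is. To strengthen uniqueness to $P_0(\hat G,\hat c)$, I would take any $P_0$-optimum $y$ and reverse the contraction sequence using the lift construction from the proof of Lemma \ref{lem:contract-factor-critical}(iii): at each uncontraction of $S_i$, distribute coefficients $\alpha_u$ derived from $y$ on the edges incident to the contracted vertex along the $\Pi$-critical matchings $M_u$, and set $y^*$ inside $S_i$ as the resulting convex combination. The lifted $y^*\in\R^E$ has $c(y^*)=c(x)$; once feasibility of $y^*$ for $P_\F(G,c)$ is verified, uniqueness of $x$ forces $y^*=x$ and hence $y=\hat x$. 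The main obstacle is precisely this feasibility verification, in particular the blossom inequalities for sets $T\in\F$ that are not entirely contained in a single $S_i$: here one leverages Lemma \ref{lem:critical-matching}(i) together with the $\F$-positively-critical structure of $\Pi$ to account for $y^*(\delta(T)\cap E[S_i])$ and the cross-edges consistently, and to reduce the remaining constraints to the contracted graph where $y$ was already feasible.
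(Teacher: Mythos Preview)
Your overall plan---iterate Lemma~\ref{lem:contract-factor-critical} over the maximal sets $S_1,\ldots,S_k$ with $\Pi(S_i)>0$---matches the paper's one-line justification (``iteratively applying the lemma from the innermost to the outermost sets''), and your arguments for part~(i) and for the $P_{\hat\F}\to P_0$ passage at the level of optimality are fine.

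The gap is in your uniqueness argument for $P_0(\hat G,\hat c)$. You correctly isolate the obstacle: having lifted a $P_0$-optimum $y$ to $y^*\in\R^E$, you must verify the blossom inequalities of $P_\F(G,c)$ for sets $T\in\F$ not contained in any $S_i$. By laminarity such a $T$ is either disjoint from every $S_i$ or contains some of them; in either case $\delta(T)$ misses every $E[S_i]$, so $y^*(\delta(T))=y(\delta(T'))$ with $T'\in\hat\F$. But $y$ is only known to be feasible for $P_0(\hat G,\hat c)$, which imposes \emph{no} constraint on $y(\delta(T'))$ for $|T'|\ge 3$. Your sentence ``reduce the remaining constraints to the contracted graph where $y$ was already feasible'' is therefore false as written: $y$ was feasible for $P_0$, not for $P_{\hat\F}$, and that is exactly the inequality you need. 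Neither Lemma~\ref{lem:critical-matching}(i) nor the positively-critical structure of $\Pi$ helps here, since both speak only about sets inside the factor-critical $S_i$, whereas the problematic $T$ has $\Pi(T)=0$ and lies outside all of them.

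Concretely: take two vertex-disjoint triangles joined by a short path and set $\F$ to be the two triangles with a positively-critical optimal dual that is zero on both; then no contraction occurs, $P_\F$ has a unique feasible (hence optimal) point, yet $P_0$ admits a second optimum supported on half-triangles. So the lift cannot close the gap without further input. One clean fix is to run the iteration on $\F^+:=\{S\in\F:\Pi(S)>0\}$ instead of $\F$; then $\hat\F^+=\emptyset$ and Lemma~\ref{lem:contract-factor-critical}(iii) lands directly in $P_0$. This, however, requires knowing that $x$ is the unique optimum of $P_{\F^+}(G,c)$, which does not follow from uniqueness in $P_\F$ alone but does follow under the perturbation hypothesis~(\ref{prop:uniqueness}) used throughout the algorithm (since $\F^+$ is a critical family).
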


\begin{proof}[Proof of Lemma  \ref{lem:pof-uniqueness-implies-half-integrality}]
Let $\Pi$ be an $\pfc{\F}$ dual optimal solution, and let $x$ be the unique
optimal solution to $P_{\F}(G,c)$. Contract all maximal sets $S\in \F$
with $\Pi(S)>0$, obtaining the graph $\hat G$ and cost $\hat c$. Let $\hat x$ be the
image of $x$ in $\hat G$.
By Corollary~\ref{cor:contract-bipartite-opt}(ii), $\hat x$ is unique
optimal solution to $P_0(\hat G,\hat c)$. By
Proposition~\ref{prop:proper}, $\hat x$ is proper-half-integral and
hence by  Corollary~\ref{cor:contract-bipartite-opt}(ii), $x$ is also
proper-half-integral.
\end{proof}

\section{Structure of dual solutions}\label{sec:dual-structure}
In this section, we derive two properties of positively-critical dual optimal solutions: {\em(1)} an optimal solution $\Psi$ to $D_{\F}(G,c)$ can be transformed into an $\pfc{\F}$ dual
optimal solution if $\F$ is a critical family (Section \ref{sec:pos-fact-crit}) and {\em(2)} a
$\Gamma$-extremal dual optimal solution to $D_{\F}(G,c)$ as obtained in the algorithm
is also an $\pfc{\F}$ dual optimal solution (Section
\ref{sec:extremal-dual-solutions}). In Section \ref{sec:consistency-of-duals}, we first
show some lemmas characterizing arbitrary dual optimal solutions.

\subsection{Consistency of  dual
  solutions}\label{sec:consistency-of-duals}
Assume $\F\subseteq \O$ is a critical family, with $\Pi$ being an
$\fc{\F}$ dual solution, and let $\Psi$ be an arbitrary dual optimal
solution to \ref{prog:D-F}. Note that optimality of $\Pi$ is not assumed.
Let $x$ be an optimal solution to \ref{prog:P-F}; we do not make the
uniqueness assumption (\ref{prop:uniqueness}) in this section.
We shall describe structural properties of $\Psi$ compared to $\Pi$;
in particular, we show
that if we contract a $\Pi$-factor-critical set $S$, the  images of
$x$ and $\Psi$ will be primal and dual optimal solutions in the
contracted graph.

Consider a set $S\in \F$. We say that the dual solutions $\Pi$ and $\Psi$ are {\em
identical} inside $S$, if $\Pi(T)=\Psi(T)$ for every set $T\subsetneq S$, $T\in
\F\cup\V$.
We defined $\Pi_S(u)$ in the previous section; we
also use this notation for $\Psi$, namely, let
$\Psi_S(u):=\sum_{T\in \V\cup \F:T\subsetneq S, u\in T}\Psi(T)$ for
$u\in S$.
Let us now define
\[
\Delta_{\Pi,\Psi}(S):=\max _{u\in S} \left(\Pi_S(u)-\Psi_S(u)\right).
\]
We say that $\Psi$ is {\em consistent} with $\Pi$ inside $S$, if
$\Pi_S(u)-\Psi_S(u)=\Delta_{\Pi,\Psi}(S)$ holds for every $u\in S$ that is incident to
an edge $uv\in \delta(S)\cap \supp(x)$. The main goal of this section is to prove the
following lemma.

\begin{lemma}\label{lem:consistency-main}
Let $\F\subseteq \O$, $\Pi$ be a feasible solution to $D_{\F}(G,c)$, and let $S\in \F$ such that $S$ is $(\Pi,\F)$-factor-critical and $\Pi(T)>0$ for every subset $T\subsetneq S, T\in \F$. 
Let $\Psi$ be an optimal solution to \ref{prog:D-F}, and $x$ be an
optimal solution to \ref{prog:P-F} with $x(\delta(S))=1$. 
Then  $\Psi$ is consistent with $\Pi$ inside $S$.
Further, $\Delta_{\Pi,\Psi}(S)\ge 0$
for all such sets $S$.
\end{lemma}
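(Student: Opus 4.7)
My plan is to prove Lemma~\ref{lem:consistency-main} by comparing the cost of the $\Pi$-critical matchings $M_u$ against two different duals and then upgrading a one-sided weak-duality inequality to an equality through a carefully chosen sub-LP on $E[S]$. Throughout write $A := \sum_{T \in \V \cup \F,\, T \subsetneq S}\Pi(T)$, $A_\Psi := \sum_{T \in \V \cup \F,\, T \subsetneq S}\Psi(T)$, $d(u) := \Pi_S(u) - \Psi_S(u)$, and $\alpha_u := x(\delta(u) \cap \delta(S))$. Note $\sum_u \alpha_u = x(\delta(S)) = 1$ and $\sum_{u \in T}\alpha_u = x(\delta(T) \cap \delta(S))$ for every $T \subseteq S$.

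First I establish a uniform upper bound: for each $u \in S$, the matching $M_u \subseteq E[S]$ is supported on $\Pi$-tight edges, so combining Lemma~\ref{lem:critical-matching}(i) with $\Pi$-tightness gives $c(M_u) = A - \Pi_S(u)$, while $\Psi$-feasibility on the same edges yields $c(M_u) \ge A_\Psi - \Psi_S(u)$; hence $d(u) \le A - A_\Psi$ for every $u \in S$.

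To upgrade this bound to equality for every $u$ incident to $\delta(S) \cap \supp(x)$, define $y := \sum_u \alpha_u \chi(M_u)$ on $E[S]$ and $y := x$ elsewhere, and use the identities above to verify that $y$ is primal feasible for \ref{prog:P-F}. Consider the sub-LP on $E[S]$ with fractional degree targets $1-\alpha_v$ at each $v \in S$ and fractional blossom targets $1 - x(\delta(T) \cap \delta(S))$ at each $T \in \F$ strictly inside $S$. Both $y|_{E[S]}$ and $x|_{E[S]}$ are primal feasible and the restrictions of $\Pi$ and $\Psi$ to sets strictly inside $S$ are feasible duals. The pair $(y|_{E[S]}, \Pi)$ satisfies complementary slackness because $y$ is supported on $\Pi$-tight edges and the hypothesis $\Pi(T) > 0$ for every $T \subsetneq S$ with $T \in \F$ forces the corresponding blossom constraints to be tight at $y|_{E[S]}$; the pair $(x|_{E[S]}, \Psi)$ satisfies complementary slackness from the big-LP complementary slackness between $x$ and $\Psi$ together with $x(\delta(S))=1$. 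Equating the two optimal sub-LP values gives $\sum_u \alpha_u d(u) = A - A_\Psi$; combined with the uniform bound and $\sum_u \alpha_u = 1$, this forces $d(u) = A - A_\Psi$ whenever $\alpha_u > 0$, establishing consistency with $\Delta_{\Pi,\Psi}(S) = A - A_\Psi$.

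For the nonnegativity $\Delta_{\Pi,\Psi}(S) \ge 0$, I would introduce the modified dual $\Psi'$ defined by $\Psi'(T) := \Pi(T)$ for $T \subsetneq S$ in $\V \cup \F$, $\Psi'(S) := \Psi(S) - \Delta_{\Pi,\Psi}(S)$, and $\Psi' := \Psi$ otherwise: feasibility on edges in $E[S]$ is inherited from $\Pi$-feasibility, the change in the dual sum at each edge $uv \in \delta(S)$ with $u \in S$ is exactly $d(u) - \Delta_{\Pi,\Psi}(S) \le 0$, and the objective is preserved. The requirement $\Psi'(S) \ge 0$ bounds $\Delta_{\Pi,\Psi}(S) \le \Psi(S)$, while the nonnegativity itself will follow by applying LP duality to the sub-LP encoding $(S \setminus u,\F)$-perfect matchings on $E[S]$: the hypothesis $\Pi(T) > 0$ strictly inside $S$ makes $\Pi|_S$ an optimal dual to this sub-LP, forcing any feasible $\Psi|_S$ to satisfy $A_\Psi \le A$. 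The subtlest step is identifying the right sub-LP in the consistency argument and verifying both complementary-slackness conditions; the strict positivity $\Pi(T) > 0$ for $T \subsetneq S$ is precisely what turns the weak-duality inequality into an equality and is therefore essential for consistency.
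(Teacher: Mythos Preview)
Your consistency argument is correct and genuinely different from the paper's. The paper proves Lemma~\ref{lem:consistency-main} by induction on $|V|$ and $|S|$: for minimal $S$ it uses the counting claim $|A^-|>|A^+|$ (Claim~\ref{claim:a-minus-greater-than-a-plus}), and for non-minimal $S$ it contracts a maximal $T\subsetneq S$ via Lemma~\ref{lem:contract-consistent}, invoking Lemma~\ref{lem:tight-inside} to get $x(\delta(T))=1$. You instead work directly with the sub-LP $(P_\F[S],D_\F[S])$ of Lemma~\ref{lem:tight-inside}, observe that both $\Pi|_S$ and $\Psi|_S$ are optimal duals (the latter via complementary slackness inherited from the big LP), and read off $\sum_u\alpha_u d(u)=A-A_\Psi$ from the equality of dual objective values. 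Together with the uniform bound $d(u)\le A-A_\Psi$ and $\sum_u\alpha_u=1$, this pins $d(u)=A-A_\Psi=\Delta_{\Pi,\Psi}(S)$ for every $u$ with $\alpha_u>0$. This is clean and avoids induction entirely; it is a nice alternative. (One minor point: the blossom constraints at $y|_{E[S]}$ are tight for \emph{every} $T\subsetneq S$ in $\F$ by Lemma~\ref{lem:critical-matching}(i), so the hypothesis $\Pi(T)>0$ is not what makes complementary slackness work there; but this does not affect correctness.)

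However, your argument for $\Delta_{\Pi,\Psi}(S)\ge 0$ has a gap. You assert that weak duality on ``the sub-LP encoding $(S\setminus u,\F)$-perfect matchings'' forces $A_\Psi\le A$. But the dual objective of that sub-LP at a feasible $\Gamma$ is $\sum_{T\subsetneq S,\,u\notin T}\Gamma(T)=A_\Gamma-\Gamma_S(u)$, so weak duality against the optimal $\Pi|_S$ only yields $A_\Psi-\Psi_S(u)\le A-\Pi_S(u)$, i.e.\ $d(u)\le A-A_\Psi$ --- precisely the uniform bound you already have, not $A_\Psi\le A$. Your $\Psi'$ construction likewise gives only $\Delta\le\Psi(S)$, as you note. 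The paper obtains $\Delta\ge 0$ by an essentially different mechanism: for minimal $S$, Claim~\ref{claim:a-minus-greater-than-a-plus} shows that on each $\Pi$-tight matching edge $uv$ one has $d(u)+d(v)\ge 0$, and then a parity/counting argument on the critical matchings forces $A^-\neq\emptyset$, hence $\Delta=\max|d(u)|\ge 0$; the non-minimal case is handled by contracting and inducting. Your sub-LP framework does not recover this step, so as written the nonnegativity claim is unproved.
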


Consistency is important as it enables us to preserve optimality when contracting a set
$S\in \F$ w.r.t. $\Pi$. Assume $\Psi$ is consistent with $\Pi$ inside $S$, and
$x(\delta(S))=1$. Let us  contract $S$ w.r.t. $\Pi$ to obtain $G'$  and $c'$ as defined
in Section~\ref{sec:half-int}.
Define
\begin{align*}
\Psi'(T')&=
\begin{cases}
\Psi(T) & \mbox{ if } T'\in (\F'\cup\V')\setminus\{s\},\\
\Psi(S)-\Delta_{\Pi,\Psi}(S) & \mbox{ if } T'=\{s\} 
\end{cases}
\end{align*}

\begin{lemma}\label{lem:contract-consistent}
Let $\F\subseteq \O$, $\Pi$ be a feasible solution to $D_{\F}(G,c)$, and let $S\in \F$ such that $S$ is $(\Pi,\F)$-factor-critical and $\Pi(T)>0$ for every subset $T\subsetneq S, T\in \F$. 
Let $\Psi$ be an optimal solution to \ref{prog:D-F}, and $x$ be an
optimal solution to \ref{prog:P-F} with $x(\delta(S))=1$. Suppose that
$\Psi$ is consistent with $\Pi$ inside $S$.
Let $G',c',\F'$ denote the graph, costs and laminar
family obtained by contraction. Then the image $x'$ of $x$ is an optimal solution to
$P_{\F'}(G',c')$, and $\Psi'$ (as defined above) is an optimal solution to $D_{\F'}(G',c')$.
\end{lemma}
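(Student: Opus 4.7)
The plan is to establish optimality of $x'$ for $P_{\F'}(G',c')$ and of $\Psi'$ for $D_{\F'}(G',c')$ by verifying three things: (a) primal feasibility of $x'$, (b) dual feasibility of $\Psi'$, and (c) complementary slackness between $x'$ and $\Psi'$. Together, (a)--(c) yield that both are optimal, avoiding any need to analyze objective values directly. The setup is essentially identical to the proof of Lemma~\ref{lem:contract-factor-critical}(ii), except that we must replace the role of $\Pi$ (assumed optimal there) by $\Psi$, whose optimality we now have only together with the consistency hypothesis; the single number $\Delta_{\Pi,\Psi}(S)$ that is subtracted from $\Psi(S)$ in the definition of $\Psi'(\{s\})$ will absorb all the discrepancy introduced by this substitution.

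Primal feasibility of $x'$ is the easiest step: the degree constraint at $s$ reads $x'(\delta(s)) = x(\delta(S)) = 1$ by hypothesis, other degree constraints translate directly, and each blossom constraint in $\F'$ corresponds either to a disjoint set $T \in \F$ (in which case $x'(\delta(T'))=x(\delta(T))\ge 1$) or to a set $T \supsetneq S$ (whose boundary in $G'$ coincides with its boundary in $G$). For dual feasibility of $\Psi'$, the only nontrivial case is an edge $sv' \in E'$ whose pre-image is $uv$ with $u \in S$, $v \notin S$. Expanding both sides, the constraint at $sv'$ in $G'$ is obtained from the constraint at $uv$ in $G$ by deleting $\Psi_S(u)$ from the left side (since sets strictly inside $S$ no longer separate $s$ from $v'$) and subtracting $\Pi_S(u)$ from $c(uv)$ on the right, plus replacing $\Psi(S)$ by $\Psi(S) - \Delta_{\Pi,\Psi}(S)$. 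After rearrangement, feasibility of $\Psi'$ reduces exactly to $\Delta_{\Pi,\Psi}(S) \ge \Pi_S(u) - \Psi_S(u)$, which is immediate from the definition of $\Delta_{\Pi,\Psi}(S)$ as a maximum over $u \in S$.

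For complementary slackness, I would verify the two slackness conditions separately. First, whenever $\Psi'(T') > 0$, I need $x'(\delta(T')) = 1$; for $T' = \{s\}$ this again uses $x(\delta(S))=1$, and for the other sets it follows from $\Psi$'s slackness with $x$ together with the correspondence $x'(\delta(T')) = x(\delta(T))$. Second, whenever $x'(u'v') > 0$, the edge $u'v'$ must be tight for $\Psi'$. Edges $uv$ with $u,v \notin S$ are handled by the direct correspondence of dual sums. The crucial case is $x(uv) > 0$ with $u \in S$, $v \notin S$, i.e., $uv \in \supp(x) \cap \delta(S)$: the tightness of $sv'$ under $\Psi'$ is obtained from the tightness of $uv$ under $\Psi$ by the same manipulation as above, and turns out to be equivalent to the equality $\Pi_S(u) - \Psi_S(u) = \Delta_{\Pi,\Psi}(S)$. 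This is precisely the consistency hypothesis on vertices $u \in S$ incident to edges of $\supp(x) \cap \delta(S)$.

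The main obstacle is ultimately bookkeeping rather than mathematical depth: I have to carefully enumerate, for each edge of $G'$, which sets $T' \in \V' \cup \F'$ separate its endpoints, and match them bijectively with sets $T \in \V \cup \F$ separating the endpoints of the pre-image edge in $G$, while carefully tracking what happens to $S$ itself. Once this correspondence is in place, everything collapses into a single identity linking $\Pi_S(u)$, $\Psi_S(u)$, and $\Delta_{\Pi,\Psi}(S)$, and the design of $\Psi'(\{s\}) = \Psi(S) - \Delta_{\Pi,\Psi}(S)$ is seen to be exactly the correction needed so that feasibility is preserved (from the $\max$ bound) and slackness is preserved on support edges (from the consistency condition).
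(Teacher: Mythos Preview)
Your proposal is correct and follows essentially the same route as the paper's proof: verify primal feasibility of $x'$ (as in Lemma~\ref{lem:contract-factor-critical}(ii)), verify dual feasibility of $\Psi'$ by reducing the constraint at an edge $sv'$ to the inequality $\Pi_S(u)-\Psi_S(u)\le \Delta_{\Pi,\Psi}(S)$, and then check complementary slackness, where tightness at support edges of $\delta(S)$ reduces to the equality $\Pi_S(u)-\Psi_S(u)=\Delta_{\Pi,\Psi}(S)$ supplied by consistency. One small remark: the slackness check ``$\Psi'(T')>0\Rightarrow x'(\delta(T'))=1$'' is only needed for $T'\in\F'$, since $\{s\}\in\V'$ corresponds to an equality constraint in the primal; your extra check there is harmless but unnecessary.
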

\begin{proof}
Feasibility of $x'$ follows as in the proof of
Lemma~\ref{lem:contract-factor-critical}(ii).
For the feasibility of $\Psi'$, we have to verify  $\sum_{T'\in \V'\cup \F':uv\in
\delta(T')}\Psi'(T')\le c'(uv)$ for every edge $uv\in E(G')$.
This follows immediately for every edge $uv$ such that $u,v\neq s$
since $\Psi$ is a feasible solution for $D_{\F}(G,c)$. Consider an
edge $uv\in E(G)$, $u\in S$. Let $sv\in E(G')$ be the image of $uv$ in
$G'$, and
let $\Delta=\Delta_{\Pi,\Psi}(S)$.
\begin{align*}
c(uv)
&\ge \sum_{T\in \V\cup \F:uv\in \delta(T)}\Psi(T)\\
&=\Psi_S(u)+\Psi(S)+\sum_{T\in \F:uv\in \delta(T), T\setminus S\neq\emptyset}\Psi(T)\\
&=\Psi_S(u)+\Delta+\sum_{T'\in \V'\cup \F':sv\in \delta(T')}\Psi'(T').
\end{align*}
In the last equality, we used the definition $\Psi'(s)=\Psi(S)-\Delta$.
Therefore, using $\Pi_S(u)\le \Psi_S(u)+\Delta$, we obtain
\begin{align}
\sum_{T'\in \V'\cup \F':sv\in \delta(T')}\Psi'(T')&\le c(uv)-\Psi_S(u)-\Delta\le
c(uv)-\Pi_S(u)= c'(uv). \label{tightness}
\end{align}
Thus, $\Psi'$ is a feasible solution to $D_{\F'}(G',c')$.
To show optimality, we verify complementary slackness for $x'$ and $\Psi'$.
If $x'(uv)>0$
for $u,v\neq s$, then $x(uv)>0$. Thus, the tightness of the constraint
for $uv$ w.r.t. $\Psi'$ in $D_{\F'}(G',c')$ follows from the tightness of the
constraint w.r.t. $\Psi$ in $D_{\F}(G,c)$. Suppose $x'(sv)>0$ for an edge $sv\in
E(G')$. Let $uv\in E(G)$ be the pre-image of $sv$ for some $u\in
S$. Then the tightness of the constraint follows since both the
inequalities in (\ref{tightness}) are tight -- the first inequality is
tight since $uv$ is tight w.r.t. $\Psi$, and the second is tight since
$\Pi_S(u)-\Psi_S(u)=\Delta(S)$ by the consistency property. Finally, if $\Psi'(T')>0$
for some $T'\in \F'$, then $\Psi(T)>0$ and hence $x(\delta(T))=1$, implying
$x'(\delta(T'))=1$.
\end{proof}

\begin{lemma}\label{lem:tight-inside}
Let $\F\subseteq \O$, $\Pi$ be a feasible solution to $D_{\F}(G,c)$. Let $S\in \F$ such that $S$ is $(\Pi,\F)$-factor-critical, and $\Pi(T)>0$ for every subset $T\subsetneq S, T\in \F$.
Let $x$ be an  optimal solution to \ref{prog:P-F}. If  $x(\delta(S))=1$, 
then all edges in $\supp(x)\cap E[S]$ are tight w.r.t. $\Pi$ and $x(\delta(T))=1$ for every $T\subsetneq S$, $T\in \F$.
\end{lemma}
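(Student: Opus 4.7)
The plan is to construct an alternative primal feasible solution $\tilde x$ that agrees with $x$ outside $E[S]$ but replaces $x$ inside $E[S]$ by a convex combination of the $\Pi$-critical matchings, and then to push $\tilde x$ against the optimality of $x$ so that both desired conclusions appear as forced equalities in a chain of inequalities.

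Concretely, for each $u \in S$ I would set $y_u := x(\delta(u) \cap \delta(S))$, noting that $\sum_{u \in S} y_u = x(\delta(S)) = 1$, and then define $\tilde x(e) := x(e)$ for $e \notin E[S]$ and $\tilde x(e) := \sum_{u \in S} y_u M_u(e)$ for $e \in E[S]$, where $M_u$ is a $\Pi$-critical matching for $u$. Feasibility of $\tilde x$ for \ref{prog:P-F} then follows from Lemma \ref{lem:critical-matching}(i): the fact $|M_u \cap \delta(v)| = \mathbf{1}[u \ne v]$ settles the degree constraints inside $S$, while $|M_u \cap \delta(T)| = \mathbf{1}[u \notin T]$ combined with the identity $x(\delta(T) \cap \delta(S)) = \sum_{u \in T} y_u$ settles the blossom constraints for every $T \in \F$ with $T \subsetneq S$ (at equality $1$); constraints for sets disjoint from or containing $S$ are inherited unchanged from $x$.

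For the cost comparison I would use the $\Pi$-tightness of the edges of each $M_u$: expanding $c(e)$ as a dual sum for $e \in M_u$ and regrouping by set via Lemma \ref{lem:critical-matching}(i) should give $c(M_u) = \Sigma_S - \Pi_S(u)$, where $\Sigma_S := \sum_{T \in \V \cup \F,\ T \subsetneq S} \Pi(T)$, hence $c \cdot \tilde x|_{E[S]} = \Sigma_S - \sum_{u \in S} y_u \Pi_S(u)$. On the other side, applying $\Pi$-feasibility to each $e \in E[S]$ and using laminarity of $\F$ (which restricts the sets $T$ with $e \in \delta(T)$ contributing to the dual sum to those strictly inside $S$), then reversing the order of summation and invoking $x(\delta(T)) \ge 1$, $\Pi(T) \ge 0$, together with the identity $\sum_{T \in \V \cup \F,\ T \subsetneq S} \Pi(T) \cdot x(\delta(T) \cap \delta(S)) = \sum_{u \in S} y_u \Pi_S(u)$, should yield the matching lower bound $c \cdot x|_{E[S]} \ge \Sigma_S - \sum_{u \in S} y_u \Pi_S(u) = c \cdot \tilde x|_{E[S]}$.

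Optimality of $x$ forces $c \cdot x|_{E[S]} \le c \cdot \tilde x|_{E[S]}$, so both inequalities collapse to equality, and in turn every intermediate inequality must be tight. Tightness of the per-edge bound $c(e) \ge \sum_{T : e \in \delta(T)} \Pi(T)$ for every $e \in \supp(x) \cap E[S]$ is the first conclusion. Tightness of the per-set bound $\Pi(T) \cdot x(\delta(T)) \ge \Pi(T)$ for every $T \in \F$ with $T \subsetneq S$, together with the hypothesis that $\Pi(T) > 0$ for every such $T$, forces $x(\delta(T)) = 1$, which is the second conclusion. I expect the bookkeeping step of verifying the identity $\sum_{T \subsetneq S} \Pi(T) \cdot x(\delta(T) \cap \delta(S)) = \sum_{u \in S} y_u \Pi_S(u)$, the bridge that makes the two cost expressions agree exactly, to be the only part needing care; everything else is a routine primal-dual squeeze.
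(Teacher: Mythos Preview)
Your proposal is correct and is essentially the paper's own argument, just packaged without the auxiliary LP: the paper introduces a restricted primal--dual pair $(P_\F[S],D_\F[S])$ inside $S$, exhibits $z=\sum_u \alpha_u M_u$ (your $\tilde x|_{E[S]}$, with $\alpha_u=y_u$) together with the restriction $\bar\Pi$ of $\Pi$ as a complementary-slackness pair, and then reads off both conclusions from the fact that the restriction of $x$ is also primal optimal and hence complementary to $\bar\Pi$. Your direct cost comparison is the same squeeze unwound by hand, and your ``bridge identity'' $\sum_{T\subsetneq S}\Pi(T)\,x(\delta(T)\cap\delta(S))=\sum_u y_u\,\Pi_S(u)$ is exactly what the paper uses (implicitly) when it checks $z(\delta(T))=1-\alpha(T)$.
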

\begin{proof}
Let $\alpha_u=x(\delta(u,V\setminus S))$ for each $u\in S$, and for each $T\subseteq
S$, $T\in \F$, let $\alpha(T)=\sum_{u\in T}\alpha_u=x(\delta(T,V\setminus S))$. Note
that $\alpha(S)=x(\delta(S))=1$. Let us consider the following pair of linear
programs.

\begin{multicols}{2}
\noindent
\begin{align}
\min& \sum_{uv\in E[S]} c(uv) z(uv) \tag{$P_{\F}[S]$}\label{prog:P-S-F}\\
z(\delta(u))&=1-\alpha_u\quad\forall u\in S\notag\\
z(\delta(T))&\ge 1-\alpha(T)\quad \forall T\subsetneq S, T\in {\cal F}\notag\\
z(uv)&\ge0\ \forall\ uv\in E[S]\notag
\end{align}
\begin{align}
\max \sum_{T\subsetneq S, T\in\V\cup{\cal F}}&(1-\alpha(T))\Gamma(T) \tag{$D_{\cal
F}[S]$}\label{prog:D-S-F}\\\
\sum_{\substack{T\subsetneq S, T\in\V\cup{\cal F}\\uv\in \delta(T)}}\Gamma(T) &\le
c(uv) \quad \forall
uv\in E[S]\notag \\
\Gamma(Z)&\ge0\quad \forall Z\subsetneq T, Z\in F\notag
\end{align}
\end{multicols}

For a feasible solution $z$ to \ref{prog:P-S-F}, let $x^z$ denote the solution obtained
by replacing $x(uv)$ by $z(uv)$ for edges $uv$ inside $S$, that is,
\begin{align*}
{x}^z(e)&=
\begin{cases}
x(e) & \mbox{ if } e\in \delta(S)\cup E[V\setminus S],\\
z(e) & \mbox{ if } e\in E[S].
\end{cases}
\end{align*}

\begin{claim}\label{claim:P-S-F}
The restriction of $x$ inside $S$ is feasible
to \ref{prog:P-S-F}, and for
every feasible solution $z$ to \ref{prog:P-S-F}, $x^z$ is a
feasible solution to \ref{prog:P-F}. Consequently, $z$ is an
optimal solution to \ref{prog:P-S-F} if and only if $x^z$ is an optimal solution to
\ref{prog:P-F}.
\end{claim}
\begin{proof}
The first part is obvious.
For feasibility of $x^z$, if $u\notin S$ then $x^z(u)=x(u)=1$. If
$u\in S$, then $x^z(u)=z(u)+x(\delta(u,V\setminus
S))=1-\alpha_u+\alpha_u=1$. Similarly, if $T\in \F$, $T\setminus
S\neq\emptyset $, then $x^z(\delta(T))=x(T)\ge 1$. If $T\subseteq S$,
then $x^z(\delta(T))=z(\delta(T))+x(\delta(T,V\setminus S))\ge
1-\alpha(T)+\alpha(T)=1$.

Optimality follows since $c^Tx^z=\sum_{uv\in
  E[S]}c(uv)z(uv)+\sum_{uv\in E\setminus E[S]}c(uv)x(uv)$.
\end{proof}

\begin{claim}\label{claim:w-opt}
Let $\bar\Pi$ denote the restriction of $\Pi$ inside $S$, that is,
$\bar\Pi(T)=\Pi(T)$ for every $T\in\V\cup\F$, $T\subsetneq S$.
Then $\bar\Pi$ is an
optimal solution to \ref{prog:D-S-F}. 
\end{claim}
\begin{proof}
Since $S\in \F$ is $(\Pi,\F)$-factor-critical, 
we have a $\Pi$-critical-matching $M_u$
inside $S$ for each $u\in S$. Let $z=\sum_{u\in S}\alpha_uM_u$.
The claim follows by showing that $z$ is  feasible to
\ref{prog:P-S-F} and that $z$ and $\bar\Pi$ satisfy complementary slackness.

The degree constraint $z(\delta(u))=1-\alpha_u$ is straightforward. By
Lemma~\ref{lem:critical-matching}(i), if $T\subsetneq S$, $T\in\F$, then
$z(\delta(T))=\sum_{u\in S\setminus T}\alpha_u=1-\alpha(T)$.
The feasibility of $\Pi$ to \ref{prog:D-F} immediately shows
feasibility of $\bar\Pi$ to \ref{prog:D-S-F}.

Complementary slackness also follows since by definition, all $M_u$'s use only tight
edges w.r.t. $\Pi$ (equivalently, w.r.t. $\bar\Pi$). Also, for every odd set
$T\subsetneq S$, $T\in\F$, we have that $z(\delta(T))=1-\alpha(T)$ as verified above.
Thus, all odd set constraints are tight in the primal.
\end{proof}

By Claim~\ref{claim:P-S-F}, the solution obtained by restricting $x$ to $E[S]$ must be
optimal to
\ref{prog:P-S-F} and thus satisfies complementary slackness with $\bar\Pi$.
Consequently, every edge in $E[S]\cap \supp(x)$ must be
tight w.r.t. $\bar\Pi$, and equivalently, w.r.t. $\Pi$.
By the statement of the Lemma, every set
$T\subsetneq
S$, $T\in \F$ satisfies $\bar \Pi(T)=\Pi(T)>0$. Thus, complementary slackness gives
$x(\delta(T))=1$.

\end{proof}

We need one more claim to prove Lemma~\ref{lem:consistency-main}.
\begin{claim}\label{claim:a-minus-greater-than-a-plus}
Let $S\in\F$ be an inclusionwise minimal set of $\F$.
Let $\Lambda$ and $\Gamma$ be feasible solutions to \ref{prog:D-F}, and suppose $S$ is
$(\Lambda,\F)$-factor-critical. Then, 
\[
\Delta_{\Lambda,\Gamma}(S):=\max_{u \in S} (\Lambda_S(u) - \Gamma_S(u)) = \max _{u\in S}
|\Lambda_S(u)-\Gamma_S(u)|.
\]
Further, if $\Delta_{\Lambda,\Gamma}(S)>0$, define
 \begin{align*}
A^+&:=\{u\in S:\Gamma(u)=\Lambda(u)+\Delta_{\Lambda,\Gamma}(S)\},\\
A^-&:=\{u\in S:\Gamma(u)=\Lambda(u)-\Delta_{\Lambda,\Gamma}(S)\}.
\end{align*}
Then $|A^-|>|A^+|$.
\end{claim}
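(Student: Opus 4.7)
The plan is to exploit the inclusionwise minimality of $S$ in $\F$ to reduce everything to a statement about the single function $f(u) := \Gamma(u) - \Lambda(u)$ on $S$. Since no element of $\F$ lies strictly inside $S$, the sum defining $\Lambda_S(u) = \sum_{T \subsetneq S,\, T \in \V \cup \F,\, u \in T} \Lambda(T)$ collapses to $\Lambda(\{u\})$, and similarly $\Gamma_S(u) = \Gamma(\{u\})$. Under this notation the claim becomes: $\max_u(-f(u))$ equals $\max_u |f(u)|$, and when this common value $\Delta$ is positive, the set $A^- = \{u : f(u) = -\Delta\}$ is strictly larger than $A^+ = \{u : f(u) = +\Delta\}$.

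The key lemma I would prove first is: for every edge $vw$ in any $\Lambda$-critical matching $M_u \subseteq E[S]$, one has $f(v) + f(w) \le 0$. The reason is structural. By laminarity of $\F$ together with minimality of $S$, no $T \in \F$ other than possibly $S$ itself can separate $v$ from $w$: sets strictly inside $S$ do not exist, proper supersets of $S$ contain both endpoints, and sets incomparable to $S$ are disjoint from $S$. Moreover $vw \in E[S]$ means $vw \notin \delta(S)$. Hence the $D_\F(G,c)$ constraint for edge $vw$ reduces to $\Gamma(v) + \Gamma(w) \le c(vw)$ for $\Gamma$, and to the tight equation $\Lambda(v) + \Lambda(w) = c(vw)$ for $\Lambda$ (since $vw \in M_u$ is a tight edge). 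Subtracting gives the desired inequality. Summing over the edges of $M_u$, a perfect matching on $S \setminus \{u\}$, then yields
\[
\sum_{v \in S \setminus \{u\}} f(v) \le 0 \qquad \text{for every } u \in S.
\]

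With this key inequality I would finish both halves by a single ``swap with the matching partner'' argument. For the first statement, fix any $u \in A^-$; then for arbitrary $v \in S \setminus \{u\}$, its $M_u$-partner $w$ satisfies $f(v) \le -f(w) \le \Delta$, since $-f(w) \le \max_x(-f(x)) = \Delta$. So $\max_v f(v) \le \Delta$, giving $\max|f| = \Delta$. The degenerate case $\Delta \le 0$ is handled separately: $f(u) \ge \sum_v f(v)$ for every $u$, summing forces $\sum_v f(v) \le 0$, and combined with $f \ge -\Delta \ge 0$ this gives $f \equiv 0$. For the second statement, with $\Delta > 0$ and $u \in A^-$ fixed, each $v \in A^+$ (necessarily different from $u$) is matched in $M_u$ to some $w$ with $\Delta + f(w) \le 0$, forcing $f(w) = -\Delta$ and thus $w \in A^-$. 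Distinct $v$'s use distinct matching partners, and these partners lie in $A^- \setminus \{u\}$, so $|A^+| \le |A^-| - 1$.

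The only step that requires real care is the reduction in the second paragraph, where one must carefully unpack the $D_\F(G,c)$ constraint and check that laminarity plus minimality of $S$ rule out all other $T \in \F$; everything after that is a clean combinatorial consequence of the tight matching $M_u$ anchored at a minimizer of $f$.
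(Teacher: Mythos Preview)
Your proof is correct and follows essentially the same approach as the paper: both reduce to the singleton duals via minimality of $S$ (and laminarity of $\F$, which you make explicit), both use the key inequality $f(v)+f(w)\le 0$ on $\Lambda$-tight matching edges, and both obtain $|A^-|>|A^+|$ by taking the critical matching $M_u$ with $u\in A^-$ to inject $A^+$ into $A^-\setminus\{u\}$. The only cosmetic difference is that the paper sets $\Delta=\max|\Lambda_S-\Gamma_S|$ and shows it equals $\Delta_{\Lambda,\Gamma}(S)$, whereas you start from $\Delta_{\Lambda,\Gamma}(S)$ and rule out $\Delta<0$ separately; both routes are equally short.
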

\begin{proof}
Let $\Delta=\max _{u\in S} |\Lambda_S(u)-\Gamma_S(u)|$; note that $\Delta\ge \Delta_{\Lambda,\Gamma}(S)$ by definition.
If $\Delta=0$, then $\Delta_{\Lambda,\Gamma}(S)=0$ also follows, and thus the claim holds. In the rest of the proof, we  assume $\Delta>0$.
Let us define the
sets $A^-$ and $A^+$ with $\Delta$ instead of $\Delta_{\Lambda,\Gamma}(S)$.
Since $S$ is $(\Lambda,\F)$-factor-critical, for every $a\in S$, there exists an
$(S\setminus\{a\},\F)$ perfect matching $M_a$ using only tight edges w.r.t. $\Lambda$,
i.e., $M_a\subseteq \{uv:\Lambda(u)+\Lambda(v)=c(uv)\}$ by the minimality of $S$.
Further, by feasibility of $\Gamma$, we have $\Gamma(u)+\Gamma(v)\le c(uv)$ on every
$uv\in M_a$. Thus, if $u\in A^+$, then $v\in A^-$ for every $uv\in M_a$. Since
$\Delta>0$, we have $A^+\cup  A^-\neq \emptyset$ and therefore $A^-\neq \emptyset$, and
consequently, $\Delta=\Delta_{\Lambda,\Gamma}(S)$.
Now pick $a\in A^-$ and consider $M_a$. This perfect matching $M_a$ matches each node
in $A^+$ to a node in $A^-$. Thus, $|A^-|>|A^+|$.
\end{proof}

\begin{proof}[Proof of Lemma~\ref{lem:consistency-main}]
We prove by induction on $|V|$, and subject to that, on
$|S|$. Let us define $\Delta:=\Delta_{\Pi,\Psi}(S)$.
By the statement of the lemma, we have that $S$ is $(\Pi,\F)$-factor-critical. 


First, consider the case when $S$ is an inclusion-wise minimal
set. Then, $\Pi_S(u)=\Pi(u)$, $\Psi_S(u)=\Psi(u)$ for every $u\in
S$. By Claim~\ref{claim:a-minus-greater-than-a-plus}, we have 
$\Delta\ge 0$. We are done if $\Delta=0$.  Otherwise, define the
sets $A^-$ and $A^+$ as in the claim using $\Delta_{\Pi,\Psi}(S)$.

Now consider an edge $uv\in E[S]\cap \supp(x)$. By complementary slackness, we have
$\Psi(u)+\Psi(v)=c(uv)$. By dual feasibility, we have $\Pi(u)+\Pi(v)\le c(uv)$. Hence,
if $u\in A^-$, then $v\in A^+$. Consequently, we have
\begin{align*}
|A^-|&=\sum_{u\in A^-}x(\delta(u))= x(\delta(A^-,V\setminus S))+x(\delta(A^-,A^+))\\
&\le 1+\sum_{u\in A^+}x(\delta(u))=1+|A^+|\le |A^-|.
\end{align*}
Thus, we must have equality throughout, implying
$x(\delta(A^-,V\setminus S))=1$.  This precisely means that $\Psi$ is consistent
with $\Pi$ inside $S$.

Next, let $S$ be a non-minimal set. 
Let $T\in \F$ be a maximal set
strictly contained in $S$. By Corollary \ref{cor:factor-criticality-of-inside-sets}, we know that $T$ is also $(\Pi,\F)$-factor-critical.
By Lemma~\ref{lem:tight-inside}, $x(\delta(T))=1$, therefore the
inductional claim holds for $T$: $\Psi$ is consistent with $\Pi$
inside $T$, and $\Delta(T)=\Delta_{\Pi,\Psi}(T)\ge 0$.

We contract $T$ w.r.t. $\Pi$ and use  Lemma~\ref{lem:contract-consistent}. Let the
image of the solutions $x$, $\Pi$, and $\Psi$ be $x'$,
$\Pi'$ and $\Psi'$ respectively and the resulting graph be $G'$ with
cost function $c'$.
Then $x'$ and $\Psi'$ are  optimal solutions to $P_{\F'}(G',c')$ and
to $D_{\F'}(G',c')$ respectively, and by
Lemma~\ref{lem:contract-factor-critical}(i),
$\Pi'$ is an $\fc{\F'}$ dual. Let $t$ be the image of $T$ by the contraction.
Now, consider the image $S'$ of $S$ in $G'$. Since $G'$ is a smaller graph, it
satisfies the induction hypothesis. Let $\Delta'=\Delta_{\Pi',\Psi'}(S')$ in $G'$. By
induction hypothesis, $\Delta'\ge 0$. The following claim verifies consistency inside
$S$ and thus completes the proof.
\end{proof}

\begin{claim}\label{claim:Delta}
For every $u\in S$,
$\Pi_S(u)-\Psi_S(u)\le\Pi'_{S'}(u')-\Psi'_{S'}(u')$,
 and equality holds if there exists an edge $uv\in \delta(S)\cap \supp(x)$.
Consequently, $\Delta'=\Delta$.
\end{claim}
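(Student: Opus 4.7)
The plan is to split into two cases according to whether $u\in S\setminus T$ or $u\in T$ and compute both sides of the inequality in each case. The key structural fact we will exploit throughout is the maximality of $T$ among sets in $\F$ strictly contained in $S$: together with laminarity, this tightly controls which sets can contribute to $\Pi_S(u)$, $\Psi_S(u)$, $\Pi'_{S'}(u')$, and $\Psi'_{S'}(u')$.

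For $u\in S\setminus T$, its image under the contraction is $u'=u$. By laminarity any $Z\in \F\cup\V$ with $u\in Z\subsetneq S$ is disjoint from $T$ (otherwise laminarity forces $T\subsetneq Z\subsetneq S$, contradicting maximality of $T$, since the alternative $Z\subsetneq T$ contradicts $u\in Z$). Hence the image of $Z$ is not the singleton $\{t\}$, and on all such sets $\Pi',\Psi'$ agree with $\Pi,\Psi$. Thus $\Pi_S(u)=\Pi'_{S'}(u')$ and $\Psi_S(u)=\Psi'_{S'}(u')$, giving equality.

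For $u\in T$, $u'=t$. By laminarity and maximality of $T$, the sets $Z\in \F\cup\V$ with $u\in Z\subsetneq S$ are exactly $T$ together with those $Z\subseteq T$ containing $u$; therefore $\Pi_S(u)=\Pi(T)+\Pi_T(u)$ and $\Psi_S(u)=\Psi(T)+\Psi_T(u)$. On the $G'$ side, $\{t\}\in\V'$, and by maximality no set in $\F'$ strictly between $\{t\}$ and $S'$ contains $t$; hence $\Pi'_{S'}(t)=\Pi'(\{t\})=\Pi(T)$ and $\Psi'_{S'}(t)=\Psi'(\{t\})=\Psi(T)-\Delta(T)$ by the definition of $\Psi'$. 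Subtracting yields
\[
\Pi_S(u)-\Psi_S(u)=\bigl(\Pi'_{S'}(t)-\Psi'_{S'}(t)\bigr)+\bigl(\Pi_T(u)-\Psi_T(u)-\Delta(T)\bigr).
\]
The second parenthesised term is $\le 0$ by the definition of $\Delta(T)=\Delta_{\Pi,\Psi}(T)$, which gives the inequality. If $uv\in\delta(S)\cap\supp(x)$ with $u\in T$, then $v\notin S\supseteq T$, so $uv\in\delta(T)\cap\supp(x)$; the inductive consistency of $\Psi$ with $\Pi$ inside $T$ (available since $T\subsetneq S$ and we are inducting on $|S|$, and $T$ is $(\Pi,\F)$-factor-critical by Corollary~\ref{cor:factor-criticality-of-inside-sets}) forces $\Pi_T(u)-\Psi_T(u)=\Delta(T)$, so equality holds.

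The conclusion $\Delta=\Delta'$ then follows: the pointwise inequality gives $\Delta\le\Delta'$, and for the reverse we pick a maximiser $u^*\in S'$ of $\Pi'_{S'}-\Psi'_{S'}$; if $u^*\ne t$, the equality for $S\setminus T$ gives $\Delta\ge\Delta'$ directly, while if $u^*=t$ we choose $u\in T$ realising $\Delta(T)$ (which exists by definition of $\Delta(T)$) and read off $\Pi_S(u)-\Psi_S(u)=\Pi'_{S'}(t)-\Psi'_{S'}(t)=\Delta'$ from the identity above. The main obstacle is simply the bookkeeping: one must consistently apply laminarity and maximality of $T$ to get the clean decomposition of $\Pi_S,\Psi_S$ in the $u\in T$ case and to confirm no other sets contribute to $\Pi'_{S'}(t),\Psi'_{S'}(t)$.
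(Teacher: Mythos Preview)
Your proof is correct and follows essentially the same approach as the paper: the same case split on $u\in S\setminus T$ versus $u\in T$, the same decomposition $\Pi_S(u)=\Pi(T)+\Pi_T(u)$ via maximality of $T$, and the same use of the inductive consistency inside $T$ for the equality case. You are more explicit than the paper in deriving $\Delta=\Delta'$ (spelling out both inequalities by choosing a maximiser $u^*$ on the $S'$ side and, when $u^*=t$, a preimage $u\in T$ realising $\Delta(T)$), which the paper leaves to the reader.
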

\begin{proof}
Let $u'$ denote the image of $u$. If $u'\neq t$, then  $\Pi'_{S'}(u')=\Pi_S(u),
\Psi'_{S'}(u')=\Psi_S(u)$ and therefore,
$\Pi_{S}(u)-\Psi_{S}(u)=\Pi'_{S'}(u')-\Psi'_{S'}(u')$. Assume $u'=t$, that is, $u\in
T$. Then $\Pi_S(u)=\Pi_T(u)+\Pi(T)$, $\Psi_S(u)=\Psi_T(u)+\Psi(T)$ by the maximal
choice of $T$, and therefore
\begin{align}
\Pi_S(u)-\Psi_S(u)
&=\Pi_T(u)-\Psi_T(u)+\Pi(T)-\Psi(T)\notag\\
&\le\Delta(T)+\Pi(T)-\Psi(T)\notag\\
&=\Pi'(t)-\Psi'(t) \quad \quad \text{(Since $\Pi'(t)=\Pi(T)$,
$\Psi'(t)=\Psi(T)-\Delta(T)$)}\notag\\
&=\Pi'_{S'}(t)-\Psi'_{S'}(t).\label{eq:u-in-T}
\end{align}
Assume now that there exists a $uv\in \delta(S)\cap \supp(x)$.
If $u\in T$, then using the consistency inside $T$, we get
$\Pi_T(u)-\Psi_T(u)=\Delta(T)$, and therefore (\ref{eq:u-in-T}) gives
$\Pi_S(u)-\Psi_S(u)=\Pi'_{S'}(t)-\Psi'_{S'}(t)=\Delta'$.
\end{proof}

Claim~\ref{claim:a-minus-greater-than-a-plus} can also be used to derive the following
important property.
\begin{lemma}\label{lem:factorcrit-ident}
Given a laminar odd family $\F\subset \O$, let $\Lambda$ and $\Gamma$ be two
dual feasible solutions to $D_{\F}(G,c)$. If a subset $S\in \F$ is both
$(\Lambda,\F)$-factor-critical and $(\Gamma,\F)$-factor-critical, then $\Lambda$ and
$\Gamma$ are identical inside $S$.
\end{lemma}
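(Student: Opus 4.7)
The plan is to induct on the number of sets in $\F$ strictly contained in $S$, using Claim~\ref{claim:a-minus-greater-than-a-plus} as the base case and reducing the general case to it via the contraction operation of Section~\ref{sec:half-int}.

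For the base case, suppose $S$ is inclusion-wise minimal in $\F$, so that $\Lambda_S(u)=\Lambda(u)$ and $\Gamma_S(u)=\Gamma(u)$ for every $u\in S$. Apply Claim~\ref{claim:a-minus-greater-than-a-plus} to the pair $(\Lambda,\Gamma)$: either $\Delta:=\Delta_{\Lambda,\Gamma}(S)=0$, and we are done, or $\Delta>0$ and $|A^-|>|A^+|$. Since $S$ is also $(\Gamma,\F)$-factor-critical, the claim is equally applicable with the roles of $\Lambda$ and $\Gamma$ swapped; the first conclusion of the claim gives $\Delta_{\Gamma,\Lambda}(S)=\max_u|\Gamma(u)-\Lambda(u)|=\Delta$, and the swap exchanges the sets $A^+$ and $A^-$, yielding $|A^+|>|A^-|$, a contradiction. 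Hence $\Delta=0$ and $\Lambda\equiv\Gamma$ on $S$.

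For the inductive step, let $T_1,\dots,T_k$ be the maximal proper subsets of $S$ that belong to $\F$. By Corollary~\ref{cor:factor-criticality-of-inside-sets}, each $T_i$ is both $(\Lambda,\F)$- and $(\Gamma,\F)$-factor-critical, so the induction hypothesis gives $\Lambda_{T_i}(v)=\Gamma_{T_i}(v)$ for every $v\in T_i$. Consequently, contracting $T_i$ w.r.t.\ $\Lambda$ and w.r.t.\ $\Gamma$ produces the same graph $G'$, cost function $c'$, and laminar family $\F'$. Performing these contractions iteratively for $T_1,\dots,T_k$ and applying Lemma~\ref{lem:contract-factor-critical}(i), the images $\Lambda'$ and $\Gamma'$ are dual feasible in $D_{\F'}(G',c')$, and the image $S'$ is both $(\Lambda',\F')$- and $(\Gamma',\F')$-factor-critical. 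By laminarity and the maximality of the $T_i$'s, every proper subset of $S$ in $\F$ is contained in some $T_i$, so $S'$ has no proper subset in $\F'$ and is inclusion-wise minimal. The base case applied to $S'$ in $(G',c')$ yields $\Lambda'(w)=\Gamma'(w)$ for every $w\in S'$, which translates back to $\Lambda(T_i)=\Gamma(T_i)$ for each $i$ and $\Lambda(u)=\Gamma(u)$ for every singleton $u\in S\setminus\bigcup_i T_i$. Combined with the induction hypothesis inside each $T_i$, this shows $\Lambda$ and $\Gamma$ are identical inside $S$.

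The main obstacle is ensuring the simultaneous contraction is well-defined: only because the induction hypothesis forces $\Lambda_{T_i}(\cdot)=\Gamma_{T_i}(\cdot)$ on $T_i$ do the two contracted cost functions $c(uv)-\Lambda_{T_i}(u)$ and $c(uv)-\Gamma_{T_i}(u)$ coincide, allowing both duals to be pushed to the same contracted LP and enabling the symmetric application of the base case. Without this, the two contractions would live in different graphs and Claim~\ref{claim:a-minus-greater-than-a-plus} could not be invoked in both directions.
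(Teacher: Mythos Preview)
Your proof is correct and follows essentially the same approach as the paper: the base case is handled identically via Claim~\ref{claim:a-minus-greater-than-a-plus} applied in both directions, and the reduction uses the same contraction operation together with Lemma~\ref{lem:contract-factor-critical}(i), with the crucial observation that the contraction is well-defined for both duals precisely because the induction hypothesis forces $\Lambda_{T_i}\equiv\Gamma_{T_i}$. The only difference is organizational: the paper argues by a minimal counterexample (on $|V|$, then on $|S|$) and contracts a single proper subset $T\subsetneq S$ to reach a smaller instance, whereas you induct on the number of $\F$-subsets of $S$ and contract all maximal $T_i$ simultaneously to land directly in the base case.
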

\begin{proof}
Consider a graph $G=(V,E)$ with $|V|$ minimal, where the claim does not hold for some
set $S$. Also, choose  $S$ to be the smallest counterexample in this graph. First,
assume $S\in \F$ is a minimal set. Then consider
Claim~\ref{claim:a-minus-greater-than-a-plus} for $\Lambda$ and $\Gamma$
and also by changing their roles, for $\Gamma$ and $\Lambda$. If
$\Lambda$ and $\Gamma$ are not identical inside $S$, then
$\Delta=\max_{u\in S}|\Lambda_S(u)-\Gamma_S(u)|>0$. The sets $A^-$
and $A^+$ for $\Lambda$ and $\Gamma$ become $A^+$ and $A^-$ for $\Gamma$ and
$\Lambda$. Then $|A^-|>|A^+|>|A^-|$, a contradiction.

Suppose now $S$ contains $T\in \F$. It is straightforward by definition that $T$ is also
$(\Lambda,\F)$-factor-critical and $(\Gamma,\F)$-factor-critical.  Thus,
by the minimal choice of the counterexample $S$, we have that $\Lambda$ and $\Gamma$
are identical inside $T$. Now, contract the set $T$ w.r.t. $\Lambda$, or equivalently,
w.r.t. $\Gamma$. Let $\Lambda'$, $\Gamma'$ denote the contracted solutions in $G'$, and
let $\F'$ be the contraction of $\F$. Then, by
Lemma~\ref{lem:contract-factor-critical}(i), these two solutions are feasible to
$D_{\F'}(G',c')$, and $S'$ is both $\Lambda'$-factor-critical and
$\Gamma'$-factor-critical. Now, $\Lambda'$ and $\Gamma'$ are not identical inside $S'$,
contradicting the minimal choice of $G$ and $S$.
\end{proof}

\subsection{Finding a positively-critical dual optimal
solution}\label{sec:pos-fact-crit}
Let $\F\subseteq \O$ be a critical family with $\Pi$ being an
$\fc{\F}$ dual. Let $\Psi$ be a dual optimal solution to
$D_{\F}(G,c)$. We present Algorithm \ref{alg:positive-alg} that modifies  $\Psi$ to an $\pfc{\F}$ dual optimal solution.
 The correctness of the algorithm follows by showing that in every iteration, the modified solution $\bar \Psi$ is also dual optimal, and it is ``closer'' to $\Pi$.

\begin{algorithm}
\caption{Algorithm Positively-critical-dual-opt}\label{alg:positive-alg}

\noindent Input: An optimal solution $\Psi$ to $D_{\F}(G,c)$ and a $\fc{\F}$ dual solution $\Pi$ to $D_{\F}(G,c)$\\
\noindent Output: An $\pfc{\F}$ dual optimal solution to $D_{\F}(G,c)$

\begin{enumerate}
\item {\bf Repeat} while $\Psi$ is not $\pfc{\F}$ dual.
\begin{enumerate}
\item Choose a maximal set $S\in\F$ with $\Psi(S)>0$, such that $\Pi$ and $\Psi$ are
    not identical inside $S$.
\item Set $\Delta:=\Delta_{\Pi,\Psi}(S)$.
\item Let  $\lambda:=\min\{1,\Psi(S)/\Delta\}$ if $\Delta>0$ and $\lambda:=1$ if
    $\Delta=0$.
\item Replace $\Psi$ by the following $\bar \Psi$.
\begin{align}
\bar \Psi(T)&:=
\begin{cases}
&(1-\lambda)\Psi(T)+\lambda \Pi(T) \mbox{ if }T\subsetneq S,\\
& \Psi(S)-\Delta\lambda \mbox{ if }T=S,\label{eq:q-mod}\\
& \Psi(T)\mbox{ otherwise }.
\end{cases}
\end{align}
\end{enumerate}
\item {\bf Return} $\Psi$.
\end{enumerate}
\end{algorithm}

\begin{lemma}\label{lem:bar-q}
Let $\F\subseteq \O$ be a critical family with $\Pi$ being an $\fc{\F}$ dual and let
$\Psi$ be a dual optimal solution to $D_{\F}(G,c)$. Suppose we consider a maximal set
$S$ such that $\Pi$ and $\Psi$ are not identical inside $S$, and $\Psi(S)>0$. Define
$\lambda=\min\{1,\Psi(S)/\Delta_{\Pi,\Psi}(S)\}$ if $\Delta_{\Pi,\Psi}(S)>0$ and
$\lambda=1$ if $\Delta_{\Pi,\Psi}(S)=0$ and set $\bar \Psi$ as in (\ref{eq:q-mod}).
Then, $\bar \Psi$ is also a dual optimal solution to \ref{prog:D-F}, and either $\bar
\Psi(S)=0$ or $\Pi$ and $\bar{\Psi}$ are identical inside $S$.
\end{lemma}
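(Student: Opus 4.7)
The plan is to verify two things about $\bar\Psi$: dual feasibility, and complementary slackness with some primal optimal solution $x$ to \ref{prog:P-F} (which exists since $\Psi$ is dual optimal). LP duality will then yield that $\bar\Psi$ is dual optimal. First I would invoke Lemma~\ref{lem:consistency-main} to extract two ingredients. Its hypotheses are in force: $S$ is $(\Pi,\F)$-factor-critical since $\Pi$ is an $\fc{\F}$ dual; $\Pi(T)>0$ for every $T\subsetneq S$, $T\in\F$ since such $T$ are non-maximal in $\F$; and $\Psi(S)>0$ forces $x(\delta(S))=1$ by complementary slackness for $\Psi$. The lemma therefore yields $\Delta:=\Delta_{\Pi,\Psi}(S)\ge 0$ (so $\lambda\in[0,1]$ is well defined and $\bar\Psi(S)\ge 0$), and it yields the consistency property $\Pi_S(u)-\Psi_S(u)=\Delta$ at every $u\in S$ incident to an edge of $\supp(x)\cap\delta(S)$.

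For dual feasibility of $\bar\Psi$, I would split by edge type, observing that $\bar\Psi$ and $\Psi$ differ only on sets $T\subseteq S$. Edges $uv\in E[V\setminus S]$ are untouched. For $uv\in E[S]$, laminarity of $\F$ forces every $T\in\V\cup\F$ with $uv\in\delta(T)$ to satisfy $T\subsetneq S$, so $\sum_{T}\bar\Psi(T) = (1-\lambda)\sum_{T}\Psi(T)+\lambda\sum_{T}\Pi(T)$ is a convex combination of two quantities each $\le c(uv)$. For $uv\in\delta(S)$ with $u\in S$, a bookkeeping computation using $\bar\Psi(S)=\Psi(S)-\Delta\lambda$ and the definition of $\Pi_S,\Psi_S$ gives
\[
\sum_{T}\bar\Psi(T)-\sum_{T}\Psi(T)=\lambda\bigl(\Pi_S(u)-\Psi_S(u)-\Delta\bigr)\le 0
\]
by the defining property of $\Delta$. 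Feasibility is thus preserved.

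For optimality, I would verify complementary slackness of $\bar\Psi$ with the primal optimum $x$. Nonnegativity $\bar\Psi(T)\ge 0$ for $T\in\F$ is immediate: convex combination for $T\subsetneq S$, the inequality $\Delta\lambda\le\Psi(S)$ for $T=S$, and unchanged elsewhere. If $\bar\Psi(T)>0$ with $T\in\F$, then either $T=S$ (and $\Psi(S)>0$ already gave $x(\delta(S))=1$) or $T\subsetneq S$, and Lemma~\ref{lem:tight-inside} forces $x(\delta(T))=1$. For tightness on $\supp(x)$, edges in $E[V\setminus S]$ are inherited from $\Psi$. Edges $uv\in E[S]\cap\supp(x)$ are tight for $\Psi$ by its own slackness and tight for $\Pi$ by Lemma~\ref{lem:tight-inside}, so the convex combination remains tight. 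Edges $uv\in\delta(S)\cap\supp(x)$ are the crux: by consistency we have $\Pi_S(u)-\Psi_S(u)=\Delta$ at such $u$, so the displayed change above vanishes and $\bar\Psi$ stays tight at $uv$.

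Finally, the structural dichotomy is a one-line case split on $\lambda$. If $\Delta=0$, or if $\Delta>0$ and $\Psi(S)/\Delta\ge 1$, then $\lambda=1$ and $\bar\Psi(T)=\Pi(T)$ for every $T\subsetneq S$, so $\Pi$ and $\bar\Psi$ are identical inside $S$; otherwise $\lambda=\Psi(S)/\Delta$ and $\bar\Psi(S)=\Psi(S)-\Delta\lambda=0$. The only real obstacle is the tightness check on the cross-edges $\delta(S)\cap\supp(x)$, and this is precisely where the consistency property from Lemma~\ref{lem:consistency-main} is indispensable; without it, simultaneously shrinking $\Psi(S)$ by $\Delta\lambda$ and shifting the internal duals of $S$ toward $\Pi$ would generically break complementary slackness on those cross-edges.
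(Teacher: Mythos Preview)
Your proposal is correct and follows essentially the same route as the paper: invoke Lemma~\ref{lem:consistency-main} to get $\Delta\ge 0$ and consistency, verify feasibility edge-by-edge (untouched outside $S$, convex combination inside $S$, and the change $\lambda(\Pi_S(u)-\Psi_S(u)-\Delta)\le 0$ across $\delta(S)$), and check complementary slackness using Lemma~\ref{lem:tight-inside} for sets $T\subsetneq S$ and the consistency equality on $\supp(x)\cap\delta(S)$. The only omission is the trivial case $T\in\F$, $T\not\subseteq S$ in the slackness check, where $\bar\Psi(T)=\Psi(T)$ and the condition is inherited directly from $\Psi$.
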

\begin{proof}
Let $x$ be an optimal solution to \ref{prog:P-F}. Since $\Psi(S)>0$,
we have $x(\delta(S))=1$ and by Lemma \ref{lem:consistency-main}, we
have
$\Delta=\Delta_{\Pi,\Psi}(S)\ge 0$. Now, the second conclusion is
immediate from definition: if $\lambda=1$, then we have that $\Pi$ and
$\bar \Psi$ are identical inside $S$; if $\lambda<1$, then we have
$\bar \Psi(S)=0$. For optimality, we show feasibility and verify the
primal-dual slackness conditions.

The solution $\bar \Psi$ might have positive components on some sets
$T\subsetneq S, T\in \F$ where $\Psi(T)=0$  (but $\Pi(T)>0$). However, $x(\delta(T))=1$
for all sets $T\subsetneq S, T\in \F$ by Lemma~\ref{lem:tight-inside}, since $x(\delta(S))=1$ by complementary slackness between $x$ and $\Psi$. The choice of $\lambda$ also
guarantees $\bar
\Psi(S)\ge 0$. We need to verify that all inequalities in
$D_{\F}(G,c)$ are maintained and that all tight constraints in
$D_{\F}(G,c)$ w.r.t. $\Psi$ are maintained. This trivially holds if
$uv\in E[V\setminus S]$. If $uv\in E[S]\setminus \supp(x)$, the
corresponding inequality is satisfied by both $\Pi$ and $\Psi$ and
hence also by their linear combinations.
If $uv\in E[S]\cap \supp(x)$, then $uv$ is tight for $\Psi$ by the optimality of
$\Psi$, and also for $\Pi$ by Lemma~\ref{lem:tight-inside}.

It remains to verify the constraint corresponding to edges $uv$ with $u\in S$, $v\in
V\setminus S$. The contribution of $\sum_{T\in \F:uv\in \delta(T), T\setminus S\neq
\emptyset}\Psi(T)$ is unchanged. The following claim completes the proof of
optimality.
\end{proof}

\begin{claim}\label{claim:bar-q-tightness}
$\bar \Psi_S(u)+\bar \Psi(S)\le \Psi_S(u)+\Psi(S)$ with equality whenever $uv\in
\supp(x)$.
\end{claim}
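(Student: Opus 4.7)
The plan is a direct computation from the definition of $\bar\Psi$, followed by invoking Lemma~\ref{lem:consistency-main} to establish the equality case. First I would expand $\bar\Psi_S(u)$ by linearity. Since $\bar\Psi(T) = (1-\lambda)\Psi(T) + \lambda\Pi(T)$ for every $T \in \V \cup \F$ with $T \subsetneq S$, summing over such $T$ that contain $u$ gives
\begin{equation*}
\bar\Psi_S(u) = (1-\lambda)\Psi_S(u) + \lambda\Pi_S(u).
\end{equation*}
Combined with $\bar\Psi(S) = \Psi(S) - \lambda\Delta$, this yields
\begin{equation*}
\bar\Psi_S(u) + \bar\Psi(S) - \Psi_S(u) - \Psi(S) = \lambda\bigl((\Pi_S(u) - \Psi_S(u)) - \Delta\bigr).
\end{equation*}

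By definition of $\Delta = \Delta_{\Pi,\Psi}(S) = \max_{u \in S}(\Pi_S(u) - \Psi_S(u))$, the quantity in parentheses is non-positive for every $u \in S$. Combined with $\lambda \ge 0$, this proves the inequality $\bar\Psi_S(u) + \bar\Psi(S) \le \Psi_S(u) + \Psi(S)$. Moreover, when $\lambda > 0$ (the non-trivial case, where otherwise $\bar\Psi = \Psi$ anyway), equality holds if and only if $\Pi_S(u) - \Psi_S(u) = \Delta$.

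For the equality assertion when $uv \in \supp(x)$ with $u \in S$, $v \in V \setminus S$, I invoke the consistency result. The hypothesis of Lemma~\ref{lem:bar-q} guarantees $\Psi(S) > 0$, so by complementary slackness $x(\delta(S)) = 1$, and $S$ is $(\Pi,\F)$-factor-critical with $\Pi(T) > 0$ on every $T \subsetneq S$, $T \in \F$ (since $\Pi$ is an $\F$-critical dual). Hence Lemma~\ref{lem:consistency-main} applies and gives that $\Psi$ is consistent with $\Pi$ inside $S$. By the definition of consistency, every $u \in S$ incident to an edge in $\delta(S) \cap \supp(x)$ satisfies $\Pi_S(u) - \Psi_S(u) = \Delta$, yielding equality in the displayed identity.

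The only subtlety — and the sole place where the work of Section~\ref{sec:consistency-of-duals} actually enters — is verifying that the hypotheses of Lemma~\ref{lem:consistency-main} are in force; once consistency is available, the rest is bookkeeping on the convex combination. No separate argument is needed for edges $uv \in E[S]$, because those are handled in the proof of Lemma~\ref{lem:bar-q} itself (they are tight for both $\Pi$ and $\Psi$, hence for any convex combination).
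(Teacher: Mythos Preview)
Your proof is correct and follows essentially the same approach as the paper: both compute $\bar\Psi_S(u)+\bar\Psi(S)=\Psi_S(u)+\Psi(S)+\lambda(\Pi_S(u)-\Psi_S(u)-\Delta)$ directly from the definition of $\bar\Psi$, then invoke the definition of $\Delta$ for the inequality and consistency (Lemma~\ref{lem:consistency-main}) for the equality on boundary support edges. Your explicit verification of the hypotheses of Lemma~\ref{lem:consistency-main} is a nice touch; in the paper this is already established in the surrounding proof of Lemma~\ref{lem:bar-q}.
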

\begin{proof}
\begin{align*}
\bar \Psi(T)-\Psi(T)&=
\begin{cases}
& \lambda (\Pi(T)-\Psi(T))\mbox{ if $T\subsetneq S$},\\
& -\Delta\lambda\mbox{ if $T=S$}.
\end{cases}
\end{align*}
Thus,
\begin{align*}
\bar \Psi_S(u)+\bar \Psi(S)&=\lambda (\Pi_S(u)-\Psi_S(u))+\bar
\Psi(S)-\Psi(S)+\Psi_S(u)+\Psi(S)\\
&=\lambda(\Pi_S(u)-\Psi_S(u)-\Delta)+\Psi_S(u)+\Psi(S).
\end{align*}
Now, $\Pi_S(u)-\Psi_S(u)\le\Delta$, and   equality holds whenever
$uv\in\supp(x)\cap \delta(S)$ by the consistency of $\Psi$ and $\Pi$ inside $S$
(Lemma~\ref{lem:consistency-main}).
\end{proof}

\begin{corollary}\label{cor:combinatorial-positively-fitting-dual}
Let $\F$ be a critical family with $\Pi$ being an $\fc{\F}$ dual feasible solution.
Algorithm Positively-critical-dual-opt in Algorithm \ref{alg:positive-alg} transforms an
arbitrary dual optimal solution $\Psi$ to an $\pfc{\F}$ dual optimal solution in at
most $|\F|$ iterations.
\end{corollary}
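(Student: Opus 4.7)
The plan is to iterate Lemma~\ref{lem:bar-q} and argue that each set $S \in \F$ can be selected in step~(a) at most once, which immediately gives the bound of $|\F|$ iterations. Define the candidate pool $\F'(\Psi) := \{R \in \F : \Psi(R) > 0 \text{ and } \Psi \text{ is not identical to } \Pi \text{ inside } R\}$; the loop picks a maximal element of $\F'(\Psi)$, and Lemma~\ref{lem:bar-q} guarantees both that dual optimality is preserved and that $S \notin \F'(\bar\Psi)$ immediately after processing.

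The crux is to show $S$ stays out of $\F'$ for all subsequent iterations. I first claim that no strict superset $R \supsetneq S$ in $\F$ is ever processed after $S$. Maximality of $S$ at the time of its selection forces $R \notin \F'(\Psi)$, so either $\Psi(R) = 0$ (and the update leaves $\bar\Psi(R)$ unchanged) or $\Psi \equiv \Pi$ inside $R$ (so $\Psi(T) = \Pi(T)$ for every $T \subsetneq R$, including all $T \subseteq S$; the convex-combination update yields $\bar\Psi(T) = (1-\lambda)\Psi(T) + \lambda\Pi(T) = \Pi(T)$, preserving agreement inside $R$). In either case $R \notin \F'(\bar\Psi)$, and an induction on subsequent iterations gives the claim.

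Because no set containing $S$ is ever processed again, $\bar\Psi(S)$ remains frozen at the value assigned by Lemma~\ref{lem:bar-q}. In case~(a) this value is $0$, and $S \notin \F'$ via the first disjunct forever. In case~(b), $\bar\Psi \equiv \Pi$ inside $S$ holds right after processing; then for each $T \subsetneq S$ one has $\bar\Psi(T) = \Pi(T)$, so $\bar\Psi \equiv \Pi$ inside $T$ as well, which keeps $T \notin \F'$. Hence no strict subset of $S$ is ever processed either, and the identity inside $S$ is preserved. Thus $S \notin \F'$ permanently, and no set of $\F$ is chosen twice, yielding at most $|\F|$ iterations.

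For correctness at termination, $\F'(\Psi) = \emptyset$ means every $S \in \F$ satisfies $\Psi(S) = 0$ or $\Psi \equiv \Pi$ inside $S$. For any $S$ with $\Psi(S) > 0$, the edges tight w.r.t. $\Psi$ inside $E[S]$ coincide with those tight w.r.t. $\Pi$, because by laminarity the dual constraint of any $uv \in E[S]$ involves only sets $T \in \V \cup \F$ with $T \subsetneq S$. The $\F$-matchings witnessing $(\Pi,\F)$-factor-criticality of $S$ therefore also witness $(\Psi,\F)$-factor-criticality, so $\Psi$ is an $\pfc{\F}$ dual optimum. The main delicacy in the whole argument is the maximality step used to block any superset of $S$ from being reprocessed; the rest is straightforward iteration of Lemma~\ref{lem:bar-q}.
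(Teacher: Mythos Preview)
Your proof is correct and follows essentially the same approach as the paper: bound the iterations by showing each $S \in \F$ is processed at most once, using the maximality of $S$ at selection time to block any strict superset from later entering the candidate pool, and then observing that $\Psi(S)$ (resp.\ the identity inside $S$) is frozen thereafter. Your write-up is in fact more thorough than the paper's---you spell out why identity inside $S$ persists and why $\F'(\Psi)=\emptyset$ implies $\Psi$ is $\pfc{\F}$---with one small wording slip: in paragraph~3 ``no set containing $S$'' should read ``no set \emph{strictly} containing $S$'' (that is what paragraph~2 proved), after which your case split correctly rules out $S$ itself; note also that the second disjunct in paragraph~2 ($\Psi\equiv\Pi$ inside $R$) is vacuous, since it would force $\Psi\equiv\Pi$ inside $S$ and hence $S\notin\F'(\Psi)$.
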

\begin{proof}
The correctness of the algorithm follows by Lemma \ref{lem:bar-q}. We bound the running
time by showing that no set $S\in\F$ is processed twice. After a set $S$ is processed,
by Lemma \ref{lem:bar-q}, either $\Pi$ and $\Psi$ will be identical inside $S$ or
$\Psi(S)=0$. Once $\Pi$ and $\Psi$ become identical inside a set, it remains so during
all later iterations.

The value $\Psi(S)$ could be changed later only if we process a set $S'\supsetneq S$
after processing $S$. Let $S'$ be the first such set. 
At the iteration when $S$ was processed, by the maximal choice it follows that
$\Psi(S')=0$. Hence $\Psi(S')$ could become positive only if the algorithm had
processed a set $Z\supsetneq S'$, $Z\in\F$ between processing $S$ and $S'$, a
contradiction to the choice of $S'$.
\end{proof}

\subsection{Extremal dual solutions}\label{sec:extremal-dual-solutions}
In this section, we prove Lemma~\ref{lem:extremal-fit}. The end result of the iterative procedure of the previous section can also be achieved by optimizing over dual solutions. The key property of the objective function is that it puts less weight on larger laminar sets.

Assume $\F\subseteq\O$ is a
critical family, with $\Pi$ being an $\fc{\F}$ dual. Let $x$ be the unique optimal
solution to \ref{prog:P-F}.
Let ${\F}_x=\{S\in {\F}: x(\delta(S))=1\}$ the collection of tight sets for $x$. A
$\Pi$-extremal dual can be found by solving the following LP.
\begin{equation}
\tag{$D^*_{{\cal F}}$}\label{prog:D-star-F-i}
\begin{aligned}
\min h(\Psi,\Pi)=\sum_{S\in \V\cup{\cal F}_x}\frac{r(S)}{|S|}&\\
-r(S)\le \Psi(S)-\Pi(S)&\le r(S) \quad\forall S\in \V\cup \F_x\notag\\
\sum_{S\in\V\cup{\cal F}_x:uv\in \delta(S)}\Psi(S) &= c(uv) \quad \forall
uv\in \supp(x)\notag \\
\sum_{S\in\V\cup{\cal F}_x:uv\in \delta(S)}\Psi(S) &\le c(uv) \quad \forall
uv\in E\setminus \supp(x)\notag \\
\Psi(S)&\ge0\quad \forall S\in\F_x\notag
\end{aligned}
\end{equation}
The support of $\Psi$ is restricted to sets in $\V\cup\F_x$. Primal-dual slackness
implies that the feasible solutions to this program coincide with the optimal solutions
of \ref{prog:D-F}, hence an optimal solution to \ref{prog:D-star-F-i} is also an
optimal solution to \ref{prog:D-F}.

\begin{lemma}\label{lem:extreme}
Let $\F\subset \O$ be a critical family with $\Pi$ being an $\fc{\F}$ dual. Then, a
$\Pi$-extremal dual optimal solution is also an $\pfc{\F}$ dual optimal solution.
\end{lemma}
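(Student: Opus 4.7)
The plan is to argue by contradiction using Algorithm Positively-critical-dual-opt. Suppose $\Psi^*$ is $\Pi$-extremal but not $\pfc{\F}$. Then some $S\in\F$ has $\Psi^*(S)>0$ while $S$ fails to be $(\Psi^*,\F)$-factor-critical; since $\Pi$ is $\fc{\F}$, Lemma~\ref{lem:factorcrit-ident} forces $\Pi$ and $\Psi^*$ to disagree inside $S$. I would choose $S^*$ to be a maximal set in $\F$ with $\Psi^*(S^*)>0$ and $\Pi,\Psi^*$ not identical inside $S^*$, apply a single iteration of the modification step at $S^*$ to produce $\bar\Psi$, and aim to show $h(\bar\Psi,\Pi)<h(\Psi^*,\Pi)$. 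By Lemma~\ref{lem:bar-q}, $\bar\Psi$ remains dual optimal and hence feasible for \ref{prog:D-star-F-i}, so the strict decrease would contradict extremality.

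To estimate the change in $h$, I would first invoke complementary slackness: $\Psi^*(S^*)>0$ gives $x(\delta(S^*))=1$, and Lemma~\ref{lem:tight-inside} then places every $T\subsetneq S^*$ with $T\in\F$ into $\F_x$, so the LP objective really tracks every term involved in the modification. The key structural observation is that the objective weights smaller laminar sets more heavily: every $T\in\F$ strictly contained in the odd set $S^*$ has $|T|\le|S^*|-2$, while singletons have $|T|=1$, so $1/|T|\ge 1/(|S^*|-2)>1/|S^*|$. Termwise, for $T\subsetneq S^*$ the update $\bar\Psi(T)=(1-\lambda)\Psi^*(T)+\lambda\Pi(T)$ produces a decrease of $\lambda|\Psi^*(T)-\Pi(T)|/|T|$, while for $T=S^*$ the shift $\bar\Psi(S^*)=\Psi^*(S^*)-\Delta\lambda$, with $\Delta=\Delta_{\Pi,\Psi^*}(S^*)\ge 0$ by Lemma~\ref{lem:consistency-main}, increases $h$ by at most $\Delta\lambda/|S^*|$ via the triangle inequality.

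The final comparison uses $u^*\in S^*$ attaining $\Delta$: applying the triangle inequality to the telescoping identity $\Delta=\sum_{T\ni u^*,\,T\subsetneq S^*}(\Pi(T)-\Psi^*(T))$ yields $\sum_{T\ni u^*,\,T\subsetneq S^*}|\Psi^*(T)-\Pi(T)|\ge\Delta$, and dividing each term by $|T|\le|S^*|-2$ gives an inner decrease of at least $\lambda\Delta/(|S^*|-2)$, which strictly exceeds the outer increase $\Delta\lambda/|S^*|$ whenever $\Delta>0$. The case $\Delta=0$ is easier, since then $\lambda=1$, the $|S^*|$-term is unchanged, and the remaining inner contribution is $-\sum_{T\subsetneq S^*}|\Psi^*(T)-\Pi(T)|/|T|<0$ because $\Pi,\Psi^*$ differ inside $S^*$. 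The main obstacle I anticipate is bookkeeping the strict gap between $1/|T|$ and $1/|S^*|$ and ensuring it absorbs the outer increase in both the $\Delta>0$ and $\Delta=0$ regimes; once that is handled, the contradiction with $\Pi$-extremality forces $\Psi^*$ to be an $\pfc{\F}$ dual optimal solution.
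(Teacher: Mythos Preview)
Your proposal is correct and follows essentially the same argument as the paper: take a set $S$ with $\Psi(S)>0$ where $\Pi$ and $\Psi$ disagree inside, apply one step of the modification~(\ref{eq:q-mod}) to produce $\bar\Psi$ (optimal by Lemma~\ref{lem:bar-q}), and exploit the weighting $1/|T|>1/|S|$ for $T\subsetneq S$ together with the vertex $u^*$ attaining $\Delta$ to show $h(\bar\Psi,\Pi)<h(\Psi,\Pi)$, contradicting extremality. The only quibble is that your appeal to Lemma~\ref{lem:factorcrit-ident} to deduce that $\Pi$ and $\Psi^*$ disagree inside $S$ is misplaced---that implication follows directly from the definitions (if they were identical inside $S$, the tight edges in $E[S]$ would coincide and $(\Pi,\F)$-factor-criticality of $S$ would transfer to $\Psi^*$), whereas Lemma~\ref{lem:factorcrit-ident} establishes the converse direction.
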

\begin{proof}
We will show that whenever $\Psi(S)>0$, the solutions $\Psi$ and $\Pi$ are identical
inside $S$.
Assume for a contradiction that this is not true for some $S\in \F$. Let
$\lambda=\min\{1,\Psi(S)/\Delta_{\Pi,\Psi}(S)\}$ if $\Delta_{\Pi,\Psi}(S)>0$ and
$\lambda=1$ if $\Delta_{\Pi,\Psi}(S)=0$. Define $\bar \Psi$ as in (\ref{eq:q-mod}). By
Lemma~\ref{lem:bar-q}, $\bar{\Psi}$ is also optimal to \ref{prog:D-F} and thus feasible
to \ref{prog:D-star-F-i}. We show $h(\bar \Psi,\Pi)<h(\Psi,\Pi)$, which is a
contradiction.

For every $T\in \V\cup \F_x$, let $\tau(T)=|\Psi(T)-\Pi(T)|-|\bar \Psi(T)-\Pi(T)|$.
With this notation,
\[
h(\Psi,\Pi)-h(\bar \Psi,\Pi)=\sum_{T\in\V\cup\F_x}\frac{\tau(T)}{|T|}.
\]
If $T\setminus S=\emptyset$, then $\bar \Psi(T)=\Psi(T)$ and thus $\tau(T)=0$. If
$T\subsetneq S$, $T\in \V\cup\F$, then $|\bar
\Psi(T)-\Pi(T)|=(1-\lambda)|\Psi(T)-\Pi(T)|$, and thus
$\tau(T)=\lambda|\Psi(T)-\Pi(T)|$. Since $\bar \Psi(S)=\Psi(S)-\Delta\lambda$, we have
$\tau(S)\ge -\Delta\lambda$.

Let us fix an arbitrary $u\in S$, and let $\gamma=\max_{T\subsetneq S:u\in T,
T\in\V\cup\F_x} |T|$.
\begin{align*}
h(\Psi,\Pi)-h(\bar \Psi,\Pi)
&=\sum_{T\in\V\cup\F_x}\frac{\tau(T)}{|T|}\\
&\ge \sum_{T\subsetneq S:u\in T,
T\in\V\cup\F_x}\frac{\tau(T)}{|T|}+\frac{\tau(S)}{|S|}\\
&\ge\frac{\lambda}{\gamma} \sum_{T\subsetneq S:u\in T,
T\in\V\cup\F_x}|\Psi(T)-\Pi(T)|-\frac{\Delta\lambda}{|S|}\\
&\ge
\frac{\lambda}{\gamma}\left(\Pi_S(u)-\Psi_S(u)\right)-\frac{\Delta\lambda}{|S|}.
\end{align*}

\noindent {\em Case 1:} If $\Delta>0$, then pick $u\in S$ satisfying
$\Pi_S(u)-\Psi_S(u)=\Delta$. Then the above inequalities give
\[
h(\Psi,\Pi)-h(\bar \Psi,\Pi)\ge  \Delta\lambda\left(\frac1\gamma-\frac1{|S|}\right)>0.
\]
The last inequality follows since $|S|>\gamma$.

\noindent {\em Case 2:} If $\Delta=0$, then $\lambda=1$ and therefore,
\[
h(\Psi,\Pi)-h(\bar \Psi,\Pi)\ge
\frac{1}{\gamma} \sum_{T\subsetneq S:u\in T,T\in\V\cup\F_x}|\Psi(T)-\Pi(T)|
\]
Now, if $\Pi$ and $\Psi$ are not identical inside $S$, then there exists a node $u\in
S$ for which the RHS is strictly positive.
Thus, in both cases, we get $h(\bar \Psi,\Pi)<h(\Psi,\Pi)$, a contradiction to the
optimality of $\Psi$ to \ref{prog:D-star-F-i}.
\end{proof}

\begin{proof}[Proof of Lemma~\ref{lem:extremal-fit}]
By Lemma~\ref{lem:pof-uniqueness-implies-half-integrality}, the unique optimal $x$ to
\ref{prog:P-F} is proper-half-integral. Lemma~\ref{lem:extreme} already shows that a
$\Gamma$-extremal dual solution $\Pi$ is also $\pfc{\F}$. We need to show that the next
family of cuts is a critical family. Recall that the set of cuts for the next round
is defined as $\H'\cup\H''$, where $\H'=\{T\in \F: \Pi(T)>0\}$, and $\H''$ is defined
based on some cycles in $\supp(x)$. We need to show that every set of $\H'\cup\H''$ is
$\Pi$-factor-critical. This is straightforward for sets of $\H'$ by the definition of
the $\pfc{\F}$ property.

It remains to show that the sets of $\H''$ are also  $\Pi$-factor-critical. These are
defined for odd cycles $C\in \supp(x)$. Now, $\hat C\in \H''$ is the union of $V(C)$
and the maximal sets $S_1,\ldots, S_\ell$ of $\H'$ intersecting $V(C)$. We have
$\Pi(S_j)>0$ for each $j=1,\ldots,\ell$ and hence $x(\delta(S_j))=1$.

Let $u\in \hat C$ be an arbitrary node; we will construct the $\Pi$-critical matching
$\hat M_u$ in $\hat C$. Let us contract all sets $S_1,\ldots,S_\ell$ to nodes $s_1,\ldots,s_\ell$
w.r.t. $\Pi$. We know by Lemma~\ref{lem:contract-factor-critical}(iii) that the image
$x'$ of $x$ is proper-half-integral and that the odd cycle $C$ projects to an odd cycle $C'$ in $\supp(x')$.
Further, notice that $\{s_1,\ldots,s_\ell\}\subseteq V(C')$, and therefore $V(C')$ is the image of the entire set $\hat C$.
Let $u'$ be the image of $u$; 
since $C'$ is an odd cycle, there is a perfect matching
$M'_{u'}\subseteq C'$ of the set $V(C')\setminus \{u'\}$.

Assume first $u\in S_j$ for some $1\le j\le \ell$. Then $u'=s_j$. The pre-image $\hat
M$ of $M'_{u'}$ in the original graph contains exactly one edge entering each $S_k$ for
$k\neq j$ and no edges entering $S_j$. Also, $\hat M\subseteq C$ and thus $\hat M$ consists of tight edges w.r.t. $\Pi$.
 Consider the $\Pi$-critical matching $M_{u}$ for
$u$ in $S_j$. For $k\neq j$, if $a_kb_k\in \hat M\cap \delta(S_k)$, $a_k\in S_k$, then,
let $M_{a_k}$ be the $\Pi$-critical matching for $a_k$ in $S_k$. The union of $\hat M$,
$M_u$ and the  $M_{a_k}$'s give a $\Pi$-critical matching for $u$ inside $\hat C$.

If $u\in \hat C \setminus(\cup_{j=1}^\ell S_j)$, then similarly there is a
$\Pi$-critical matching $M_{a_k}$ inside every $S_k$. The union of $\hat M$ and the
$M_{a_k}$'s give the $\Pi$-critical matching for $u$ inside $\hat C$. We also have
$\Pi(S)>0$ for all non-maximal sets $S\in\H'\cup\H''$ since the only sets with
$\Pi(S)=0$ are those in $\H''$, and they are all maximal ones.
\end{proof}

	
\section{Convergence}
The goal of this section is to prove Lemmas \ref{lem:odd-cycles} and
\ref{lem:strong-progress}. Lemma \ref{lem:odd-cycles} shows that the number of odd cycles in the support
is nonincreasing. Lemma \ref{lem:strong-progress} shows that in a sequence of iterations where the
number of cycles does not decrease, all the new cuts
added continue to be included in subsequent iterations (till the number of cycles
decreases). In order to establish Lemma \ref{lem:strong-progress}, it is sufficient to show that the extremal dual solution has non-zero values on cuts that were added after the last decrease in the number of odd cycles.

These structural properties are established as follows.
First we develop a primal-dual procedure that transforms a half-integral matching to satisfy a chosen subset of odd-set inequalities. 
Next, we apply this procedure starting with an appropriate primal/dual solution to obtain the optimal primal solution of the LP occurring in the cutting plane algorithm. The analysis of the procedure shows that the number of odd cycles in nonincreasing. The key ingredient in the proof of Lemma~\ref{lem:strong-progress} is showing 
that whenever the number of odd cycles remains the same, then the extremal dual solution occurring in the cutting plane algorithm must be the same as the dual solution found by this procedure. As a consequence, properties of the dual solution found by this procedure also carry over to the extremal dual solution found by the algorithm. 


\subsection{The half-integral matching procedure}\label{sec:edmonds-alg}
We use the terminology of Edmonds' weighted matching algorithm
\cite{Edmonds65} as described by Schrijver \cite[Vol A, Chapter 26]{Schrijver03}.
For a laminar family $\L\cup\K$, consider the following pair of primal and dual linear programs;
note that the primal differs from \ref{prog:P-F} by requiring that the degree of every set in $\K$ is precisely one, similar to the node constraints. 
\begin{multicols}{2}
\noindent
\begin{align}
\min& \sum_{uv\in E} c(uv) z(uv) \tag{$P_{\L}^{\K}(G,c)$}\label{prog:P-F-K}\\
z(\delta(u))&=1\quad\forall u\in V\notag\\
z(\delta(S))&=1\quad\forall S\in \K\notag\\
z(\delta(S))&\ge 1\quad \forall S\in \L\notag\\
z&\ge0\notag
\end{align}
\begin{align}
\max \sum_{S\in\V\cup{\cal L}\cup\K}\Lambda(S)& \tag{$D_{\L}^{\K}(G,c)$}\label{prog:D-F-K}\\
\sum_{S\in\V\cup{\cal L}\cup{\cal K}:uv\in \delta(S)}\Lambda(S) &\le c(uv) \quad \forall
uv\in E\notag \\
\Lambda(S)&\ge 0\quad\forall S\in\L \notag
\end{align}
\end{multicols}

We note that every feasible solution to \ref{prog:P-F-K} is also a feasible solution to $P_{\L\cup\K}(G,c)$, whereas a feasible solution to \ref{prog:D-F-K} is a feasible solution to $D_{\L\cup\K}(G,c)$ only if $\Lambda(S)\ge0$ for all sets $S\in\K$.

The aim of the procedure is that for a given laminar family $\F\cup\K$  satisfying certain structural properties, we wish to transform a pair of primal and dual feasible solutions to $(P_{\F}^{\emptyset}(G,c), D_{\F}^{\emptyset}(G,c))$ 
to optimal solutions to a pair of primal and dual optimal solutions to $(P_{\L}^{\K}(G,c), D_{\L}^{\K}(G,c))$ for some $\L\subseteq \F$. The following notion of a {\em valid configuration} encapsulates these structural properties. We say that $(\L,\K,z,\Lambda)$ form a valid configuration, if the following hold:

\begin{enumerate}[(A)]
\item $\L\cup\K\subset \O$ is a laminar family, and all sets in $\K$ are disjoint from each other and all sets in $\L$. $\Lambda$ is a feasible solution to \ref{prog:D-F-K} with $\Lambda(S)>0$ for all $S\in \L$. Further, every set $S\in \L\cup\K$ is $(G_{\Lambda},\L\cup\K)$-factor-critical, where $G_{\Lambda}$ denotes the graph of tight edges wrt $\Lambda$. 

\item $z$ is proper-half-integral, satisfying all constraints of \ref{prog:P-F-K} except that $z(\delta(S))=0$ may hold for some $S\in \K$. The support of $z$ is an odd cycle inside every such set $S$. Inside every other set $S\in \K\cup\L$, $\supp(z)$ spans all vertices in $S$ and is a disjoint union of edges and a (possibly empty) even path.

\item Every edge in $\supp(z)$ is tight for $\Lambda$, and $z(\delta(S))=1$ for every $S\in\L$.

\end{enumerate}

\begin{algorithm}[ht!]
\caption{Half-integral Matching Procedure}\label{alg:half-int}
\noindent {\em Input.} 
A graph $G$ with edge costs $c$, and a valid configuration
$(\F,\K,x,\Pi)$. 

\noindent {\em Output.} A valid configuration $(\L,\K,z,\Gamma)$ with $\L\subseteq \F$, and $z$ being a proper-half-integral optimal solution 
to \ref{prog:P-F-K}.

\begin{enumerate}

\item Initialize $z=x$, $\Lambda=\Pi$, $\L=\F$.
Let $G^*=({\cal V}^*,E^*)$, where $E^* \subseteq E$ are edges that are tight w.r.t. $\Lambda$,
and all maximal sets of $\L\cup \K$ w.r.t. $\Lambda$ are contracted;
$c^*$ and $z^*$ are defined by the contraction. Let $T\subseteq \V^*$
denote the set of exposed nodes in $z^*$, and 
let $R(\supseteq T)$ be the set of exposed nodes and nodes incident to $\frac 12$-edges in $z^*$.

\item While $T$ is not empty,\\
\noindent {\em Case I: There exists an alternating $T$-$R$-walk in $G^*$}. Let
$P=v_0\ldots
v_{2k+1}$ denote a shortest such walk.
\begin{enumerate}[(a)]
\item If $P$ is an alternating path, and $v_{2k+1}\in T$, then change
  $z$ by alternating along $P$.
\item If $P$ is an alternating path, and $v_{2k+1}\in R-T$, then
  let $C$ denote the odd cycle containing $v_{2k+1}$. Change $z$ by
  alternating along $P$, and replacing $z$ on $C$ by a blossom with
  base $v_{2k+1}$.
\item If $P$ is not a path, then by Claim~\ref{cl:walk-blossom}, it contains
   an even alternating path $P_1$ to a blossom $C$. Change $z$ by
   alternating along $P_1$, and setting $z^*(uv)=1/2$ on every edge of $C$.
\end{enumerate}

\noindent {\em Case II: There exists no alternating
  $T$-$R$-walk in $G^*$}. Define
\begin{align*}
{\cal B}^+&:=\{S\in \V^*: \exists \mbox{ an even alternating path from
$T$ to }S\},\\
{\cal B}^-&:=\{S\in \V^*: \exists \mbox{ an odd alternating path from
$T$ to }S\}.
\end{align*}
For some $\varepsilon>0$, reset
\begin{align*}
\Lambda(S):=
\begin{cases}
\Lambda(S)+\varepsilon&\mbox{ if }S\in{\cal B}^+,\\
\Lambda(S)-\varepsilon&\mbox{ if }S\in{\cal B}^-.
\end{cases}
\end{align*}
Choose $\varepsilon$ to be the maximum value such that $\Lambda$
remains feasible to \ref{prog:D-F-K}.
\begin{enumerate}[(a)]
\item If some new edge becomes tight, then $E^*$ is extended.
\item If $\Lambda(S)=0$ for some $S\in \L\cap {\cal B}^-$ after the
  modification, then unshrink the node $S$. Set $\L :=\L\setminus \{S\}$.
\end{enumerate}
\end{enumerate}
\end{algorithm}

The input to the procedure (see Algorithm ~\ref{alg:half-int}) will be a graph $G$ with costs $c$, and a valid configuration $(\F,\K,x,\Pi)$. The procedure is iterative. 
In each iteration, it maintains  a valid configuration $(\L,\K,z,\Lambda)$, where $\L\subseteq \F$; the set $\K$ never changes during the execution of the procedure.
We terminate once $z$ is feasible to \ref{prog:P-F-K}. The complementary slackness conditions (C) imply that if $z$ is feasible to \ref{prog:P-F-K}, then $(z,\Lambda)$ form an optimal primal-dual pair to (\ref{prog:P-F-K},\ref{prog:D-F-K}).

The procedure works on the graph $G^*=({\cal V}^*,E^*)$, obtained the following way from $G$: We
first remove every edge in $E$ that is not tight w.r.t. $\Lambda$, and then contract
all maximal sets of $\L\cup\K$ w.r.t. $\Lambda$. The node set of ${\cal V}^*$ is identified
with the pre-images. Let $c^*$ denote the contracted cost function and $z^*$ denote the image
of $z$. Since $E^*$ consists only of tight edges, $\Lambda(u)+\Lambda(v)=c^*(uv)$ for
every edge $uv\in E^*$. Let $T\subseteq {\cal V}^*$ denote the sets in
$\K$ for which $z(\delta(S))=0$; by property (B), 
 $T$ is the set of nodes in ${\cal V}^*$ that have degree 0 in $z^*$, whereas all other nodes have degree 1.

\begin{claim}\label{cl:z-star}
The vector $z^*$ defined above is proper-half-integral. Assuming the uniqueness condition (\ref{prop:uniqueness}), the number of odd cycles in $\supp(z^*)$ plus the number of exposed nodes in $z^*$ equals  $\o(z)$. 
\end{claim}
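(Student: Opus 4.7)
The plan is to prove the two assertions in turn. For proper-half-integrality of $z^*$, I would first note that $z^*$ inherits its edge values from $z$, so $z^*(e)\in\{0,\tfrac12,1\}$, and that each $v\in\V^*$ has $z^*$-degree exactly $z(\delta(S))\in\{0,1\}$ when $v$ is a contracted maximal set $S\in\L\cup\K$, and $z(\delta(v))=1$ otherwise. Consequently the nodes of $T$ have $z^*$-degree $0$ and all other nodes degree $1$, so each component of $\supp(z^*)$ is either a single full edge or a closed half-edge cycle (no half-edge path is possible because its endpoints would have degree $\tfrac12$).

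The actual work of part one is ruling out even half-cycles. I would argue by contradiction: suppose $C^*\subseteq\supp(z^*)$ is an even half-cycle. At each contracted node $s=S$ traversed by $C^*$, the two incident half-edges correspond to edges in $\delta(S)\cap\supp(z)$ meeting distinct vertices $u_1,u_2\in S$; by Property~(B), these $u_1,u_2$ are precisely the endpoints of the even half-edge path comprising $\supp(z)\cap E[S]$ (disjoint from the matching edges inside $S$). Substituting in $C^*$, at each such $s$, the corresponding internal even path yields a closed walk in $\supp(z)$ of length $|C^*|+\sum_S(\text{even path length})$, which is even. Because the substituted paths live in pairwise disjoint sets and $C^*$ itself is simple, this walk is a simple even cycle in $\supp(z)$, contradicting that $z$ is proper-half-integral.

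For the count, I would exhibit a bijection between the odd cycles of $\supp(z)$ and the disjoint union of $T$ with the odd cycles of $\supp(z^*)$. Property~(B) forces every $S\in T$ to contain exactly one odd cycle of $\supp(z)$ (entirely inside $S$), and no odd cycle of $\supp(z)$ can lie inside a non-$T$ maximal set (whose support is matching-plus-even-path). So every odd cycle of $\supp(z)$ is either one of the $|T|$ internal odd cycles or crosses the boundary of some non-$T$ maximal sets; the same parity-preserving contraction used in part one sends cycles of the second type to odd cycles of $\supp(z^*)$, while the inverse un-contraction is unambiguous because Property~(B) pins down the unique internal even path to substitute for each contracted node on the cycle. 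The uniqueness hypothesis enters through Lemma~\ref{lem:pof-uniqueness-implies-half-integrality} to guarantee Property~(B)---in particular the canonical matching-plus-even-path decomposition---for the configurations the procedure actually visits. The main subtlety will be tracking parity cleanly through cycles that traverse several contracted sets in sequence; once the per-set parity accounting is in hand, the identity $\o(z^*)+|T|=\o(z)$ follows immediately.
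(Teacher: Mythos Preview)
Your proposal is correct and follows essentially the same route as the paper: both arguments use property~(B) of a valid configuration to un-contract a cycle of $\supp(z^*)$ into a cycle of $\supp(z)$ by splicing in the even path inside each contracted set, so parity is preserved and proper-half-integrality of $z$ rules out even cycles in $z^*$; the counting identity then follows from the same bijection you describe. The paper's proof is extremely terse (three sentences), while you spell out the degree analysis, the parity bookkeeping, and the bijection explicitly.

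One small imprecision: you say uniqueness enters via Lemma~\ref{lem:pof-uniqueness-implies-half-integrality} to guarantee property~(B). In the claim as stated, $(\L,\K,z,\Lambda)$ is already assumed to be a valid configuration, so property~(B) is given and the proof needs nothing further. Uniqueness is what lets the \emph{procedure} maintain~(B) from one iteration to the next (via Lemma~\ref{lem:contract-factor-critical}(iii), which forces the matching-plus-even-path structure inside each set), and is why the claim is phrased under~(\ref{prop:uniqueness}); but inside the proof of the claim itself you should simply invoke property~(B) directly rather than re-deriving it.
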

\begin{proof}
It is clear that $z^*$ is half-integral, and that the image of every
odd cycle in $\supp(z)$ is a cycle in $\supp(z^*)$ or an exposed node.  
The last part of property (B) implies that all these cycles in $z$ must be odd.
\end{proof}

In the execution of the procedure, we may decrease $\Lambda(S)$ to 0 for a set $S\in \L$. In this case, we remove $S$ from $\L$. We modify $G^*$, $c^*$ and $z^*$
accordingly. This operation will be referred as `unshrinking' $S$. New sets will never be added to $\L$, that is, no new sets will be shrunk after the initial contractions: $|{\cal V}^*|$ may only increase. In contrast, sets in $\K$ are never unshrunk and the family $\K$ does not change.

The procedure works by modifying the solution $z^*$ and the dual
solution $\Lambda^*$. An edge $uv\in E^*$ is called a
0-edge/$\frac{1}2$-edge/1-edge according to the value $z^*(uv)$. A
modification of $z^*$ in $G^*$ naturally extends to a modification of
$z$ in $G$. Indeed, if $S\in \Lambda$ is a shrunk node in $\V^*$, and $z^*$ is
modified so that there is an 1-edge incident to $S$ in $G^*$, then let
$u_1v_1$ be the pre-image of this edge in $G$, with $u_1\in S$. Then modify $z$ inside
$S$ to be identical with the $\Lambda$-critical-matching $M_{u_1}$ inside $S$. If there
are two half-edges incident to $S$ in $G^*$, then let $u_1v_1, u_2v_2$ be the pre-image
of these edges in $G$, with $u_1,u_2\in S$. Then modify $z$ inside $S$ to be identical
with the convex combination $(1/2)(M_{u_1}+M_{u_2})$ of the
$\Lambda$-critical-matchings $M_{u_1}$  and $M_{u_2}$ inside $S$. Note
that this modification preserves the second part of property (C).

A walk $P=v_0v_1v_2\ldots v_k$ in $G^*$ is called an alternating walk,
if every odd edge is a 0-edge and every even edge is a 1-edge. If
every node occurs in $P$ at most once, it is called an alternating
path. By {\em alternating along the path $P$}, we mean modifying
$z^*(v_iv_{i+1})$ to $1-z^*(v_iv_{i+1})$ on every edge of $P$. If $k$
is odd, $v_0=v_k$ and no other node occurs twice, then $P$ is called a
{\em blossom} with base $v_0$. The following claim is
straightforward.

\begin{claim}[{\cite[Thm 24.3]{Schrijver03}}]\label{cl:walk-blossom}
Let $P=v_0v_1\ldots v_{2k+1}$ be an alternating walk. Either $P$ is an
alternating path, or it contains a blossom $C$ and an even alternating
path from $v_0$ to the base of the blossom.
\proofbox
\end{claim}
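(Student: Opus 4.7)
The plan is to identify the desired blossom via the first repeated vertex of $P$. Assume $P$ is not an alternating path, and let $j$ be the smallest index with $v_j\in\{v_0,\ldots,v_{j-1}\}$; write $v_j=v_i$ with $i<j$. Then $v_0,v_1,\ldots,v_{j-1}$ are pairwise distinct, so the closed segment $C=v_iv_{i+1}\cdots v_j$ has distinct internal vertices and is a simple cycle based at $v_i=v_j$. My goal is to show that $i$ is even and $C$ is an odd cycle whose two edges at $v_i$ are both 0-edges, which exhibits $C$ as a blossom with base $v_i$ and makes $v_0v_1\cdots v_i$ an even alternating path from $v_0$ to the base.

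The essential tool is that every internal vertex $v_p$ of $P$ (with $0<p<2k+1$) must be matched in $z^*$: the alternating convention forces exactly one of the incident walk-edges $e_p,e_{p+1}$ to be a 1-edge, and this 1-edge coincides with the unique 1-edge at $v_p$ in $z^*$. Writing $e_q=v_{q-1}v_q$, this 1-edge is $e_p$ when $p$ is even and $e_{p+1}$ when $p$ is odd. Applied to $v_i=v_j$, the 1-edges identified at the two internal occurrences must be the same edge of $G^*$, which combined with $v_i=v_j$ forces the equality of a consecutive pair of walk-vertices. A case analysis on the parities of $i,j$ then eliminates every configuration except $i$ even, $j$ odd:
\begin{itemize}
\item If $i,j$ are both even, $e_i=e_j$ forces $v_{i-1}=v_{j-1}$, a repetition at index $j-1<j$ contradicting the minimality of $j$.
\item If $i$ is odd and $j$ is even, $e_{i+1}=e_j$ forces $v_{i+1}=v_{j-1}$, again contradicting the minimality of $j$.
\item If $i,j$ are both odd, $e_{i+1}=e_{j+1}$ forces $v_{i+1}=v_{j+1}$; but then the walk $v_0v_1\cdots v_iv_{i+1}v_{j+2}\cdots v_{2k+1}$, obtained by short-circuiting the even segment $v_{i+1}\cdots v_{j+1}$, is a strictly shorter alternating $T$-$R$-walk with the same endpoints (alternation is preserved because $j-i$ is even, so relabelling shifts parities correctly), contradicting that $P$ was chosen to be the shortest such walk in Case I.
\end{itemize}
The boundary occurrences $i=0$ and $j=2k+1$ are handled directly: since $v_0\in T$ is exposed and $v_{2k+1}\in R$ is exposed or half-edge, neither endpoint has a 1-edge in $z^*$, so an interior repetition $v_p=v_0$ or $v_p=v_{2k+1}$ would require a nonexistent 1-edge at that vertex; the only remaining possibility is $i=0$ and $j=2k+1$, in which case $C=P$ is itself an odd cycle of length $2k+1$ based at $v_0$ with empty even prefix.

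Having $i$ even and $j$ odd gives $j-i$ odd and $C$ an odd cycle; its two edges $e_{i+1}$ and $e_j$ incident to the base $v_i=v_j$ sit at odd positions and are therefore both 0-edges, so $C$ is a blossom based at $v_i$. The prefix $v_0v_1\cdots v_i$ has $i$ edges with $i$ even, yielding the required even alternating path from $v_0$ to the base. The main obstacle is the both-odd subcase, which is the only one that cannot be closed by the minimality of $j$ alone and must invoke the shortness of $P$; here the delicate point is verifying that the short-circuited walk remains alternating once its positions are re-indexed, which reduces to the fact that the bypassed even cycle shifts all downstream positions by an even amount.
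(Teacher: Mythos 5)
Your proof is correct, and it is essentially the standard argument behind the result the paper simply cites from Schrijver (Theorem 24.3) without proof: locate the first repeated vertex, use the uniqueness of the matched ($1$-)edge at each internal vertex to kill three of the four parity configurations, and read off the blossom and the even stem in the surviving case ($i$ even, $j$ odd).

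One observation worth recording: your argument does not prove the claim as literally stated, but rather the version with the hypotheses under which it is invoked in Case I of the procedure --- namely that $P$ is a \emph{shortest} alternating $T$--$R$ walk (used to dispatch the $(i,j)$ both-odd case by short-circuiting) and that its endpoints carry no $1$-edge (used for the boundary occurrences). These hypotheses are genuinely necessary: for an arbitrary alternating walk the statement fails, e.g.\ the walk $v_0,v_1,v_2,v_3,v_4,v_1,v_2,v_7$ with matching edges $v_1v_2$ and $v_3v_4$ is alternating and not a path, yet its only repetitions have even index gaps, so it contains no blossom. So you were right to lean on the shortness of $P$; this is a (harmless) imprecision in the paper's statement of the claim rather than a gap in your proof. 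The remaining details --- that the uniqueness of the $1$-edge at $v_i=v_j$ forces the identified walk edges to coincide as edges of the (multi)graph, that the parity shift by the even amount $j-i$ preserves alternation after short-circuiting, and that the prefix $v_0\cdots v_i$ is a genuine path because $v_0,\ldots,v_{j-1}$ are distinct --- are all handled correctly.
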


The procedure is described in Algorithm \ref{alg:half-int}.
Let us note that in \cite{our-matching-alg} we  extend it to a ``complete'' algorithm to find 
a minimum-cost perfect matching where the intermediate solutions are half-integral and satisfy the degree constraints for all vertices.

\begin{figure}[ht]
\centering
\includegraphics[scale=0.4]{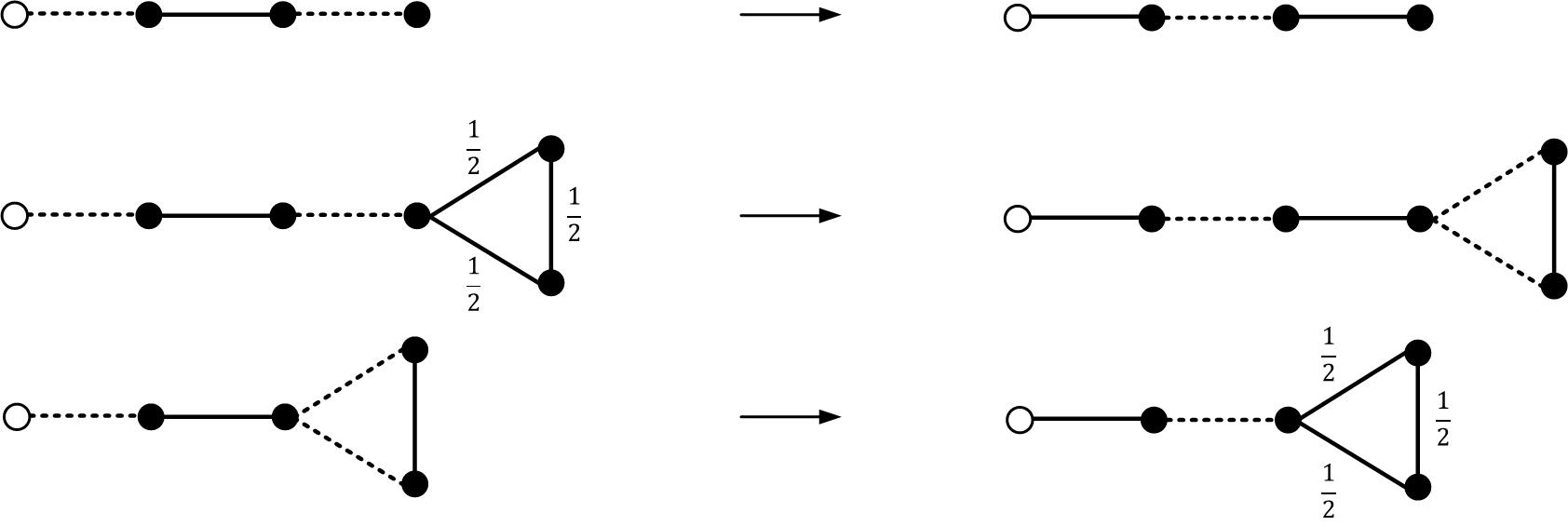}
\caption{The possible modifications in the Half-integral Matching Procedure.}
\label{fig:half-integral-primal-dual}
\end{figure}

The scenarios in {\em Case I} are illustrated in Figure~\ref{fig:half-integral-primal-dual}. In Case II, we observe that $T\subseteq {\cal B}^{+}$ and further, ${\cal B}^{+}\cap {\cal B}^-=\emptyset$ (otherwise, there existed a $T-T$ alternating walk and hence we have case I). The following claim is easy to verify. Note that $\L\cup\K$ will always be a critical family because of Claim~\ref{claim:downwards}. 

\begin{claim}
In every iteration of the procedure, $(\L,\K,z,\Lambda)$ is a valid configuration.
\end{claim}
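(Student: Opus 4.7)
The plan is to prove this claim by induction on the iteration count. The base case is immediate since the initialization sets $(\L,\K,z,\Lambda) = (\F,\K,x,\Pi)$, and the latter is valid by the hypothesis on the input. For the inductive step, I fix an iteration starting from a valid configuration and treat Case I and Case II separately.

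In Case I only $z$ changes, so $\L$, $\K$, $\Lambda$, and in particular (A), are untouched. All alternation occurs on edges of $E^*$, which are tight by definition, so every new edge of $\supp(z)$ is tight, preserving the tightness portion of (C). The degree condition $z(\delta(S))=1$ for $S\in\L$ is preserved because alternation changes the $z^*$-degree only at the two endpoints of the walk, which lie in $T$; a maximal set in $\L$ has $z(\delta(S))=1>0$ and cannot appear in $T$, while non-maximal $\L$-sets sit inside contracted $\V^*$-nodes and their internal degree constraints are handled automatically by the critical matchings used. For (B), proper-half-integrality of $z^*$ is checked subcase by subcase: (a) purely augments along a path, yielding integer edges only; (b) turns the reached $\frac12$-cycle into an odd alternating cycle whose base is matched externally; (c) reverses (b). The internal structure inside a shrunk $S\in\L\cup\K$ is then updated according to how many boundary edges are incident: when a single $1$-edge enters $S$ at $u_1$, I reset $\supp(z)\cap E[S]$ to a $\Lambda$-critical matching $M_{u_1}$, giving a perfect matching on $S\setminus\{u_1\}$, i.e., the required disjoint union of edges; when two $\frac12$-edges meet $S$ at $u_1,u_2$, I install $(1/2)(M_{u_1}+M_{u_2})$. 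For this last case, Lemma~\ref{lem:critical-matching}(ii) lets me iteratively replace $M_{u_1}$ by $M_{u_1}\Delta C$ to eliminate any cycle $C$ in the symmetric difference, so I may choose critical matchings whose symmetric difference is a single $u_1$-$u_2$ path --- necessarily of even length because $u_1$ is $M_{u_2}$-saturated but $M_{u_1}$-unsaturated and vice versa --- producing exactly the matching-edges-plus-even-half-path structure demanded by (B).

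In Case II, $z$ does not change, so (B) is preserved automatically. Dual feasibility of $\Lambda$ is maintained by the explicit choice of $\varepsilon$. Factor-criticality of each shrunk $S\in\L\cup\K$ persists because only the node duals $\Lambda(S')$ for $S'\in\V^*$ are modified; the duals of sets strictly inside any contracted node are untouched, hence the tight-edge graph inside $S$ is unchanged. The crucial point for the tightness portion of (C) is that in Case II no alternating $T$-$R$ walk exists, so every vertex incident to a $\frac12$-edge lies outside ${\cal B}^+\cup{\cal B}^-$ and its dual value is unchanged, keeping every $\frac12$-edge tight. Every $1$-edge in $\supp(z^*)$ either lies entirely outside the forest (dual sum unchanged) or connects a ${\cal B}^-$ node to its forest parent in ${\cal B}^+$, in which case the $+\varepsilon$ and $-\varepsilon$ cancel; so all matched edges remain tight. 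If the update forces $\Lambda(S)=0$ for some $S\in\L\cap{\cal B}^-$, I unshrink $S$ and drop it from $\L$: laminarity of $\L\cup\K$ is preserved, the factor-criticality of the other members of $\L\cup\K$ is unaffected, and the previously-internal support of $S$ (a near-perfect matching or an even half-path plus matching edges) now contributes to the support pattern at the parent level in a way already consistent with (B).

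The main obstacle is the Case I bookkeeping, particularly the two-half-edge case, where Lemma~\ref{lem:critical-matching}(ii) must be used to choose critical matchings whose symmetric difference is a single path rather than a path plus cycles. The Case II argument is more mechanical and rests on the standard alternating-forest observation that matched edges in the forest always connect ${\cal B}^-$ to ${\cal B}^+$, combined with the Case II hypothesis that no forest-reachable vertex lies on a $\frac12$-cycle.
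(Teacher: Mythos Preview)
The paper does not actually prove this claim; it simply asserts it is ``easy to verify'' and moves on. Your induction with case analysis is exactly the natural argument that phrase points to, and you carry it out correctly. In particular, you put your finger on the one step that is not entirely routine: when two $\frac12$-edges enter a shrunk set $S$ and one installs $\tfrac12(M_{u_1}+M_{u_2})$, the support inside $S$ must be matching edges plus a single even path, not matching edges plus a path plus stray half-cycles. Invoking Lemma~\ref{lem:critical-matching}(ii) to iteratively strip cycles from the symmetric difference of the two critical matchings is precisely the right device, and your observation that the residual $u_1$--$u_2$ path is even (first edge in $M_{u_2}$, last in $M_{u_1}$) completes that point cleanly.

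One small remark: in every actual use of the procedure in this paper the uniqueness hypothesis~(\ref{prop:uniqueness}) is in force, and the argument inside the proof of Lemma~\ref{lem:contract-factor-critical}(iii) already shows that under uniqueness the relevant pair of critical matchings has cycle-free symmetric difference. So in context the issue you handle is automatically taken care of; but your treatment via Lemma~\ref{lem:critical-matching}(ii) is more self-contained and does not rely on that ambient assumption.
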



The key to the proof of Lemma~\ref{lem:odd-cycles} is the following lemma, showing that $\o(z)$ is non-increasing during the execution of the procedure.

\begin{lemma}\label{lem:edmonds-decrease}
Let $z$ be the solution at the beginning of execution of an arbitrary iteration of the procedure, and let $\alpha$ be the number
of odd cycles  in $\supp(x)$ disjoint from all members of $\K$, that are absent in $\supp(z)$. Then
$\o(x)\ge \o(z)+2\alpha$. Further, if $\o(x)=\o(z)$, then cases I(a) and I(b) are never executed.
\end{lemma}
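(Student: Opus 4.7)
My plan is to introduce the potential function $\Phi(z) := \o(z) + 2\alpha(z)$, where $\alpha(z)$ is as defined in the lemma, and to prove $\Phi$ is non-increasing throughout the procedure. Since $\Phi = \o(x)$ initially ($z = x$ and $\alpha = 0$), monotonicity yields the desired inequality $\o(z)+2\alpha \le \o(x)$ at every iteration.

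A key preliminary observation is that the alternation portions of Cases I(a), I(b), and I(c) cannot touch half-cycle edges in $\supp(z^*)$: every interior vertex $v_i$ of the alternating path $P$ (or $P_1$ in I(c)) has an incident $1$-edge from $P$, which already accounts for its $z^*$-degree of $1$; so $v_i$ has no half-edges and cannot lie on a half-cycle. Consequently, both $\o(z)$ and $\alpha$ change only through the explicit cycle-to-blossom replacement in I(b), the blossom-to-half-cycle replacement in I(c), and the inside-critical-matching updates of exposed $\K$-sets that receive a new $1$-edge.

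The case analysis, using Claim~\ref{cl:z-star}, then proceeds as follows. In Case I(a), both endpoints of $P$ lie in $T$, so after alternation both exposed $\K$-sets receive a $1$-edge and are updated to critical matchings inside, eliminating the two odd cycles they carried; these are not $\alpha$-cycles, so $\o(z)$ drops by $2$, $\alpha$ is unchanged, and $\Phi$ decreases by $2$. In Case I(b), one exposed $\K$-set is matched (removing its cycle) and a half-cycle $C$ in $\supp(z^*)$ is replaced by a blossom (which is not a cycle in $\supp(z^*)$ since half its edges are $0$-edges), so $\o(z)$ drops by $2$; if $E(C)$ coincides with the edge set of an originally-present $\alpha$-cycle, that cycle becomes absent in $\supp(z)$ and $\alpha$ rises by $1$ (so $\Phi$ is unchanged), otherwise $\Phi$ decreases by $2$. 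In Case I(c), one exposed $\K$-set is matched while a blossom is replaced by a half-cycle on the same edge set in $G$, so $\o(z)$ is unchanged; $\alpha$ drops by $1$ if this edge set restores a previously-absent $\alpha$-cycle (yielding a $\Phi$-decrease of $2$), and is otherwise unchanged. Case II leaves $z$ itself untouched: the dual update modifies $\Lambda$ and possibly decontracts an $\L$-node, preserving $\supp(z)$ (and thus $\o(z)$ and $\alpha$) by the analysis of unshrinking, which replaces a node on a half-cycle by an even half-path through $S$ without altering parities or cycle counts.

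For the second assertion, suppose $\o(x) = \o(z)$. Then $\Phi = \o(x) + 2\alpha \ge \o(x)$, and combined with $\Phi \le \o(x)$ this forces $\alpha = 0$ and $\Phi = \o(x)$. Since $\Phi$ has never strictly decreased, no iteration can have executed an operation reducing $\Phi$ by $2$: in particular, no Case I(a), no Case I(b) of non-$\alpha$-cycle type, and no restoring Case I(c) has occurred. Moreover, Case I(b) of $\alpha$-cycle type is the only remaining operation that raises $\alpha$, and with no restoring I(c) to lower it, the final value $\alpha = 0$ forces the count of such I(b) operations to be zero as well. Hence neither Case I(a) nor Case I(b) has been executed. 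The main obstacle is verifying rigorously the bookkeeping claims that alternating paths cannot cross half-cycle vertices and that unshrinking an $\L$-node preserves the cycle-plus-exposed count in Claim~\ref{cl:z-star}; both follow once the properties of a valid configuration, particularly property (B), are carefully unpacked.
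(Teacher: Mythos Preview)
Your proposal is correct and follows essentially the same case analysis as the paper. The paper tracks $\o(z)$ directly via Claim~\ref{cl:z-star} (odd cycles in $\supp(z^*)$ plus exposed nodes), observing that it drops by $2$ in Cases I(a) and I(b) and is unchanged in I(c) and II; it then notes that each of the $\alpha$ absent original cycles can only have been removed by some I(b) step, so I(b) has fired at least $\alpha$ times, giving $\o(z)\le\o(x)-2\alpha$.

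The one substantive difference is that your potential $\Phi=\o(z)+2\alpha$ bundles the $\alpha$-bookkeeping into the monotonicity argument, which forces you to analyze whether I(c) can \emph{restore} an original cycle. The paper sidesteps this entirely: since $\o(z)=\o(x)-2a-2b$ (with $a,b$ the counts of I(a), I(b) executions) regardless of what I(c) does, the second assertion is immediate---$\o(z)=\o(x)$ forces $a=b=0$. Your route to the same conclusion, via ``$\Phi$ never strictly dropped'' plus a secondary count of $\alpha$-raising versus $\alpha$-lowering steps, is sound but more elaborate than necessary.
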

\begin{proof}
We will investigate how  the number of odd cycles in
$\supp(z^*)$ plus the number of exposed nodes change; by Claim~\ref{cl:z-star}, this equals $\o(z)$.
In Case I(a), the number of exposed nodes decreases by two. In Case I(b), both the number
of exposed nodes and the number of cycles decrease by one. In Case I(c), the number of
exposed nodes decreases by one, but we obtain a new odd cycle, hence
the total quantity remains
unchanged. In Case II, the primal solution is not modified at all.

Further, the cycles in $\supp(z^*)$ are in one-to-one correspondence with the cycles in $\supp(z)$ that are not contained inside some member of $\K$. Such a cycle can be removed only by performing the operation in Case I(b). This must be executed $\alpha$ times, therefore $\o(z)\le\o(x)-2\alpha$.
\end{proof}

Lemma~\ref{lem:odd-cycles} will be an immediate consequence of the next lemma.
For the proof of this next lemma, we assume that the procedure terminates in finite number of iterations. In the next section,
we will show that the procedure indeed terminates in strongly polynomial time.

\begin{lemma}\label{lem:inductive-odd}
Assume (\ref{prop:uniqueness}) holds. Let $\F$ be a critical family, and let $x$ be an optimal solution to
\ref{prog:P-F}, and $\Pi$ an $\pfc{\F}$ dual optimal solution to
\ref{prog:D-F}. Define the sets $\H'$ and $\H''$ as in steps 2(b) and (c) in Algorithm C-P matching (Algorithm~\ref{alg:main-alg}),
and let $\H=\H'\cup\H''$. Let $y$ be an optimal solution to $P_{\H}(G,c)$ and let $\Psi$  be an $\pfc{\H}$  optimal solution to $P_{\H}(G,c)$.
Then $\o(y)\le \o(x)$, and if $\o(y)=\o(x)$ then $\Psi(S)>0$ for every $S\in \H''$.
\end{lemma}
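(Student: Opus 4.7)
The plan is to use the Half-integral Matching Procedure (Algorithm~\ref{alg:half-int}) to construct $y$ from $(x,\Pi)$ and then to invoke Lemma~\ref{lem:edmonds-decrease}. First, I would contract all maximal sets $S\in\H'$ with respect to $\Pi$, obtaining a smaller graph $\hat G$ with cost $\hat c$ and images $\hat x,\hat \Pi$ of $x,\Pi$. By Corollary~\ref{cor:contract-bipartite-opt}, $\hat x$ is the unique optimum of the bipartite relaxation $P_0(\hat G,\hat c)$ and $\hat \Pi$ an optimal dual; Lemma~\ref{lem:contract-factor-critical}(iii) gives a bijection between the odd cycles of $\supp(\hat x)$ and those of $\supp(x)$, so $\o(\hat x)=\o(x)$. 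Let $\hat\K:=\{V(C'):C'\text{ is an odd cycle of }\supp(\hat x)\}$, which is in bijection with $\H''$ under the contraction. I would then check that $(\emptyset,\hat\K,\hat x,\hat \Pi)$ is a valid configuration in $\hat G$: $\hat\K$ is pairwise disjoint and each $V(C')$ is $(\hat \Pi,\hat\K)$-factor-critical as an odd cycle on tight edges (property (A)), $\hat x$ is proper-half-integral with $\hat x(\delta(V(C')))=0$ and odd-cycle interior by a degree count (property (B)), and property (C) follows from primal-dual slackness between $\hat x$ and $\hat \Pi$.

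Next I would apply the Half-integral Matching Procedure to this configuration. Since $\L$ starts empty and no new sets are added, the output is a valid configuration $(\emptyset,\hat\K,\hat z,\hat\Lambda)$ with $\hat z$ optimal to $P_\emptyset^{\hat\K}(\hat G,\hat c)$ and $\hat\Lambda$ optimal to its dual. Lift $\hat z$ back to $z$ in $G$ by placing inside each contracted $\H'$-set $S$ the $\Pi$-critical matching dictated by $\hat z$ at the corresponding contracted node (or the half-sum of two such matchings in the $1/2$-cycle case), as in the proof of Lemma~\ref{lem:contract-factor-critical}(iii); define $\Lambda$ on $\H$ by $\Lambda(S)=\Pi(S)$ for $S\in\H'$ and $\Lambda(\hat C)=\hat\Lambda(V(C'))$ for $\hat C\in\H''$. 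Standard reverse-contraction arguments parallel to Lemma~\ref{lem:contract-factor-critical} show $(z,\Lambda)$ is primal-dual optimal for $(P_\H(G,c),D_\H(G,c))$ with $\Lambda$ being $\pfc{\H}$; by the uniqueness property $(\star)$, $z=y$, and $\o(y)=\o(\hat z)$. The first claim $\o(y)\le\o(x)$ then follows from Lemma~\ref{lem:edmonds-decrease}: $\o(\hat z)\le\o(\hat x)=\o(x)$.

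For the second claim, I would suppose $\o(y)=\o(x)$. By the second part of Lemma~\ref{lem:edmonds-decrease}, Cases I(a) and I(b) are never executed during the procedure; only Cases I(c) and II fire. Since there are no non-$\hat\K$ cycles in $\supp(\hat x)$, each initially exposed $\hat\K$-node $s_{V(C')}$ must become matched via a Case I(c) operation on a closed alternating walk at $s_{V(C')}$ itself, which is only enabled after the dual value $\hat\Lambda(V(C'))$ has been strictly increased by some prior Case II step (initially $\hat\Lambda(V(C'))=0$, and edges incident to $V(C')$ enter $E^*$ only as $\hat\Lambda(V(C'))$ grows). Careful tracking of $\B^+,\B^-$ shows $V(C')$ remains in $\B^+$ throughout its exposed phase (it is in $T$) and cannot be moved into $\B^-$ afterwards in the no-I(a)-or-I(b) regime (the $1/2$-cycle created by I(c) provides an even alternating path from $T$ to $V(C')$), so $\hat\Lambda(V(C'))>0$ at termination and hence $\Lambda(\hat C)>0$ for every $\hat C\in\H''$. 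Finally, Lemma~\ref{lem:factorcrit-ident} transfers this positivity to any $\pfc{\H}$-dual optimum $\Psi$: the sets in $\H''$ are maximal in the critical subfamily of positive components of $\Lambda$, and the identity of dual solutions inside jointly-factor-critical sets forces the same top-level structure for $\Psi$, yielding $\Psi(\hat C)>0$. The main obstacle is the dual-tracking step: proving that $\hat\Lambda(V(C'))$ built up in Case II survives to termination without being decreased back to zero, and then lifting this positivity from the procedure-generated $\Lambda$ to an arbitrary $\pfc{\H}$-dual optimum $\Psi$.
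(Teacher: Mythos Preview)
Your overall plan—run the Half-integral Matching Procedure and invoke Lemma~\ref{lem:edmonds-decrease}—is the right one, but the coupling of the procedure's output to $(y,\Psi)$ breaks down because you take $\hat\K$ to be \emph{all} odd cycles of $\hat x$ (equivalently $\K=\H''$). With that choice you cannot establish $z=y$. On the dual side, the procedure only returns $\hat\Lambda$ feasible for $D_\emptyset^{\hat\K}$, which carries no sign constraint on $\hat\K$; hence $\hat\Lambda(V(C'))$ may be negative and the lifted $\Lambda$ need not be feasible for $D_\H(G,c)$. Your argument that $\hat\Lambda(V(C'))$ stays nonnegative relies on Cases~I(a),(b) never firing, which you only know \emph{after} $\o(\hat z)=\o(\hat x)$ is established—so it cannot be used to prove $z=y$ in the first place. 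On the primal side, $y$ is only known to satisfy $y(\delta(S))\ge 1$ for $S\in\H''$, so $y$ need not be feasible for $P_{\H'}^{\H''}$ either; there is no direct bridge between the two programs.

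The paper resolves this by letting $\K:=\{S\in\H'':\Psi(S)>0\}$ be determined by the \emph{target} dual $\Psi$, after first showing (via a minimal-counterexample contraction) that $\Psi\equiv 0$ on $\H'$. Complementary slackness then forces $y(\delta(S))=1$ for every $S\in\K$, so $y$ is feasible to $P_\L^\K$ and one gets $y=z$ by comparing across $P_\K\supseteq P_{\L\cup\K}\supseteq P_\L^\K$ together with~(\ref{prop:uniqueness}). The second assertion is then by contradiction: if $\Psi(\hat C)=0$ for some $\hat C\in\H''$, the corresponding cycle $C$ is disjoint from all sets of $\K$, the ``$\alpha$'' part of Lemma~\ref{lem:edmonds-decrease} forces it to survive in $\supp(z)$, and one obtains $z(\delta(\hat C))=0$, contradicting feasibility of $y=z$ for $P_\H$. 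Your proposed transfer of positivity from $\Lambda$ to an arbitrary $\pfc{\H}$ optimum $\Psi$ via Lemma~\ref{lem:factorcrit-ident} does not work either: that lemma only equates the two duals on sets \emph{strictly inside} a jointly factor-critical $S$; it says nothing about $\Lambda(S)$ versus $\Psi(S)$ themselves, so $\Lambda(\hat C)>0$ does not force $\Psi(\hat C)>0$.
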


\begin{proof}

We first note that Lemma~\ref{lem:extremal-fit} guarantees that $\H$ is a critical family; hence the existence of $\Psi$ is guaranteed.
Further, (\ref{prop:uniqueness}) guarantees the uniqueness of $x$ and $y$.
To prove the lemma by contradiction, consider a counterexample $(G,c,\F)$ with $|V|$ minimal. That is, either $\o(y)>\o(x)$, or $\o(y)=\o(x)$ but $\Psi(S)=0$ for some $S\in \H''$. 
\begin{claim}\label{cl:psi-zero}
$\Psi(S)=0$ for every $S\in \H'$. 
\end{claim}
\begin{proof}
Consider a set $S\in \H'$ with $\Psi(S)>0$. 
We first observe that $\Pi$ and $\Psi$ are feasible solutions to $D_{\H}(G,c)$. Further, if $S\in \H'$, then $S$ is both $(\Pi,\H)$-factor-critical and $(\Psi,\H)$-factor-critical. Hence, by Lemma~\ref{lem:factorcrit-ident}, $\Pi$ and $\Psi$ are identical inside $S$. 

Let us contract the set $S$ with respect to $\Pi$; this is equivalent to contracting with respect to $\Psi$ since $\Pi$ and $\Psi$ are identical inside $S$. Let $\hat G$, $\hat c$, $\hat \F$, $\hat x$, $\hat \Pi$, $\hat \H$, $\hat y$, $\hat \Psi$ denote the respective images to the contracted instance.
Lemma~\ref{lem:contract-factor-critical} guarantees that in the contracted instances, $\hat\Pi$ and $\hat\Psi$ are critical families, $\hat x$ and $\hat y$ are optimal solutions to $P_{\hat \F}(\hat G,\hat c)$ and  $P_{\hat \H}(\hat G,\hat c)$, respectively, and $\o(\hat x)=\o(x)$, $\o(\hat y)=\o(y)$. Furthermore, $\hat\Pi$ is an $\pfc{\hat \F}$ dual optimal solution to $D_{\hat \F}(\hat G,\hat c)$. Now, by taking $\hat \H=\hat \H'\cup \hat \H''$, where the sets $\hat \H'$ and $\hat \H''$ are the ones obtained in steps 2(b) and (c) in Algorithm C-P matching applied for $(\hat G,\hat c, \hat \F)$, we obtain a smaller counterexample. That is, if the counterexample $(G,c,\F)$ is such that $\o(y)>\o(x)$, then $\o(\hat{y})>\o(\hat{x})$ in the contracted instance; if the counterexample $(G,c,\F)$ is such that $\o(y)=\o(x)$ but $\Psi(S)=0$ for some $S\in \H''$, then $\o(\hat{y})=\o(\hat{x})$, $\hat{\Psi}(\hat{S})=0$ and $\hat{S}\in \hat{\H}''$.

\end{proof}

Let us define $\K:=\{S\in \H'': \Psi(S)>0\}$, and  apply the Half-integral Matching Procedure with input $(\H',\K,x,\Pi)$. 
\begin{claim}\label{cl:valid-config}
$(\H',\K,x,\Pi)$ is a valid configuration.
\end{claim}
\begin{proof}
We verify only the nontrivial properties. 
We show that all sets in $\H'$ are disjoint from all sets in $\K$.
Consider a set $S\in \K$; thus $\Psi(S)>0$. Hence $S$ is both $(\Pi,\H)$-factor-critical and $(\Psi,\H)$-factor-critical, and therefore Lemma~\ref{lem:factorcrit-ident} is applicable. Consequently $\Psi$ and $\Pi$ must be identical inside $S$. For the sake of contradiction, assume there exists a set $T\subsetneq S$, $T\in \H'$. Then $\Psi(T)=\Pi(T)>0$, contradicting Claim \ref{cl:psi-zero}.

Thus, the sets in $\K$ are disjoint from each other as well as the ones in $\H'$. This immediately implies that $\supp(x)$ is an odd cycle inside every set in $\K$ since $\K\subseteq \H''$. Finally, by Claim \ref{lem:contract-factor-critical} (iii), we have that $\supp(x)$ inside each set $T$ is a disjoint union of edges and an even path. 
\end{proof}
Let $(\L,\K,z,\Lambda)$ denote the output of the Half-integral Matching Procedure applied to $(\H',\K,x,\Pi)$, where $\L\subseteq \H'$ and $z$ is an optimal solution to \ref{prog:P-F-K}.
Lemma~\ref{lem:edmonds-decrease} implies that $\o(z)\le \o(x)$. 

\[
(\H',\K,x,\Pi) \xrightarrow[\quad\substack{\mbox{Half-integral matching}\\ \mbox{procedure}}\quad]{} (\L,\K,z,\Lambda)
\]

\begin{claim} $y=z$. \end{claim}
\begin{proof}
We know that $y$ is an optimal solution to $P_{\H}(G,c)$. We observe that $y$ is also an optimal solution to $P_{\K}(G,c)$. This follows since by Claim \ref{cl:psi-zero}, we have $S\in \K$ whenever $\Psi(S)>0$.
 Hence, the solution $\Psi$ is feasible to $D_{\K}(G,c)$ and $(y,\Psi)$ satisfy complementary slackness for primal-dual pair of programs $(P_{\K}(G,c),D_{\K}(G,c))$.


Further, $y$ is a feasible solution to $P_{\L}^\K(G,c)$ by the choice of $\K$. We now note that the feasible region of $P_{\L}^\K(G,c)$ is a subset of the feasible region of $P_{\L\cup \K}(G,c)$, which in turn, is a subset of the feasible region of $P_{\K}(G,c)$.
Therefore, $y$ is also optimal to all three programs, and
consequently, $c(y)=c(z)$. This in turn implies that $z$ is also
optimal to $P_{\L\cup \K}(G,c)$. Since $\L\cup\K$ is a critical
family, the uniqueness assumption (\ref{prop:uniqueness}) implies that
$y$ and $z$ must be identical.
\end{proof}

Hence we have $\o(y)\le \o(x)$. Since $(G,c,\F)$ was a counterexample, we must have $\o(y)=\o(x)$ and $\Psi(S)=0$ for some $S\in \H''$. This is equivalent to $\K$ being a strict subset of $\H''$, that is, there exists a cycle $C$ in $\supp(x)$ that is disjoint from all sets in $\K$. 
Let $\hat C=V(C)\cup \left(\cup_{T\in \H':T\cap V(C)\neq \emptyset} T\right)$ denote the set included in $\H''$ corresponding to the cycle $C$ in Algorithm C-P matching. 
The following claim implies that $y(\delta(\hat{C}))=0$ contradicting the feasibility of $y$ for $P_{\H}(G,c)$ and thus completes the proof. 

\begin{claim}\label{cor:non-decreasing-not-executed-cases}
If $\o(x)=\o(z)$, then $z(\delta(\hat C))=0$.
\end{claim}
\begin{proof}
By Lemma \ref{lem:edmonds-decrease}, $C\in \supp(z)$ for otherwise, $\o(z)\le \o(x)-2$. 
Let $C^*$ denote the cycle corresponding to $C$ in the contracted graph at the beginning of the first iteration. 
Consider the case where all sets $T\in \H'$ that intersect $V(C)$ also belong to $\L$. Then all these sets are contracted in the current $G^*$. This implies that $C^*$ is present in $\supp(z^*)$ and hence $z^*(\delta(V(C^*)))=0$. Therefore $z(\delta(\hat C))=0$. 

Consider the earliest iteration such that a set $T\in \H'$ intersecting $V(C)$ leaves $\L$. This means that the dual value on $T$ decreased to zero in this iteration, that is $T^*\in \mathcal{B}^-$. Now, this is possible only if the cycle $C^*$ is absent at the beginning of this iteration. A cycle in the support of the primal solution could have been broken only by the execution of Case I(b). This contradicts Lemma \ref{lem:edmonds-decrease}.
\end{proof}

\end{proof}

\subsection{The non-decreasing scenario}\label{sec:non-decreasing}
In this section we analyze further properties of the Half-integral
Matching procedure to show termination, and also to establish 
structural properties needed for the proof of
Lemma~\ref{lem:strong-progress} in the subsequent section.
If $\o(z)$ is unchanged during a certain number of iterations of
 the procedure, we say that these iterations form a {\em non-decreasing phase}. We say that the procedure
itself is non-decreasing, if $\o(z)$ does not decrease anytime. We show that
every non-decreasing phase may contain at most $|V|+|\F|$ iterations
and therefore the procedure terminates in strongly polynomial time.

Let us now analyze the first non-decreasing phase ${\cal P}$ of the procedure,
starting from the input $x$. These results will also be valid for later
non-decreasing phases.
Consider an intermediate iteration with a valid configuration $(\L,\K,z,\Lambda)$ at the beginning of the iteration. Let $T\subseteq\V^*$ denote the set of
exposed nodes in $z^*$, and $R\supseteq T$ is the set of exposed nodes and the node sets of
the $1/2$-cycles.
Let us define the set of outer/inner nodes of $G^*$ as those
having even/odd length alternating walk from $R$ in $G^*$. Let ${\cal
  N}_o$ and ${\cal N}_i$ denote
their sets, respectively. Clearly, ${\cal B}^+\subseteq {\cal N}_o$,
${\cal B}^-\subseteq
{\cal N}_i$ in Case II of the algorithm.

\begin{lemma}\label{lem:inner-outer}
If $\cal P$  is a non-decreasing phase, then if a node in $\V^*$ is
outer  in any
iteration of phase ${\cal P}$, it remains a node in $\V^*$ and an outer node
in every later iteration of ${\cal P}$.
If a node is inner in any iteration of ${\cal P}$, then in any later iteration
of ${\cal P}$, it is either an inner node, or it has been unshrunk
in an intermediate iteration.
\end{lemma}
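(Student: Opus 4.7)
My plan is to prove the lemma by induction on the iteration count within the non-decreasing phase $\mathcal{P}$, with a case analysis on the iteration type. Lemma~\ref{lem:edmonds-decrease} ensures that throughout $\mathcal{P}$ only Cases I(c) and II are executed (Cases I(a) and I(b) would strictly decrease $\o(z)$), so it suffices to verify both outer-persistence and inner-persistence across a single such iteration.

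For Case II, the primal $z^*$ is unchanged, and the only removal from $\V^*$ is an unshrinking in Case II(b) of a set $S \in \L \cap \mathcal{B}^- \subseteq \mathcal{N}_i$, which is inner---thus outer nodes remain in $\V^*$, and the inner case is exactly as permitted. I would then show that any witness walk $W = v_0 v_1 \cdots v_k$ from $R$ to an outer or inner node remains an alternating walk of tight edges after the dual update. Along $W$, consecutive vertices alternate between $\mathcal{B}^+$ and $\mathcal{B}^-$ (using that no alternating $T$-$R$ walk exists in Case II rules out edges of $W$ with both endpoints on the same side); so every edge of $W$ has one $+\varepsilon$ endpoint and one $-\varepsilon$ endpoint, and its tightness is preserved. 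Case II(a) only enlarges $E^*$, which preserves all alternating walks.

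For Case I(c), the update alternates along the even path $P_1$ from $v_0 \in T$ to the base of the blossom $C$ and sets $z^* \equiv 1/2$ on $C$. After the update, $v_0$ becomes matched while $V(C)$ now forms a $1/2$-cycle in $\supp(z^*)$, so $R' \supseteq (R \setminus \{v_0\}) \cup V(C)$, and no node is unshrunk. To propagate outer-persistence, I take an even alternating walk $W$ from $R$ to any previously outer node $u$: if $W$ avoids $V(P_1) \cup V(C)$ then $W$ itself still witnesses that $u$ is outer, and otherwise I reroute $W$ through its \emph{last} intersection $v$ with $V(P_1) \cup V(C)$. Since every vertex of $V(P_1) \cup V(C)$ was outer before (being reached by an even alternating subwalk of $P_1$ itself) and since $V(C) \subseteq R'$, one can produce an alternating walk from $R'$ to $v$ of the correct (even) parity and splice on the untouched suffix of $W$ from $v$ to $u$; the parity works out because $P_1$ has even length. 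The argument for inner-persistence is analogous.

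The main obstacle will be making the rerouting in Case I(c) fully rigorous for every position where an old walk meets $V(P_1) \cup V(C)$, especially when the old walk traverses the blossom whose edges are now $1/2$-edges and therefore unavailable as $0$- or $1$-edges in any new alternating walk. Choosing $v$ as the \emph{last} such intersection, together with the inclusion $V(C) \subseteq R'$, keeps the rerouting local and ensures the parity of the rerouted walk agrees with that of $W$. Once the single-iteration statement is verified, induction on the length of $\mathcal{P}$ yields the lemma.
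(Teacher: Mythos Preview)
Your overall approach---induction on iterations, ruling out Cases I(a)/(b) via Lemma~\ref{lem:edmonds-decrease}, then case analysis on I(c) and II---matches the paper's, and your treatment of Case I(c) is essentially the same idea as the paper's (the paper states it more tersely as ``a walk from $v_0$ becomes a walk from $v_{2\ell}\in R$''). However, your Case~II argument has two genuine gaps.

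First, your claim that ``along $W$, consecutive vertices alternate between ${\cal B}^+$ and ${\cal B}^-$'' is false when the witness walk $W$ starts from a node of $R\setminus T$, i.e., from a $\tfrac12$-cycle rather than an exposed node. The sets ${\cal B}^+,{\cal B}^-$ are defined via alternating paths from $T$, so a node reachable only from a $\tfrac12$-cycle need not lie in ${\cal B}^+\cup{\cal B}^-$ at all; in fact, since we are in Case~II (no $T$--$R$ walk exists), the starting vertex $v_0\in R\setminus T$ cannot be in ${\cal B}^+\cup{\cal B}^-$. The paper handles this separately: it argues that walks witnessing nodes in ${\cal N}_o\setminus{\cal B}^+$ or ${\cal N}_i\setminus{\cal B}^-$ are \emph{entirely disjoint} from ${\cal B}^+\cup{\cal B}^-$ (otherwise one could concatenate to obtain a $T$--$R$ alternating walk, forcing Case~I(b)), so no dual on them changes and tightness is trivially preserved.

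Second, you do not address what happens to a witness walk $W$ that \emph{passes through} the unshrunk set $S$ in Case~II(b). After unshrinking, $S$ is no longer a node of $G^*$, so $W$ is no longer a walk in the new graph; the tightness of its edges is irrelevant. The paper fills this gap using the $(\Lambda,\L\cup\K)$-factor-critical property of $S$: if $u_1v_1$ is the $1$-edge and $u_2v_2$ the $0$-edge of $W$ incident to $S$ (with $u_1,u_2\in S$), then the symmetric difference of the $\Lambda$-critical matchings $M_{u_1}$ and $M_{u_2}$ contains an even alternating tight path inside $S$ from $u_1$ to $u_2$, which is spliced into $W$. Without invoking factor-criticality here, the inner/outer status of nodes beyond $S$ cannot be recovered.
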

\begin{proof}
Since ${\cal P}$ is a non-decreasing phase, Cases I(a) and (b) can never be
executed. We show that the claimed properties are maintained during an iteration.

In Case I(c), a new odd cycle $C$ is created, and thus the vertex set of $C$ is added to
$R$. Let $P_1=v_0\ldots v_{2\ell}$ denote the even alternating path with $v_0\in
T$, $v_{2\ell}\in C$. If a node $u\in \V^*$ had an even/odd alternating walk from
$v_0$ before changing the solution, it will have an even/odd walk alternating from
$v_{2\ell}\in R$ after changing the solution.

In Case II, the alternating paths from $T$
to the nodes in ${\cal B}^-$ and ${\cal B}^+$ are maintained when the duals are
changed. The only nontrivial case is when a set $S$ is
unshrunk; then all inner and outer nodes maintain their inner and outer property for the
following reason:
if $u_1v_1$ is a 1-edge and $u_2v_2$ is a 0-edge entering $S$ after
unshrinking, with $u_1,u_2\in S$, we claim that there exists an even
alternating path inside $S$ from $u_1$ to $u_2$ using only tight edges
wrt $\Lambda$.
Indeed, during the unshrinking, we modify $z$ to $M_{u_1}$ inside $S$.
Also, by the $\Lambda$-factor-critical property, all edges of
$M_{u_2}$ are tight w.r.t. $\Lambda$. Hence the symmetric difference
of $M_{u_1}$ and $M_{u_2}$ contains an even alternating path from $u_1$ to
$u_2$.

We have to check that vertices in ${\cal N}_o-{\cal B}^+$ and ${\cal N}_i-{\cal
  B}^-$ also maintain their outer and inner property. These are the nodes having
  even/odd
alternating paths from an odd cycle, but not from exposed
nodes. The nodes in these paths are disjoint from ${\cal B}^-\cup
{\cal B}^+$ and are thus
maintained. Indeed, if $({\cal B}^-\cap {\cal N}_o)\setminus{\cal
  B}^+\neq\emptyset$ or $({\cal B}^+\cap {\cal N}_i)\setminus{\cal
  B}^-\neq\emptyset$,  then we would get an alternating walk from $T$ to an odd cycle,
  giving the forbidden Case I(b).
\end{proof}

The termination of the algorithm is guaranteed by the following simple corollary.
\begin{corollary}\label{cor:half-integral-primal-dual-termination}
The non-decreasing phase $\cal P$ may consist of at most $|V|+|\F|$ iterations.
\end{corollary}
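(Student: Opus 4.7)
The plan is to partition the iterations of a non-decreasing phase $\cal P$ by case and bound each type separately, using the number of outer nodes as the main potential. By Lemma~\ref{lem:edmonds-decrease}, Cases I(a) and I(b) never execute in $\cal P$, so every iteration is of type I(c), II(a), or II(b). Each Case II(b) iteration removes one set from $\L$ and no set is ever added to $\L$ during the procedure, so at most $|\F|$ iterations are of type II(b).

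For Cases I(c) and II(a), I would track $|{\cal N}_o|$, the number of outer nodes of $\V^*$. By Lemma~\ref{lem:inner-outer}, once a node of $\V^*$ becomes outer it remains in $\V^*$ and outer for the rest of $\cal P$, so $|{\cal N}_o|$ is monotonically non-decreasing. Since each node of $\V^*$ represents a nonempty disjoint subset of $V$ we have $|{\cal N}_o|\le|V|$ throughout, and since $T\subseteq{\cal N}_o$ and $|T|\ge 1$ throughout $\cal P$ (otherwise the procedure terminates), the total growth of $|{\cal N}_o|$ over $\cal P$ is at most $|V|-1$. I would then show that every Case I(c) iteration and every Case II(a) iteration except possibly the final one strictly increases $|{\cal N}_o|$. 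For Case I(c), the blossom $C$ has odd length at least three with vertices alternating outer/inner around the cycle and two adjacent outer vertices where the walk self-intersects, so $C$ contains at least one inner vertex; after the modification every vertex of $C$ joins $R$ and hence becomes outer. For Case II(a), the newly tight $0$-edge $uw$ arises only when $\Lambda(u)+\Lambda(w)$ strictly increases under the dual update, forcing (up to symmetry) $u\in{\cal B}^+$ and $w\notin{\cal B}^-$; if $w\in{\cal B}^+$ then a $T$-$T$ alternating walk appears and the next iteration executes Case I(a), ending $\cal P$, so this subcase can occur at most once; otherwise $w\notin{\cal B}^+\cup{\cal B}^-$ becomes reachable by an odd alternating walk and its match (which exists because every exposed node already lies in $T\subseteq{\cal B}^+$) becomes a new outer node.

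Combining the three counts gives at most $|V|-1$ growing iterations of types I(c) and II(a), at most one possible final non-growing II(a) iteration, and at most $|\F|$ iterations of type II(b), yielding the claimed bound of $|V|+|\F|$. The main obstacle is the Case II(a) reachability/parity analysis showing that the new tight edge either pushes a previously unreached vertex into the alternating forest (producing a new outer node) or immediately triggers Case I(a) and terminates $\cal P$; the blossom parity argument for Case I(c) and the straightforward counting for Case II(b) are comparatively direct.
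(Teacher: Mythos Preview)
Your argument for Case I(c) has a genuine gap. You claim that the blossom $C$ contains a vertex that was inner but not outer before the modification, so that $|{\cal N}_o|$ strictly increases. This is false: every vertex of the blossom is already outer. The base $b$ of $C$ lies at the end of the even alternating path $P_1$ from $v_0\in T\subseteq R$, and from $b$ each other vertex $c_j$ of the odd cycle is reached by alternating walks of lengths $j$ and $|C|-j$ around the two directions of the cycle, both of which extend $P_1$ into valid alternating walks from $R$. Since $|C|$ is odd these two lengths have opposite parities, so $c_j\in{\cal N}_o\cap{\cal N}_i$ already before the modification. Hence $|{\cal N}_o|$ does not increase in Case I(c), and your potential stalls precisely on the iterations you need it to move.

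Your Case II(a) analysis also slips: when both endpoints of the newly tight edge lie in ${\cal B}^+$, the resulting $T$--$T$ alternating walk in the next iteration need not be a path, so the next iteration may execute I(c) rather than I(a) and $\cal P$ need not end.

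The paper's proof accounts for Case I differently. Each Case I iteration (necessarily I(c) in a non-decreasing phase, by Lemma~\ref{lem:edmonds-decrease}) decreases the number of exposed nodes $|T|$ by one, so there are at most $|\K|$ such iterations. For Case II it argues that either ${\cal N}_i$ is extended, which by Lemma~\ref{lem:inner-outer} can happen at most $|V|-|\K|$ times, or a set is unshrunk, at most $|\F|$ times. Summing gives $|\K|+(|V|-|\K|)+|\F|=|V|+|\F|$. The key structural difference is that the paper tracks $|T|$ (which genuinely drops in I(c)) rather than $|{\cal N}_o|$ (which does not).
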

\begin{proof}
Case I may occur at most $|\K|$ times as it decreases the number of
exposed nodes. In Case II, either ${\cal N}_i$ is extended, or a set
is unshrunk. By Lemma \ref{lem:inner-outer}, the first scenario may
occur at most $|V|-|\K|$ times and the
second at most $|\F|$ times.
\end{proof}

In the rest of the section, we focus on the case when the entire
procedure is non-decreasing.
\begin{lemma}\label{lem:edmonds-equal}
Assume the Half-integral Matching procedure is non-decreasing with input $(\F,\K,x,\Pi)$ and output
$(\L,\K,z,\Lambda)$.
Let ${\cal N}_o$ and ${\cal N}_i$ denote the final sets of outer and inner nodes in
$G^*$.
\begin{itemize}
\item If $\Lambda(S)>\Pi(S)$ then $S$ is an outer node in $\V^*$.
\item If $\Lambda(S)<\Pi(S)$, then either $S\in \F\setminus\L$,  (that is,
  $S$ was unshrunk during the algorithm and $\Lambda(S)=0$)
or
$S$ is an inner node in $\V^*$, or $S$ is a node in
  $\V^*$ incident to an odd cycle in $\supp(z)$.
\end{itemize}
\end{lemma}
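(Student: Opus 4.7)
The plan is to track the evolution of $\Lambda$ across all iterations of the procedure. The dual is modified only in Case~II, where for some $\varepsilon>0$ we set $\Lambda(S):=\Lambda(S)+\varepsilon$ if $S\in\mathcal{B}^+\subseteq\N_o$, and $\Lambda(S):=\Lambda(S)-\varepsilon$ if $S\in\mathcal{B}^-\subseteq\N_i$; Case~I leaves $\Lambda$ untouched. Note that $\mathcal{B}^+\cap\mathcal{B}^-=\emptyset$ in Case~II, since an intersection would produce an alternating $T$-$R$ walk of even length concatenated with an odd one, giving a $T$-$T$ walk which is handled in Case~I. Because the procedure is non-decreasing, Lemma~\ref{lem:inner-outer} applies: a node of $\V^*$ that is outer in some iteration remains in $\V^*$ and outer thereafter, and a node that is inner in some iteration stays inner unless it is unshrunk.

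For the first bullet, suppose $\Lambda(S)>\Pi(S)$ at termination. The net change $\Lambda(S)-\Pi(S)$ is the sum of $+\varepsilon_k$, $-\varepsilon_k$ and $0$ contributions across the Case~II iterations; since the sum is strictly positive, at least one iteration contributed $+\varepsilon_k$. In that iteration $S\in\mathcal{B}^+\subseteq\N_o$, so $S$ was an outer node of $\V^*$ then. I also need to rule out that $S$ was later unshrunk: had it been, $\Lambda(S)=0$ at termination, forcing $\Pi(S)<0$, which is impossible since unshrinking applies only to sets $S\in\F$ where $\Pi(S)\ge 0$ by feasibility. Therefore $S$ is still a node of the final $\V^*$, and Lemma~\ref{lem:inner-outer} lifts the outer status from the earlier iteration to the final one.

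For the second bullet, suppose $\Lambda(S)<\Pi(S)$. Symmetrically, some iteration decreased $\Lambda(S)$, i.e., $S\in\mathcal{B}^-\subseteq\N_i$ in that iteration, so $S$ was inner then. Lemma~\ref{lem:inner-outer} offers two sub-cases: either $S$ was unshrunk in a subsequent iteration, yielding the first alternative $S\in\F\setminus\L$ with $\Lambda(S)=0$ by the unshrinking rule in step~(b) of Case~II; or $S$ remains a node of the final $\V^*$ and is inner there. In the latter case, the procedure terminates with $T=\emptyset$, so the final $R$ consists exactly of the nodes incident to $1/2$-cycles of $\supp(z^*)$, and these cycles are odd because $z^*$ is proper-half-integral (Claim~\ref{cl:z-star}). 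Splitting the ``remains inner'' outcome according to whether $S\in R$ or $S\notin R$ gives the third and second alternatives of the lemma, respectively.

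The main substantive content is Lemma~\ref{lem:inner-outer}, which has already been proved; the rest is a short accounting argument on the sign of the cumulative dual change. The subtleties I anticipate are merely checking that unshrinking excludes the case $\Lambda(S)>\Pi(S)$, and identifying the final $R$ with odd-cycle incidence once $T$ becomes empty at termination.
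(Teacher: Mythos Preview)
Your proof is correct and follows essentially the same route as the paper: observe that $\Lambda(S)$ changes only via membership in $\mathcal{B}^+$ or $\mathcal{B}^-$ during Case~II, locate an iteration where $S$ lay in the appropriate set, and then invoke Lemma~\ref{lem:inner-outer} to carry the outer/inner status forward to termination. The paper's own proof is a terse two-line version of exactly this argument, so you have simply unpacked it.

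Two minor redundancies worth noting. In the first bullet, your check that $S$ cannot be unshrunk is unnecessary: Lemma~\ref{lem:inner-outer} already asserts that an outer node \emph{remains a node of $\V^*$}, so it is never unshrunk (indeed, in the proof of that lemma one sees $\mathcal{B}^-\cap\N_o=\emptyset$ in Case~II, so outer nodes never have their dual decreased). In the second bullet, your final split according to $S\in R$ versus $S\notin R$ is also not needed: once Lemma~\ref{lem:inner-outer} delivers ``$S$ is inner at termination'', that is already the second alternative of the lemma verbatim, and the three alternatives in the statement are disjunctive, not required to be hit separately. Neither redundancy is an error.
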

\begin{proof}
If $\Lambda(S)>\Pi(S)$, then $S\in {\cal B}^+$ in some
iteration of the algorithm. By Lemma~\ref{lem:inner-outer}, this
remains an outer node in all later iterations. The conclusion follows
similarly for $\Lambda(S)<\Pi(S)$.
\end{proof}

\begin{lemma}\label{lem:inner-bigger}
Assume the Half-integral Matching procedure is non-decreasing. Let $(\L,\K,z,\Lambda)$ be 
the terminating valid configuration, and let 
$G^*$ be the
corresponding contracted graph, ${\cal N}_o$ and ${\cal N}_i$ be the sets of outer
and inner nodes. Let $\Theta:\V^*\rightarrow\R$ be an arbitrary optimal
solution to the dual $D_0(G^*,c^*)$ of the bipartite relaxation. If $S\in \V^*$ is
incident to an odd cycle in $\supp(z)$, then $\Lambda(S)=\Theta(S)$.
Further $S\in {\cal N}_o$ implies
$\Lambda(S)\le \Theta(S)$, and $S\in {\cal N}_i$ implies $\Lambda(S)\ge \Theta(S)$.
\end{lemma}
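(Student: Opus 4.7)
The plan is to run an induction on $\ell(S)$, the length of a shortest alternating walk from $R$ to $S$ in $G^*$. First I would observe that at termination, $z^*$ is feasible to $P_0(G^*,c^*)$: every node in $\V^*$ has degree exactly $1$ in $z^*$ (property (B), using that there are no exposed nodes at termination, so $R$ consists only of vertices of odd $\frac12$-cycles). Since every edge in $\supp(z^*)$ is tight for $\Lambda$ by property (C), the pair $(z^*,\Lambda|_{\V^*})$ satisfies complementary slackness and is a primal-dual optimum for $(P_0(G^*,c^*),D_0(G^*,c^*))$. In particular $\Lambda|_{\V^*}$ and $\Theta$ are both dual optimal, and complementary slackness between $\Theta$ and $z^*$ yields that every edge with $z^*(e)>0$ is also tight for $\Theta$.

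For the base case $\ell(S)=0$, the node $S$ lies on an odd cycle $C$ in $\supp(z^*)$. Each edge of $C$ carries weight $\frac12$ in $z^*$ and hence is tight for both $\Lambda$ and $\Theta$. Thus for consecutive $u,v$ on $C$ we have $\Lambda(u)+\Lambda(v)=c^*(uv)=\Theta(u)+\Theta(v)$, i.e.\ $\Lambda(u)-\Theta(u)=-(\Lambda(v)-\Theta(v))$. Walking once around $C$, whose length is odd, yields $\Lambda(S)-\Theta(S)=-(\Lambda(S)-\Theta(S))$, so $\Lambda(S)=\Theta(S)$.

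For the induction step, take $S$ with $\ell(S)=k+1\ge 1$ and a shortest alternating walk $v_0v_1\ldots v_{k+1}=S$ with $v_0\in R$; let $U=v_k$, so $\ell(U)\le k$. Note that since every vertex of an odd $\frac12$-cycle has its two incident cycle edges as $\frac12$-edges and no $1$-edge in $z^*$, any alternating walk from $R$ must begin with a $0$-edge, so the edge $v_iv_{i+1}$ in position $i+1$ is a $0$-edge when $i+1$ is odd and a $1$-edge when $i+1$ is even. Consider the two cases:
\begin{itemize}
\item $S\in\mathcal{N}_o$, i.e.\ $k+1$ even: then $US$ is a $1$-edge in $z^*$, so it is tight for both $\Lambda$ (property (C)) and $\Theta$ (complementary slackness), giving $\Lambda(U)+\Lambda(S)=\Theta(U)+\Theta(S)$. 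Since $U\in\mathcal{N}_i$, induction gives $\Lambda(U)\ge\Theta(U)$, hence $\Lambda(S)\le\Theta(S)$.
\item $S\in\mathcal{N}_i$, i.e.\ $k+1$ odd: then $US$ is a $0$-edge, but $US\in E^*$ forces $\Lambda(U)+\Lambda(S)=c^*(US)$, while dual feasibility of $\Theta$ only gives $\Theta(U)+\Theta(S)\le c^*(US)$. Since $U\in\mathcal{N}_o$, induction gives $\Lambda(U)\le\Theta(U)$ (with equality if $U$ lies on an odd cycle), so combining yields $\Lambda(S)\ge\Theta(S)$.
\end{itemize}
This handles both inequalities simultaneously and, together with the base case, completes the proof.

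The main obstacle is the bookkeeping of the parity-to-edge-type correspondence along alternating walks when the walks start at vertices of odd $\frac12$-cycles (rather than exposed nodes); but once one observes that the only edges in $E^*$ incident to such a cycle vertex that are not in the cycle are $0$-edges, the induction proceeds uniformly.
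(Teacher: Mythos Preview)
Your proof is correct and follows the same induction on $\ell(S)$ as the paper. The only notable difference is in the base case: you give a direct parity argument around the odd cycle (using that all its edges are tight for both $\Lambda$ and $\Theta$), whereas the paper invokes Lemma~\ref{lem:factorcrit-ident}; your version is slightly more self-contained but amounts to the same computation.
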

\begin{proof}
For $S\in {\cal N}_o\cup {\cal N}_i$, let $\ell(S)$ be the length of the shortest
alternating path. The proof is by induction on $\ell(S)$.
Recall that there are no exposed nodes in $z$, hence $\ell(S)=0$ means that $S$ is
contained in an odd cycle $C$. Then $\Theta(S)=\Lambda(S)$ is a consequence of
Lemma~\ref{lem:factorcrit-ident}: both $\Theta$
and $\Lambda$ are optimal dual solutions in $G^*$, and an odd cycle in the
support of the primal optimum $z$ is both $\Lambda$-factor-critical and
$\Theta$-factor-critical.

For the induction step, assume the claim for
$\ell(S)\le i$. Consider a node $U\in\V^*$ with $\ell(U)=i+1$.
There must be an edge $f$ in $E^*$ between $S$ and $U$ for some $S$ with $\ell(S)=i$.
This is a 0-edge if $i$ is even and a 1-edge if $i$ is odd.

Assume first $i$ is even.
By induction, $\Lambda(S)\le \Theta(S)$. The edge $f$ is tight for $\Lambda$, and
$\Theta(S)+\Theta(U)\le c^*(f)$. Consequently, $\Lambda(U)\ge \Theta(U)$ follows.
Next, assume $i$ is odd. Then $\Lambda(S)\ge \Theta(S)$ by induction. Then,
$\Lambda(U)\le \Theta(U)$ follows as $f$ is tight for both $\Lambda$ and $\Theta$.
\end{proof}

\newcommand{\out}{\mbox{Outer}}

\subsection{Proof of Lemma~\ref{lem:strong-progress}}
We follow a similar strategy to prove Lemma~\ref{lem:strong-progress} as for Lemma~\ref{lem:odd-cycles}: we formulate an analogue of 
Lemma~\ref{lem:inductive-odd}. For that, we need the notion of outer nodes with respect to intermediate solutions in Algorithm C-P-matching.
Let $\F$ be a critical family, and let $x$ be an optimal solution to
\ref{prog:P-F}, and $\Pi$ an $\pfc{\F}$ dual optimal solution to
\ref{prog:D-F}. Let us define the sets $\H'$ and $\H''$ as in steps 2(b) and (c) in Algorithm C-P-matching (Algorithm~\ref{alg:main-alg}),
and let $\H=\H'\cup\H''$. 

Let us consider the graph $G^*=(\V^*,E^*)$ obtained by first deleting all edges that are non-tight wrt $\Pi$ and contracting all maximal sets of $\H'\cup \H''$ wrt $\Pi$. We observe that the resulting instance is identical to the contracted graph obtained in the execution of the Half-integral Matching procedure using the configuration $(\H',\H'',x,\Pi)$. Even though the configuration may not be a valid one, the contraction operation and the outer property of the nodes in $G^*$ are well-defined. 
Let $\out(x,\Pi,\H)$ denote the sets in $\H$ contained in the outer nodes of $G^*$, that is,
\[
\out(x,\Pi,\H):=\{S\in \H: \exists T\in \V^*, T\mbox{ is an outer node in }G^*, S\subseteq T\mbox{ and }\Pi(S)>0\}.
\]
Observe that $\H''\subseteq \out(x,\Pi,\H)$ since every set in $\H''$ is exposed in $G^*$. 

\begin{lemma}\label{lem:inductive-outer}
Assume (\ref{prop:uniqueness}) holds, and use the notation $x,\Pi,\F,\H$ as above. Let $y$ be an optimal solution to $P_{\H}(G,c)$ and let $\Psi$ be a $\Pi$-extremal optimal solution to $D_{\H}(G,c)$. Let $\H_{\text{next}}$ denote the next round of cuts to be imposed by Algorithm C-P-matching from $(y,\Psi)$. If $\o(y)=\o(x)$ then $\out(x,\Pi,\H)\subseteq \out(y,\Psi,\H_{\text{next}})$.
\end{lemma}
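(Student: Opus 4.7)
The plan is to apply the Half-integral Matching procedure of Algorithm~\ref{alg:half-int} to the input $(\H', \H'', x, \Pi)$, mimicking the proof of Lemma~\ref{lem:inductive-odd}. Since $\o(y) = \o(x)$, Lemma~\ref{lem:inductive-odd} gives $\Psi(S) > 0$ for every $S \in \H''$, so the tuple is a valid configuration (cf.\ Claim~\ref{cl:valid-config}). Let $(\L, \H'', z, \Lambda)$ denote the terminal output. Exactly as in Lemma~\ref{lem:inductive-odd}, the primal $y$ is feasible (and hence optimal) to $P_{\L}^{\H''}(G, c)$ via $y(\delta(S)) = 1$ for $S \in \H''$, and uniqueness of the optimum of $P_{\L \cup \H''}(G, c)$ under (\ref{prop:uniqueness}) forces $z = y$. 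Since the cycle count is unchanged throughout, Lemma~\ref{lem:edmonds-decrease} prevents Cases I(a) and I(b) from firing, so the execution is non-decreasing, and Lemma~\ref{lem:inner-outer} guarantees that every initial outer node of $\V^*$ stays in $\V^*$ and outer at termination.

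Fix $S \in \out(x, \Pi, \H)$. Since $\Pi$ vanishes on $\H''$, we have $S \in \H'$. Let $T$ be the maximal set of $\H$ containing $S$; $T$ is a node of the initial $\V^*$ that starts outer and therefore remains outer in $\V^*$ at termination, because outer nodes are never unshrunk and sets of $\K = \H''$ are never unshrunk at all. The proof now splits into two claims: (a) $\Psi(S) > 0$, which places $S$ in $\H_{\text{next}}'$; and (b) $S$ lies inside an outer node of the initial contracted graph $G^{**}$ of the next iteration. I would prove both claims simultaneously by induction on $|V|$, considering a minimum counterexample $(G, c, \F)$. If there is a set $Q \in \H'$ with $\Psi(Q) > 0$ that is either disjoint from $S$ or strictly contained in $S$, then Lemma~\ref{lem:factorcrit-ident} forces $\Pi$ and $\Psi$ to coincide inside $Q$; contracting $Q$ with respect to $\Pi$ (equivalently, with respect to $\Psi$) via Lemma~\ref{lem:contract-factor-critical} yields a strictly smaller counterexample with the same initial contracted graphs $G^*$ and $G^{**}$, the same cycle counts, and the counterexample property preserved at $S$ or its image. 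Hence in a minimum counterexample $\Psi(Q) = 0$ for every such $Q$, and in particular $\Psi(S) = 0$.

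To derive the contradiction, I would use the $\Pi$-extremality of $\Psi$ as the minimizer of $h(\cdot, \Pi)$ in \ref{prog:D-star-F-i}. By Lemma~\ref{lem:edmonds-equal} at termination of the procedure, $\Lambda(S) \ge \Pi(S) > 0$, because $S$ is outer and not incident to any odd cycle in $\supp(z) = \supp(y)$. A perturbation of $\Psi$ moving it towards $\Lambda$ on the sets strictly inside $S$ and on $S$ itself---the convex-combination modification of (\ref{eq:q-mod}) used in the proof of Lemma~\ref{lem:extreme}---produces a dual feasible $\Psi'$ with $\Psi'(S) > 0$ and strictly smaller $h(\cdot, \Pi)$, contradicting extremality; this establishes claim (a). Claim (b) then follows because, once $\Psi(S) > 0$, the outer node of $G^{**}$ containing $S$ is the image of $T$ under the contractions defining $G^{**}$. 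The main obstacle is the perturbation step: $\Lambda$ is only feasible to $D_{\L}^{\H''}(G, c)$ rather than to $D_{\H}(G, c)$, so the modification must be confined to $\Psi$-values on $S$ and on sets strictly inside $S$, and consistency (Lemma~\ref{lem:consistency-main}) has to be used together with the tightness of $y$ on $\delta(S)$ to verify that $\Psi'$ remains dual feasible to $D_{\H}(G, c)$ and that $S$ belongs to $\H_y$, so that the modification is admissible for \ref{prog:D-star-F-i}.
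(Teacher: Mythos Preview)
Your setup—running the Half-integral Matching procedure on $(\H',\H'',x,\Pi)$ and exploiting $\Pi$-extremality of $\Psi$—matches the paper, and you have correctly located the crux. But the obstacle you flag is not real, and the paper dissolves it rather than working around it.

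The paper first takes a minimum counterexample $(G,c,\F)$ and reproves Claim~\ref{cl:psi-zero} in full: $\Psi(Q)=0$ for \emph{every} $Q\in\H'$, not just those disjoint from or strictly inside your fixed $S$. You need the full statement; otherwise Claim~\ref{cl:valid-config} (specifically, that no set of $\H'$ lies inside a set of $\H''$) is not justified, and later you cannot conclude $\supp(\Psi)=\H''$. Your restricted contraction argument does not handle sets $Q\in\H'$ with $S\subsetneq Q$.

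The decisive point is that $\Lambda$ \emph{is} feasible to $D_{\H}(G,c)$, not merely to $D_{\L}^{\H''}(G,c)$. The only missing inequality is $\Lambda(S)\ge 0$ for $S\in\H''$, and every such $S$ is an outer node from the first iteration, hence by Lemma~\ref{lem:inner-outer} stays outer and its $\Lambda$-value never drops below $\Pi(S)=0$; this is Claim~\ref{claim:lambda-0-dual-opt}. Once $\Lambda$ is optimal to $D_{\H}(G,c)$, no local perturbation is needed: the paper compares $\Lambda$ to $\Psi$ globally by showing $|\Lambda(S)-\Pi(S)|\le|\Psi(S)-\Pi(S)|$ for every $S\in\V\cup\H$, with equality only if $\Lambda(S)=\Psi(S)$ (Claim~\ref{claim:psi-smaller}, proved via Lemmas~\ref{lem:edmonds-equal} and \ref{lem:inner-bigger} together with Claim~\ref{claim:theta}). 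This yields $h(\Lambda,\Pi)\le h(\Psi,\Pi)$, and extremality forces $\Lambda=\Psi$ (Lemma~\ref{lem:dual-identity}). Consequently $\L=\emptyset$, and for every $S\in\out(x,\Pi,\H)$ one gets $\Psi(S)=\Lambda(S)\ge\Pi(S)>0$ directly, with $S$ still outer in the terminal $G^*$. Your claims (a) and (b) then follow at once, without any perturbation of $\Psi$.
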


\begin{proof}[Proof of Lemma~\ref{lem:strong-progress}]
Using the notation of the lemma, let $\F_k$ be the set of constraints in iteration $k$, and $\H''_k=\F_{k+1}\setminus\F_k$. Note that 
by definition $\H''_k\subseteq \out(x_k,\Pi_k, \F_{k+1})$. Also, by Lemma~\ref{lem:inductive-outer}, if a sets enters $\out(x_k,\Pi_k, \F_{k+1})$ then it remains in $\out(x_\ell,\Pi_\ell, \F_{\ell+1})$ for all $\ell>k$ if all iterations between the $k$'th and the $\ell$'th are non-decreasing. Since $\Pi_\ell(S)>0$ for all 
sets $S\in \out(x_\ell,\Pi_\ell, \F_{\ell+1})$, all cuts corresponding to sets in $\out(x_\ell,\Pi_\ell, \F_{\ell+1})$ will be imposed in the $\ell$'th iteration, as required.
\end{proof}

The rest of the section is dedicated to proving Lemma~\ref{lem:inductive-outer}. For a contradiction, let us choose a counterexample $(G,c,\F)$ with $|V|$ minimal.
By Lemma~\ref{lem:inductive-odd}, $\Psi(S)>0$ for every $S\in \H''$. By taking $\K=\H''$, Claims \ref{cl:psi-zero} and \ref{cl:valid-config}, and their proofs hold verbatim. Hence the support of $\Psi$ is identical to $\H''$. 
Hence, $(\H',\H'',x,\Pi)$ is a valid configuration. Let us run the Half-Integral Matching Procedure with input  $(\H',\H'',x,\Pi)$, terminating with $(\L,\H'',z,\Lambda)$.
As in the proof of Lemma~\ref{lem:inductive-odd}, we have $y=z$. The key part of the proof is showing that $\Lambda$ also coincides with $\Psi$.

\begin{lemma}\label{lem:dual-identity}
We have $\Lambda=\Psi$.
\end{lemma}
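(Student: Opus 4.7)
The plan is to compare $\Lambda$ and $\Psi$ by viewing both as dual optima of the bipartite relaxation on the terminal contracted graph $G^*$, then invoke Lemma~\ref{lem:inner-bigger} for pointwise comparisons, and finally use the $\Pi$-extremality of $\Psi$ to upgrade these into equalities.

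First I would extend $\Lambda$ to $\V\cup\H$ by setting $\Lambda(S):=0$ for $S\in\H'\setminus\L$ and check that the extension is feasible and optimal for $D_{\H}(G,c)$: feasibility follows from feasibility of $\Lambda$ in $D^{\H''}_{\L}(G,c)$, and optimality from complementary slackness with $z=y$ (which equals the unique optimum of $P_{\H}(G,c)$). Because the procedure never modifies dual values strictly inside contracted sets, $\Lambda$ agrees with $\Pi$ inside every maximal set of $\L\cup\H''$; hence by Lemma~\ref{lem:contract-factor-critical}, the restriction of $\Lambda$ to $\V^*$ is a dual optimum of $D_0(G^*,c^*)$. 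For $\Psi$, I would apply Lemma~\ref{lem:consistency-main} to conclude that $\Psi$ is consistent with $\Pi$ inside every maximal contracted set $S\in\L\cup\H''$: the hypothesis $y(\delta(S))=1$ holds for $S\in\L$ by condition (C) of the terminating valid configuration and for $S\in\H''$ by $\Psi(S)>0$ and complementary slackness. Lemma~\ref{lem:contract-consistent} then yields that the contracted image $\hat\Psi$, defined by $\hat\Psi(s):=\Psi(S)-\Delta_{\Pi,\Psi}(S)$ on each contracted node $s$ and $\hat\Psi(v):=\Psi(v)$ on singletons, is also a dual optimum of $D_0(G^*,c^*)$.

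Lemma~\ref{lem:inner-bigger} applied with $\Theta:=\hat\Psi$ then gives $\Lambda(s)=\hat\Psi(s)$ at every node incident to an odd cycle of $\supp(z^*)$, $\Lambda(s)\le\hat\Psi(s)$ at outer nodes, and $\Lambda(s)\ge\hat\Psi(s)$ at inner nodes. The main obstacle, and the technical heart of the argument, is turning these one-sided inequalities into pointwise equalities. I would proceed by contradiction: if equality fails at some node corresponding to a maximal set $S\in\L\cup\H''$ (or a singleton of $\V^*$), I would construct a perturbation $\bar\Psi$ of $\Psi$ by shifting its values on the subsets of $S$ toward $\Lambda$, in the style of Lemma~\ref{lem:bar-q} and the proof of Lemma~\ref{lem:extreme}. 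Verifying that $\bar\Psi$ remains feasible and optimal to $D_{\H}(G,c)$ while strictly decreasing $h(\cdot,\Pi)$ would contradict the $\Pi$-extremality of $\Psi$.

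Once $\Lambda=\hat\Psi$ on $\V^*$ is secured, I would read off both $\L=\emptyset$ and $\Delta_{\Pi,\Psi}(S)=0$ for every $S\in\H''$: a set $S\in\L$ would contract to $s$ with $\Lambda(s)=\Lambda(S)>0$ while $\hat\Psi(s)=-\Delta_{\Pi,\Psi}(S)\le 0$, so equality forces $\Lambda(S)=0$, contradicting $S\in\L$; and for $S\in\H''$ the identity $\Lambda(S)=\Psi(S)-\Delta_{\Pi,\Psi}(S)$ combined with $\Lambda(S)=\Psi(S)$ (deducible from the equal objective values and the vanishing of $\Psi$ on $\H'$ from Claim~\ref{cl:psi-zero}) forces $\Delta_{\Pi,\Psi}(S)=0$, meaning $\Psi$ agrees with $\Pi$ inside $S$. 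Combined with $\Lambda=\Pi$ inside every contracted set and $\Lambda(v)=\Psi(v)$ on remaining singletons of $\V^*$, this globally yields $\Lambda=\Psi$.
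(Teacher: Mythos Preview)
Your setup through step~4 matches the paper's Claims~\ref{claim:lambda-0-dual-opt} and~\ref{claim:theta} (note that since $\Lambda$ and $\Pi$ coincide inside every contracted set, your $\Delta_{\Pi,\Psi}$ equals the paper's $\Delta_{\Lambda,\Psi}$, so your $\hat\Psi$ is exactly the paper's $\Theta$). The gap is in step~5, where you propose to upgrade the one-sided inequalities of Lemma~\ref{lem:inner-bigger} to equalities via a Lemma~\ref{lem:bar-q}-style perturbation. That perturbation is vacuous here: it applies only to a set $S$ with $\Psi(S)>0$ where $\Psi$ and $\Pi$ differ inside $S$, but by Lemma~\ref{lem:extremal-fit} $\Psi$ is $\pfc{\H}$, and by Lemma~\ref{lem:factorcrit-ident} $\Psi$ and $\Pi$ already agree inside every such $S$ (this incidentally gives $\Delta_{\Pi,\Psi}(S)=0$ for $S\in\H''$, so the ``equal objective values'' reasoning you invoke later in step~6 is unnecessary and also does not work as stated). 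More fundamentally, the bar-$\Psi$ move adjusts values \emph{inside} a set; it cannot move the top-level value $\Psi(S)$ toward $\Lambda(S)$ while keeping $\Psi(S)\ge 0$, and it says nothing at singleton nodes of $\V^*$. Finally, Lemma~\ref{lem:inner-bigger} gives no comparison at nodes that are neither inner nor outer nor on a cycle, so even a global sum-of-duals argument does not close the loop.

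The paper avoids this by inserting Lemma~\ref{lem:edmonds-equal}: whenever $\Lambda(S)\neq\Pi(S)$, the set $S$ is outer, inner, unshrunk, or on an odd cycle. Combining this classification with Lemma~\ref{lem:inner-bigger} and Claim~\ref{claim:theta} yields, for every $S\in\V\cup\H$, either $\Pi(S)\le\Lambda(S)\le\Psi(S)$ or $\Pi(S)\ge\Lambda(S)\ge\Psi(S)$ (Claim~\ref{claim:psi-smaller}). Hence $h(\Lambda,\Pi)\le h(\Psi,\Pi)$, and since $\Lambda$ is optimal to $D_{\H}(G,c)$ and $\Psi$ is $\Pi$-extremal, equality is forced componentwise, giving $\Lambda=\Psi$ directly---no separate step~6 is needed. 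The ingredient you are missing is precisely this use of Lemma~\ref{lem:edmonds-equal} to anchor $\Lambda$ between $\Pi$ and $\Psi$ rather than comparing $\Lambda$ and $\hat\Psi$ head-on.
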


Lemma~\ref{lem:inductive-outer} is a straightforward consequence of Lemmas \ref{lem:dual-identity} and \ref{lem:inner-outer}.

\begin{proof}[Proof of Lemma \ref{lem:inductive-outer}]
By Lemma \ref{lem:dual-identity}, $\L=\emptyset$ since the support of $\Psi$ is $\H''$. 
By Lemma \ref{lem:inner-outer}, the sets $S\in $ Outer$(x,\Pi,\H)$ are outer nodes in the graph $G^*$ obtained after deleting all edges that are non-tight wrt $\Psi$ and contracting all maximal sets of $\H''$ wrt $\Psi$. Also, $\Psi(S)=\Lambda(S)\ge \Pi(S)>0$ for every $S\in$ Outer$(x,\Pi,\H)$ since such sets are outer nodes at the beginning of the execution of the Half-Integral Matching Procedure and the dual values on outer nodes are non-decreasing throughout the execution of the procedure. Therefore, Outer$(x,\Pi,\H)\subseteq$ Outer$(y,\Psi,\H'')$. Further, by the choice of cuts, $\H''\subseteq \H_{\text{next}}$ since $\H''$ is the support of $\Psi$. 
Hence, Outer$(x,\Pi,\H)\subseteq$ Outer$(y,\Psi,\H_{\text{next}})$.

\end{proof}


As a first step in the proof of Lemma \ref{lem:dual-identity}, we show that 
 $\Lambda$ is optimal to $D_{\H}(G,c)$;
note that $\Psi$ was defined as the extremal dual solution to this problem.
\begin{claim}\label{claim:lambda-0-dual-opt}
$\Lambda$ is an optimal solution to $D_{\H}(G,c)$.
\end{claim}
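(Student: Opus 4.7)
The strategy is to extend $\Lambda$ to a solution on $\V\cup\H$ by setting $\Lambda(S):=0$ for every $S\in\H'\setminus\L$, and then show this extension is feasible and optimal for $D_{\H}(G,c)$. The edge inequalities transfer verbatim from $D_{\L}^{\H''}(G,c)$ since the newly introduced sets contribute $0$ to every edge sum, and nonnegativity on $\H'$ is immediate: $\Lambda(S)>0$ for $S\in\L$ by valid-configuration property (A), and $\Lambda(S)=0$ on $\H'\setminus\L$ by construction. The claim therefore reduces to showing $\Lambda(S)\geq 0$ for every $S\in\H''$.

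I plan to prove the stronger statement that $\Lambda(S)$ is non-decreasing throughout the execution of the Half-Integral Matching Procedure, for every $S\in\H''$. Since $\Pi(S)=0$ initially, this gives $\Lambda(S)\geq 0$ at termination. Case I leaves the dual unchanged, so the only risk is Case II, where $\Lambda(S)$ drops exactly when $S\in\mathcal{B}^-$. Each $S\in\H''$ starts as an exposed node in $T\subseteq R$ and so lies in $\mathcal{N}_o$; since sets in $\K=\H''$ are never unshrunk, Lemma~\ref{lem:inner-outer} keeps $S$ an outer node throughout. We may invoke that lemma because we are in a non-decreasing phase: the hypothesis $\o(y)=\o(x)$, the identity $y=z$ established in the proof of Lemma~\ref{lem:inductive-odd}, and Lemma~\ref{lem:edmonds-decrease} together rule out Cases I(a) and I(b). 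It thus suffices to show $\mathcal{N}_o\cap\mathcal{B}^-=\emptyset$ in every Case II iteration. Assuming for contradiction that $S\in\mathcal{N}_o\cap\mathcal{B}^-$, if $S\in R$ then the odd alternating walk witnessing $S\in\mathcal{B}^-$ is already a $T$--$R$ walk, contradicting the Case II hypothesis. Otherwise $S$ is matched via a unique $1$-edge $Sv'$, and I splice together the odd alternating walk $P_1:T\to S$, the edge $Sv'$, and the reversal of the prefix of the even alternating walk $P_2:R\to S$ ending at $v'$. This prefix has odd length, so its reversal remains alternating, and the parities match at both splice points, producing an alternating $T$--$R$ walk that forces Case I, the required contradiction.

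Given feasibility, optimality follows from strong duality. The pair $(z,\Lambda)$ is primal-dual optimal for $(P_{\L}^{\H''}(G,c),D_{\L}^{\H''}(G,c))$, so $c^{\top} z=\sum_{S\in\V\cup\L\cup\H''}\Lambda(S)=\sum_{S\in\V\cup\H}\Lambda(S)$, the last equality because $\H'\setminus\L$ terms contribute $0$. Since $c^{\top} z=c^{\top} y$ and $y$ is optimal to $P_{\H}(G,c)$, the extended $\Lambda$ attains the primal optimum as its dual objective value, hence is optimal to $D_{\H}(G,c)$. The main obstacle is the parity-careful splice in the contradiction argument; once the alternating walk is exhibited, everything else is routine bookkeeping.
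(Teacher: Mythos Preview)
Your proof is correct and follows essentially the same route as the paper. The paper's argument is terser: it invokes Lemma~\ref{lem:inner-outer} to say that each $S\in\H''$, being exposed (hence outer) at the start, remains outer throughout the non-decreasing run, and then uses the complementary slackness property~(C) between $z=y$ and $\Lambda$ for optimality. You unpack the step the paper leaves implicit---namely that an outer node can never lie in $\mathcal{B}^-$ during a Case~II iteration---by an explicit splicing construction of a $T$--$R$ alternating walk; this is essentially the same observation that appears inside the proof of Lemma~\ref{lem:inner-outer} (the remark that $(\mathcal{B}^-\cap\mathcal{N}_o)\setminus\mathcal{B}^+\neq\emptyset$ would force Case~I). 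Your optimality argument via matching objective values is equivalent to the paper's complementary slackness argument.
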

\begin{proof}
It is sufficient to show that $\Lambda$ is feasible to $D_{\H}(G,c)$. Optimality then follows by the complementary slackness property (C) between
$z=y$ and $\Lambda$.
We have $\Lambda(S)=0$ for all $S\in \H\setminus \L$, and
$\Lambda(S)\ge0$ for all $S\in\L$. Hence we only need to prove that
$\Lambda(S)\ge0$ for all $S\in\H''$. This follows by Lemma~\ref{lem:inner-outer} since every such $S$ is an outer node at the beginning of the algorithm.
\end{proof}

Let us study the image of $\Psi$ in the contracted graph
$(G^*,c^*)$, where this graph is defined with respect to the terminating solution $(\L,\H'',z,\Lambda)$. Let us define
$\Theta:\V^*\rightarrow \R$ as follows:
\begin{align*}
\Theta(S)&=
\begin{cases}
 \Psi(S)-\Delta_{\Lambda,\Psi}(S) &\mbox{ if }S\in \L\cap\V^*,\\
 \Psi(S) &\mbox{ if }S\in \V^*\setminus  \L.
\end{cases}
\end{align*}
Note that $\L\cap\V^*$ is the set of maximal sets in $\L$.
\begin{claim}\label{claim:theta}
$\Theta$ is an optimal solution to
$D_0(G^*,c^*)$. Further,
$\Theta(S)\le 0$ holds for every $S\in \L\cap\V^*$.
\end{claim}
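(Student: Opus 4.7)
My plan is to realize $\Theta$ as the iterated image of $\Psi$ under the contraction operation of Lemma~\ref{lem:contract-consistent}, applied one by one to the maximal sets of $\L\cup\H''$ with respect to $\Lambda$. This yields optimality of $\Theta$ for $D_0(G^*,c^*)$, and the sign statement will drop out from Claim~\ref{cl:psi-zero}.

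To apply Lemma~\ref{lem:contract-consistent} for each maximal $U\in\L\cup\H''$ with $\F=\L\cup\H''$ and $\Pi=\Lambda$, I would verify its hypotheses as follows. Property~(A) of the valid configuration gives $(\Lambda,\L\cup\H'')$-factor-criticality of $U$. Every $T\subsetneq U$ in $\L\cup\H''$ must lie in $\L$ (no $\H''$-set can sit strictly inside any $\H'$-set, by the laminar construction $\hat C=V(C)\cup\bigcup_j S_j$ with the $S_j$ maximal in $\H'$), and then $\Lambda(T)>0$ by~(A). Claim~\ref{cl:psi-zero}—whose proof imports verbatim—gives $\Psi\equiv 0$ on $\H'$, so the restriction of $\Psi$ to $\V\cup\L\cup\H''$ has the same dual objective as $\Psi$ in $D_\H(G,c)$, and $y$ likewise remains primal feasible with matching objective; hence both $y$ and $\Psi$ are optimal for the pair $(P_{\L\cup\H''},D_{\L\cup\H''})$. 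Finally, $y(\delta(U))=1$ by complementary slackness, using $\Lambda(U)>0$ when $U\in\L$ and $\Psi(U)>0$ when $U\in\H''$ (the latter from Lemma~\ref{lem:inductive-odd}, which applies since we are in the non-decreasing regime).

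It remains to verify consistency of $\Psi$ with $\Lambda$ inside $U$. For $U\in\H''$, both $\Lambda$ and $\Psi$ are $(\cdot,\L\cup\H'')$-factor-critical on $U$---$\Psi$ because it is an $\pfc{\H}$ dual with $\Psi(U)>0$ and the factor-critical property only weakens when the family shrinks---so Lemma~\ref{lem:factorcrit-ident} forces $\Lambda\equiv\Psi$ inside $U$; consistency is immediate and $\Delta_{\Lambda,\Psi}(U)=0$. For $U\in\L$, Lemma~\ref{lem:consistency-main} supplies both the consistency and $\Delta_{\Lambda,\Psi}(U)\ge 0$. Iterating Lemma~\ref{lem:contract-consistent} over all maximal $U$ (they are pairwise disjoint, so the $\Delta$-values and factor-critical matchings inside each are unaffected by contracting the others), the primal-dual image stays optimal at every stage. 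After all maximal sets are contracted, the family $\L\cup\H''$ dissolves completely and we land in $(P_0(G^*,c^*),D_0(G^*,c^*))$; the dual image is precisely $\Theta$ by the contraction formula in Lemma~\ref{lem:contract-consistent}, so $\Theta$ is optimal. For the sign statement, $S\in\L\cap\V^*$ gives $\Psi(S)=0$ by Claim~\ref{cl:psi-zero} and $\Delta_{\Lambda,\Psi}(S)\ge 0$ from above, hence $\Theta(S)=-\Delta_{\Lambda,\Psi}(S)\le 0$.

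The main obstacle is the family-bookkeeping between $\H$ and $\L\cup\H''$: one must confirm that the restriction of $\Psi$ stays dual optimal for the smaller LP (via Claim~\ref{cl:psi-zero}) so that Lemma~\ref{lem:consistency-main} is applicable, and one must exploit the structural fact that $\H''$-sets are never strictly contained in $\H'$-sets. With those two facts in hand, the iterative application of Lemma~\ref{lem:contract-consistent} is routine.
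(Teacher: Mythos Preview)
Your proposal is correct and follows essentially the same route as the paper: both proofs iterate Lemma~\ref{lem:contract-consistent} over the maximal sets of $\L\cup\H''$, handling sets in $\H''$ via Lemma~\ref{lem:factorcrit-ident} (the paper routes through $\Pi$ as an intermediary, you go directly from $\Lambda$ to $\Psi$) and sets in $\L$ via Lemma~\ref{lem:consistency-main}, then read off the sign from Claim~\ref{cl:psi-zero}. The one point the paper makes explicit that you leave implicit is feasibility of $\Lambda$ for $D_{\L\cup\H''}(G,c)$ (i.e., $\Lambda(S)\ge 0$ on $\H''$), which it pulls from Claim~\ref{claim:lambda-0-dual-opt}; you should cite this when invoking Lemmas~\ref{lem:consistency-main} and~\ref{lem:contract-consistent}, since their hypotheses formally require it.
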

\begin{proof}
$G^*$ arises from $G$ by contracting the maximal sets in $\L\cup\H''$ w.r.t. $\Lambda$.
We note that $\Psi$ is a $\pfc{\L\cup\H''}$ dual optimal solution to
$D_{\L\cup\H''}(G,c)$.
Optimality is because $\Psi$ is an optimal dual solution to $D_{\L}^{\H''}(G,c)$,
since it is feasible to this program, and it is optimal to
$D_{\H}(G,c)$ that has a larger feasible region; the
$\pfc{\L\cup\H''}$ property follows by Lemma~\ref{lem:extremal-fit}.

First consider a set $S\in\H''$.
Inside $S$, Lemma~\ref{lem:factorcrit-ident} implies that
$\Pi$ and $\Psi$ are identical. Further, $\Pi$ and $\Lambda$ are
also identical inside such sets, because the sets
in $\H''$ are never unshrunk during the procedure and hence the dual
values inside do not change. 
Hence contracting such a set w.r.t. $\Lambda$ is the same as
contracting it w.r.t. $\Psi$.

Consider now a maximal set $S\in \L$. By property (A), $S$ is
$(G_\Lambda,\L\cup\H'')$-factor-critical and  $\Lambda(T)>0$ for every $T\subseteq S, T\in \L$.
 By property (C), we have $z(\delta(S))=1$. 
By Claim~\ref{claim:lambda-0-dual-opt}, $\Lambda$ is an optimal (in particular, feasible) solution to $D_{\H}(G,c)$.
Thus Lemma \ref{lem:consistency-main}
is applicable and shows that $\Lambda$ and $\Psi$ are consistent
inside $S$ with $\Delta_{\Lambda,\Psi}(S)\ge 0$.
By Lemma~\ref{lem:contract-consistent},  we may contract such a set $S$ w.r.t. $\Lambda$, and we
obtain an optimal dual solution from $\Psi$ by subtracting
$\Delta_{\Lambda,\Psi}(S)$ from $\Psi(S)$ and leaving the values on
all other sets unchanged.

Applying the above arguments one-by-one for all maximal members of
$\L\cup \H''$ we can conclude that  $\Theta$ is an optimum dual solution to
$D_0(G^*,c^*)$. The second part follows since  $\Psi(S)=0$ and
$\Delta_{\Lambda,\Psi}(S)\ge 0$ if $S\in\L$.
\end{proof}

\begin{claim}\label{claim:psi-smaller}
For every $S\in  \V\cup \H$,
$|\Lambda(S)-\Pi(S)|\le |\Psi(S)-\Pi(S)|$ and equality holds only if
$\Lambda(S)=\Psi(S)$.
\end{claim}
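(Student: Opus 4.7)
The plan is to prove the stronger statement that $\Lambda(S)$ always lies in the closed interval between $\Pi(S)$ and $\Psi(S)$ for every $S\in\V\cup\H$, where $\Pi$ is extended to the new sets in $\H''$ by setting $\Pi(S)=0$. This at once yields $|\Lambda(S)-\Pi(S)|\le|\Psi(S)-\Pi(S)|$, and when equality holds $\Lambda(S)$ must coincide with the endpoint $\Psi(S)$ rather than $\Pi(S)$ (since $\Lambda(S)=\Pi(S)\ne\Psi(S)$ would give $0<|\Psi-\Pi|$). I would split the argument into three cases following the partition $\V\cup\H'\cup\H''$, invoking only structural lemmas already proved.

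For $S\in\H''$ we have $\Pi(S)=0$. Because the assumption $\o(y)=\o(x)$ forces the Half-integral Matching Procedure to be non-decreasing and $S\in\K=\H''$ is never unshrunk, Lemma~\ref{lem:inner-outer} keeps $S$ an outer node of $\V^*$ throughout, so at termination $S\in\V^*\setminus\L$. Lemma~\ref{lem:inner-bigger} applied to the optimal dual $\Theta$ of Claim~\ref{claim:theta} then gives $\Lambda(S)\le\Theta(S)=\Psi(S)$, and $\Lambda(S)\ge 0=\Pi(S)$ by feasibility. For $S\in\H'$, Claim~\ref{cl:psi-zero} gives $\Psi(S)=0$, and I must verify $0\le\Lambda(S)\le\Pi(S)$ via three subcases: (a) if $S\in\H'\setminus\L$ then $S$ was unshrunk and $\Lambda(S)=0$; (b) if $S\in\L$ is not maximal in $\L$ at termination, then $S$ was always contained in an ancestor that was never unshrunk, so $\Lambda(S)=\Pi(S)$; and (c) if $S$ is maximal in $\L$ at termination, the valid configuration property~(A) gives $\Lambda(S)>0$, while the upper bound $\Lambda(S)\le\Pi(S)$ follows by contradiction — $\Lambda(S)>\Pi(S)$ would force $S$ outer by Lemma~\ref{lem:edmonds-equal}, whence Lemma~\ref{lem:inner-bigger} would give $\Lambda(S)\le\Theta(S)\le 0$ (using $\Theta(S)\le 0$ from Claim~\ref{claim:theta}), contradicting $\Lambda(S)>\Pi(S)>0$. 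For a singleton $S=\{u\}\in\V$, if $u$ lies inside some maximal set of $\L\cup\H''$ at termination then (since $\L$ only shrinks) it was always inside such a set and $\Lambda(\{u\})=\Pi(\{u\})$ trivially; otherwise $\{u\}\in\V^*\setminus\L$ at termination with $\Theta(\{u\})=\Psi(\{u\})$, and Lemmas~\ref{lem:edmonds-equal} and~\ref{lem:inner-bigger} jointly place $\Lambda(\{u\})$ between $\Pi(\{u\})$ and $\Psi(\{u\})$ in each of the outer, inner, and odd-cycle subcases.

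The main obstacle is subcase (c) above, where we must simultaneously exploit two delicate facts pointing in opposite directions: the feasibility lower bound $\Lambda(S)>0$ from valid configuration property~(A), and the sign bound $\Theta(S)\le 0$ from Claim~\ref{claim:theta}. Without both of these together, the outer scenario for a maximal $\L$-set could not be excluded, and the bound $\Lambda(S)\le\Pi(S)$ (and with it the entire inequality) would fail.
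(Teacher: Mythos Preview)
Your proposal is correct and follows essentially the same approach as the paper: both establish the stronger claim that $\Lambda(S)$ lies in the closed interval between $\Pi(S)$ and $\Psi(S)$, combining Lemma~\ref{lem:edmonds-equal} (to relate $\Lambda$ vs.\ $\Pi$ with outer/inner status) and Lemma~\ref{lem:inner-bigger} together with Claim~\ref{claim:theta} (to relate $\Lambda$ vs.\ $\Theta$, hence $\Psi$). The only difference is organizational: the paper splits into the two cases $\Lambda(S)>\Pi(S)$ and $\Lambda(S)<\Pi(S)$ and then subcases on whether $S\in\L$, whereas you split first by $S\in\H''$, $S\in\H'$, $S\in\V$ (exploiting $\Pi|_{\H''}=0$ and $\Psi|_{\H'}=0$ upfront) and then by maximality; this makes your write-up longer but equally valid.
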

\begin{proof}
The claim will follow by showing that for every $S\in \V\cup \H$, either
$\Pi(S)\le \Lambda(S)\le\Psi(S)$ or $\Pi(S)\ge
\Lambda(S)\ge\Psi(S)$.

First, if $\Lambda(S)>\Pi(S)$, then by
Lemma~\ref{lem:edmonds-equal}, we
have that $S\in \V^*$ and $S\in {\cal N}_o$. Consequently, by
Lemma~\ref{lem:inner-bigger}, $\Theta(S)\ge \Lambda(S)$.  If $S\in \L$, then
$0\ge\Theta(S)\ge \Lambda(S)$ using Claim~\ref{claim:theta}. 
Otherwise,
$\Psi(S)=\Theta(S)\ge \Lambda(S)>\Pi(S)$.

If $\Lambda(S)<\Pi(S)$, then by Lemma~\ref{lem:edmonds-equal}, we have
that either {\em(1)} $S\in \H'\setminus\L$, that is,
$\Lambda(S)=0$ and $S$ was unshrunk or
 {\em(2)} $S\in {\cal N}_i$ or {\em(3)} $S\in \V^*$ and $S$ is incident to an odd
cycle $C$ in $\supp(z)$. 
Note that $S\notin\H''$ since all sets in $\H''$ are in ${\cal N}_o$ and $\Pi(S)>\Lambda(S)$. 
In case {\em(1)}, we have $\Psi(S)=0=\Lambda(S)<\Pi(S)$.
In both cases {\em(2)} and {\em(3)}, Lemma~\ref{lem:inner-bigger} gives $\Theta(S)\le
\Lambda(S)$. If $S\in  \L$, then $\Psi(S)=0\le \Lambda(S)<\Pi(S)$.
If $S\in \V\cup(\H'\setminus \L)$, then $\Psi(S)=\Theta(S)\le
\Lambda(S)\le \Pi(S)$.
\end{proof}

\begin{proof}[Proof of Lemma~\ref{lem:dual-identity}]
The proof is straightforward by the above two claims: By
Claim~\ref{claim:lambda-0-dual-opt}, $\Lambda$ is optimal to
$D_{\H}(G,c)$, and Claim~\ref{claim:psi-smaller}
shows that $h(\Lambda,\Pi)\le h(\Psi,\Pi)$, and equality can hold only if
$\Lambda$ and $\Psi$ are identical.
\end{proof}

As remarked above, this completes the proof of Lemma~\ref{lem:inductive-outer}, and hence of Lemma~\ref{lem:strong-progress}.


\section{Uniqueness}\label{sec:uniqueness}
In Section~\ref{sec:pos-fact-crit}, we proved that if $\F$ is a
critical family, then there always exists an $\F$-positively-critical
 optimal solution (Corollary \ref{cor:combinatorial-positively-fitting-dual}). This
 argument did not use uniqueness. Indeed, it will also be used  to derive
 Lemma~\ref{lem:make-unique}, showing that a perturbation of the original integer cost
 function satisfies our uniqueness assumption (\ref{prop:uniqueness}).
We will need the following simple claim.
\begin{claim}\label{claim:even-cycle}
For a graph $G=(V,E)$, let $a,b:E\rightarrow \R_+$ be two vectors on
the edges with $a(\delta(v))=b(\delta(v))$ for every $v\in V$.
If $a$ and $b$ are not identical, then there exists an even length
closed walk $C$ such that for every odd edge $e\in C$, $a(e)>0$ and for every
even edge $e\in C$, $b(e)>0$.
\end{claim}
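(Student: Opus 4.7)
The plan is to build an infinite alternating walk whose edges alternate between the sets $E^+ = \{e \in E : a(e) > b(e)\}$ and $E^- = \{e \in E : b(e) > a(e)\}$, and then extract an even-length closed sub-walk via a pigeonhole argument on (vertex, parity) states.

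First I would record the basic structural facts. Any $e \in E^+$ satisfies $a(e) > b(e) \geq 0$, so $a(e) > 0$; symmetrically $e \in E^-$ implies $b(e) > 0$. Since $a \neq b$, at least one of $E^+, E^-$ is non-empty, and the hypothesis $a(\delta(v)) = b(\delta(v))$ rewritten as $\sum_{e \in \delta(v)}(a(e) - b(e)) = 0$ forces that every vertex incident to an edge of $E^+$ is also incident to an edge of $E^-$ (and vice versa): if the $E^+$-contribution at $v$ is strictly positive then some other edge at $v$ must contribute negatively, i.e., lie in $E^-$. In particular both $E^+$ and $E^-$ are non-empty.

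Second, starting at any vertex $v_0$ incident to an edge $e_1 \in E^+$, I would inductively extend to an infinite walk $v_0, e_1, v_1, e_2, v_2, \ldots$ in which $e_i \in E^+$ for odd $i$ and $e_i \in E^-$ for even $i$. Extension is always possible: if $e_i$ was just used and lies in one of the sets, then $v_i$ is incident to some edge $e_{i+1}$ in the opposite set by the structural observation above, and $e_{i+1} \neq e_i$ since the two sets are disjoint.

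Third, I would apply pigeonhole to the $2|V|$ possible states (current vertex, parity of step index) to find indices $j < k$ with $v_j = v_k$ and $j \equiv k \pmod 2$. The closed sub-walk $v_j, e_{j+1}, v_{j+1}, \ldots, e_k, v_j$ then has even length $k-j$, and its edges alternate between $E^+$ and $E^-$. If $j$ is even, the first edge $e_{j+1}$ already lies in $E^+$, so odd-positioned edges in $C$ lie in $E^+$ (hence $a > 0$) and even-positioned edges lie in $E^-$ (hence $b > 0$), as required. If $j$ is odd, I cyclically rotate the closed walk by one step, starting it at $v_{j+1}$; the rotated walk has the same even length and now begins with $e_{j+2} \in E^+$, giving the required pattern. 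I do not anticipate any real obstacle here; the only subtle point is reconciling the \emph{odd/even edge in $C$} terminology with the parities in the underlying walk indexing, which is handled by the closing rotation.
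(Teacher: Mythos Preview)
Your proof is correct and follows essentially the same approach as the paper: both set $z=a-b$, use $z(\delta(v))=0$ to see that every vertex incident to an edge with $z>0$ is also incident to one with $z<0$, and then extract an alternating even closed walk. The paper states the last step in one sentence, whereas you spell out the infinite-walk construction, the $(\text{vertex},\text{parity})$ pigeonhole, and the cyclic rotation to fix the starting parity; these are exactly the details that justify the paper's terse ``This implies the existence of an alternating even closed walk.''
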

\begin{proof}
Due to the degree constraints, $z=a-b$ satisfies $z(\delta(v))=0$ for
every $v\in V$, and since $a$ and $b$ are not identical, $z$ has
nonzero components.
If there is an edge $uv\in E$ with $z(uv)>0$ then there must be
another edge $uw\in E$ with $z(uw)<0$. This implies the existence of
an alternating even closed walk $C$ where for every odd edge $e\in C$,
$0<z(e)=a(e)-b(e)$,
and for every even edge, $0>z(e)=a(e)-b(e)$. This proves the claim.
\end{proof}

\begin{proof}[Proof of Lemma~\ref{lem:make-unique}]
Let $\tilde c$ denote the perturbation of the integer cost
$c:E\rightarrow \Z$. Consider a graph $G=(V,E)$, perturbed cost $\tilde c$ and
critical family $\F$ where (\ref{prop:uniqueness}) does not hold. Choose
a counterexample with $|\F|$ minimal. Let $x$ and $y$ be two different optimal
solutions to
$P_{\F}(G,\tilde c)$. Since $\F$ is a critical family, by Corollary
\ref{cor:combinatorial-positively-fitting-dual}, there exists an $\pfc{\F}$ dual
optimal solution, say $\Pi$.

First, assume $\F=\emptyset$. Then $x$ and $y$ are both optimal
solutions to the bipartite relaxation $P_0(G,\tilde c)$. As they are
not identical, Claim~\ref{claim:even-cycle} gives an even closed walk  $C$ such that
$x(e)>0$ on every even edge and $y(e)>0$ on every odd edge. Let $\gamma_1$ and
$\gamma_2$ be the sum of edge costs on even and on odd edges of $C$, respectively. Then
for some $\varepsilon>0$, we could modify $x$ by decreasing $x(e)$ by $\varepsilon$ on
even edges and increasing on odd edges, and $y$ the other way around. These give two
other optimal matchings $\bar x$ and $\bar y$, with $\tilde c^T\bar x=\tilde
c^Tx+(\gamma_2-\gamma_1)\varepsilon$ and $\tilde c^T\bar y=\tilde
c^Ty+(\gamma_1-\gamma_2)\varepsilon$. Since $\bar x$ and $\bar y$ are both optimal,
this gives $\gamma_1=\gamma_2$. However, the fractional parts of $\gamma_1$ and
$\gamma_2$ must be different according to the definition of the perturbation, giving a
contradiction.

The case $\F\neq\emptyset$ is slightly more complicated. First we will replace $x$, $y$ by
another pair of non-identical optimal solutions $a$, $b$.
We will be able to identify an alternating closed walk $C$ in  $\supp(a)\cup
\supp(b)$ with the additional property that if $C$ intersects
$\delta(S)$ for some set $S\in \F$, then it does so in exactly one
even and one odd edge. The modifications $\bar a$ and $\bar b$ defined
as above would again be feasible, implying $\gamma_1=\gamma_2$.

We first claim that $\Pi(S)>0$ must hold for all $S\in \F$. Indeed, if $\Pi(S)=0$ for
some $S\in \F$, then $x$ and $y$ would be two different  optimal solutions to
$P_{\F\setminus\{S\}}(G,\tilde c)$, contradicting the minimal choice of $\F$.
Let $T\in \F\cup\{V\}$ be a smallest set with the property that
$x(\delta(u,V-T))=y(\delta(u,V-T))$ for every $u\in T$, but $x$ and $y$ are not
identical inside $T$. Note that $V$ trivially satisfies this property and hence such a
set exists. Let ${\cal S}$ denote the collection of maximal sets $S$ in
$\F$ such that $S\subsetneq T$ (${\cal S}$ could possibly be empty).

Inside each maximal set $S\in{\cal S}$, we modify $x$ and $y$ such that they are still
both optimal and different from each other after the modification. Since $\Pi(S)>0$, we
have $x(\delta(S))=1$, $y(\delta(S))=1$ and $S$ is $\Pi$-factor-critical. For $u\in S$,
let $M_u$ denote the $\Pi$-critical matching for $u$ inside $S$.
Let $\alpha_u:=x(\delta(u,V-S)), \alpha_u'=y(\delta(u,V-S))$ and
$w:=\sum_{u\in S}\alpha_u M_u, w':=\sum_{u\in S}\alpha_u' M_u$. 
Consider the following modified solutions:
\begin{align*}
a(e)&:=
\begin{cases}
& x(e) \mbox{ if $e\in \delta(S)\cup E[V\setminus S]$},\\
& w(e) \mbox{ if $e\in E[S]$},
\end{cases}
\end{align*}
\begin{align*}
b(e)&:=
\begin{cases}
& y(e) \mbox{ if $e\in \delta(S)\cup E[V\setminus S]$},\\
& w'(e) \mbox{ if $e\in E[S]$},
\end{cases}
\end{align*}
We claim that $a$ and $b$ are both optimal and non-identical. Optimality follows since both of them use only tight edges w.r.t. $\Pi$,
and $a(\delta(Z))=b(\delta(Z))=1$ still holds for every $Z\in \F$. Further, if
$x$ and $y$ are not identical inside $S$ and $\alpha_u=\alpha_u'$ for
every $u\in S$, then $S$ contradicts the minimality of $T$. 
This leaves two cases: either {\em(1)} $x$ and $y$ are identical inside
$S$, and thus for every $u\in S$,  $\alpha_u=\alpha_u'$ or {\em(2)} $\alpha_u\neq
\alpha_u'$ for some $u\in S$. In both cases it follows that $a\neq b$.

\begin{claim}
Consider two edges $u_1v_1\in \delta(S)\cap \supp(a)$, $u_2v_2\in\delta(S)\cap
\supp(b)$, $u_1,u_2\in S$. Then there exists an even alternating path $P_S$ inside $S$
between $u_1$ and $u_2$ such that $a(e)>0$ for every even edge and $b(e)>0$ for every
odd edge. Also, consider the extended path $P_S'=u_1v_1P_Su_2v_2$. If there exists a
set $Z\subsetneq S,Z\in \F$ such that $V(P_S')\cap Z\neq \emptyset$, then $P_S'$
intersects $\delta(Z)$ in exactly one even and one odd edge.
\end{claim}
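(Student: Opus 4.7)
The plan is to take $P_S$ as the alternating path inside the symmetric difference of two $\Pi$-critical matchings for $S$. First I would observe that $\alpha_{u_1}=x(\delta(u_1,V\setminus S))\ge x(u_1v_1)>0$, because $a$ agrees with $x$ on $\delta(S)$ and $u_1v_1\in \supp(a)$; symmetrically $\alpha'_{u_2}>0$. Recalling that $a$ restricted to $E[S]$ equals $w=\sum_{u\in S}\alpha_u M_u$ and $b$ restricted to $E[S]$ equals $w'=\sum_{u\in S}\alpha'_u M_u$, this immediately yields $M_{u_1}\subseteq \supp(a)\cap E[S]$ and $M_{u_2}\subseteq \supp(b)\cap E[S]$.

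Next I would take $P_S$ to be the unique alternating path in $M_{u_1}\triangle M_{u_2}$ with endpoints $u_1$ and $u_2$; since $M_{u_1}$ and $M_{u_2}$ are near-perfect matchings of $S$ exposing exactly $u_1$ and $u_2$ respectively, the symmetric difference decomposes into a disjoint union of alternating cycles together with this single path. The edge at $u_1$ must come from $M_{u_2}$ (as $u_1$ is exposed in $M_{u_1}$), and the edge at $u_2$ from $M_{u_1}$; alternating in between, odd-indexed edges of $P_S$ are $M_{u_2}$-edges (lying in $\supp(b)$) and even-indexed edges are $M_{u_1}$-edges (lying in $\supp(a)$). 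The last edge being in $M_{u_1}$ forces $P_S$ to have even length, as required.

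For the $\delta(Z)$-count of $P_S'$, my main tool will be Lemma~\ref{lem:critical-matching}(i), which gives $|M_u\cap\delta(Z)|=1$ when $u\in S\setminus Z$ and $|M_u\cap\delta(Z)|=0$ when $u\in Z$. I would then case-split on whether $u_1$ and $u_2$ lie in $Z$. If both are in $Z$, neither $M_{u_1}$ nor $M_{u_2}$ contributes a $\delta(Z)$-edge, so $P_S$ has no $\delta(Z)$-edge and the only crossings of $P_S'$ are $u_1v_1$ at position $1$ (odd) and $u_2v_2$ at position $2k+2$ (even). If exactly one of $u_1,u_2$ lies in $Z$, then the single $\delta(Z)$-edge of the ``outside'' matching must sit on $P_S$ by a parity count (endpoints of $P_S$ on opposite sides of $Z$), and combining it with the corresponding endpoint edge $u_iv_i\in\delta(Z)$ gives one odd- and one even-indexed edge of $P_S'$ in $\delta(Z)$.

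The main obstacle is the case $u_1,u_2\notin Z$ with $V(P_S)\cap Z\neq \emptyset$. Here both $M_{u_1}$ and $M_{u_2}$ contain exactly one $\delta(Z)$-edge, and a priori either could be absorbed into a cycle of $M_{u_1}\triangle M_{u_2}$ rather than into $P_S$. I would dispose of this as follows: every cycle of the symmetric difference crosses $\delta(Z)$ an even number of times, $P_S$ crosses an even number of times (since both endpoints lie outside $Z$), and the total count $|(M_{u_1}\triangle M_{u_2})\cap\delta(Z)|$ is at most $2$. Hence $V(P_S)\cap Z\neq \emptyset$ forces $P_S$ to cross $\delta(Z)$ at least twice, so both $\delta(Z)$-edges lie on $P_S$: an $M_{u_2}$-edge at an odd position in $P_S$ (even in $P_S'$, in $\supp(b)$) and an $M_{u_1}$-edge at an even position in $P_S$ (odd in $P_S'$, in $\supp(a)$). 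This yields the desired exactly one odd- and one even-indexed $\delta(Z)$-edge of $P_S'$, completing the claim.
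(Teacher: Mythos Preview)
Your proposal is correct and follows exactly the same approach as the paper: take $P_S$ from the symmetric difference $M_{u_1}\triangle M_{u_2}$ and use Lemma~\ref{lem:critical-matching}(i) to control $|M_{u_i}\cap\delta(Z)|$. The paper's own proof is a two-sentence sketch that simply cites Lemma~\ref{lem:critical-matching}; your case analysis on the location of $u_1,u_2$ relative to $Z$ spells out precisely the details the paper suppresses.
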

\begin{proof}
By the modification, $\supp(a)\cap E[S]$ contains the $\Pi$-critical-matching $M_{u_1}$
and $\supp(b)\cap E[S]$ contains the $\Pi$-critical-matching $M_{u_2}$. Then the
symmetric difference of $M_{u_1}$ and $M_{u_2}$ contains an
$u_1-u_2$ alternating path satisfying the requirements (by
Lemma~\ref{lem:critical-matching}).
\end{proof}

We perform the above modifications inside every $S\in {\cal S}$, and
denote by $a$ and $b$ the result of all these modifications; these are hence two non-identical optimal solutions.
Let us now contract all sets in ${\cal S}$ w.r.t. $\Pi$; let
$G',\tilde c',\F'$ denote the resulting graph, costs and laminar
family respectively. By Lemma~\ref{lem:contract-factor-critical}(ii),
the images $a'$ and $b'$ are both optimal solutions in
$P'_{\F'}(G',\tilde c')$.

We claim that $a'$ and $b'$ are still not identical inside
$T'$. Indeed, assume $a'$ and $ b'$ are identical inside $T'$. Then we must have had
case {\em(1)} for every $S\in {\cal S}$.
 Note that we only modified $x$ and $y$ on the edge set
$\cup_{S\in {\cal S}}E[S]$, and hence the contracted images must be
identical: $x'=a'$ and $y'=b'$, and consequently, $x'=y'$. As
$x$ and $y$ are not identical inside $T$, they must differ inside at least one $S\in
{\cal S}$.
This contradicts the fact that case {\em(1)} applied for every
$S\in{\cal S}$.

 By the assumption on $T$, we also have
 $a'(\delta(u',V'-T'))=b'(\delta(u',V'-T'))$ for every $u'\in T'$.
Hence Claim~\ref{claim:even-cycle} is applicable, giving an
alternating closed walk $C'$ with $a'(e)>0$ on every even edge and $b'(e)>0$
on every odd edge.
Now, we can extend $C'$ to an even alternating closed walk $C$ in the original graph
using the paths $P_S$ as in the above claim. The resulting closed walk $C$ will have
the property that if there exists a set $Z\subsetneq S,Z\in \F$ such that $V(C)\cap
Z\neq \emptyset$, then $C$ intersects $\delta(Z)$ in exactly one even and one odd
edge.
\end{proof}

\section{Open Questions}
Our initial motivation was to bound the number of iterations of the cutting plane
method using the Padberg-Rao procedure. This question remains open and any analysis
would have to deal with non-half-integral solutions.

Within our algorithm,  Lemma \ref{lem:pof-uniqueness-implies-half-integrality} shows
that it is sufficient to use positively-critical dual optimal solutions to maintain
proper-half-integrality. Can we prove efficient convergence of our cutting plane
algorithm using positively-critical dual optimal solutions (without using extremal dual
solutions)? We believe that such a proof of convergence should depend on whether the
following adversarial variation of Edmonds' algorithm for perfect matching is
polynomial time. Suppose we run the Edmonds' perfect matching algorithm, but after
every few primal/dual iterations, the adversary replaces the current dual solution with a different
one, still satisfying complementary slackness with the (unchanged) primal solution, and having
dual objective value at least as much as the previous one.

Given the encouraging results of this paper, it would be interesting
to prove efficient convergence of the cutting plane method for other
combinatorial polytopes. 
For example, one could try optimizing over the intersection of two matroid polytopes by a lazy-constraint approach---add violated rank constraints if the current LP optimum is not integral. 
Another direction could be to try this approach for optimizing over
the subtour elimination polytope.

\medskip

\noindent {\bf Acknowledgment.} We are grateful to William Cook and Richard Karp for
their kind help and encouragement. 

\bibliographystyle{abbrv}
\bibliography{matching}

\begin{thebibliography}{10}

\bibitem{intersection-cuts-B71}
E.~Balas.
\newblock {Intersection Cuts - A New Type of Cutting Planes for Integer
  Programming}.
\newblock {\em Operations Research}, 19(1):19--39, Jan 1971.

\bibitem{disjunctive-cuts-B79}
E.~Balas.
\newblock {\em {Disjunctive Programming}}, volume~5, pages 3--51.
\newblock Elsevier, 1979.

\bibitem{lift-and-project-BCC93}
E.~Balas, S.~Ceria, and G.~Cornu\'{e}jols.
\newblock {A lift-and-project cutting plane algorithm for mixed 0-1 programs}.
\newblock {\em Mathematical Programming}, 58(1):295--324, Jan 1993.

\bibitem{Bunch97}
P.~R. Bunch.
\newblock {\em A simplex based primal-dual algorithm for the perfect b-matching
  problem - a study in combinatorial optimization algorithm in engineering}.
\newblock PhD thesis, Purdue University, 1997.

\bibitem{our-matching-alg}
K.~Chandrasekaran, L.~A. V\'egh, and S.~S. Vempala.
\newblock A half-integral algorithm for minimum weight matching, 2013.

\bibitem{cg-rank-C73}
V.~Chv\'{a}tal.
\newblock {Edmonds polytopes and a hierarchy of combinatorial problems}.
\newblock {\em Discrete Mathematics}, 4(4):305--337, Apr 1973.

\bibitem{split-cuts-CKS90}
W.~Cook, R.~Kannan, and A.~Schrijver.
\newblock {Chv\'{a}tal closures for mixed integer programming problems}.
\newblock {\em Mathematical Programming}, 47(1):155--174, May 1990.

\bibitem{cutting-plane-closures-CL00}
G.~Cornu\'{e}jols and Y.~Li.
\newblock {Elementary closures for integer programs}.
\newblock {\em Operations Research Letters}, 28(1):1--8, Feb 2001.

\bibitem{CP-intro-TSP-DFJ54}
G.~Dantzig, R.~Fulkerson, and S.~Johnson.
\newblock Solution of a large-scale traveling-salesman problem.
\newblock {\em Operations Research}, 2(4):393--410, Nov 1954.

\bibitem{Edmonds65}
J.~Edmonds.
\newblock Maximum matching and a polyhedron with 0, 1-vertices.
\newblock {\em Journal of Research of the National Bureau of Standards},
  69(1-2):125--130, 1965.

\bibitem{Edmonds65matching}
J.~Edmonds.
\newblock Paths, trees, and flowers.
\newblock {\em Canadian Journal of mathematics}, 17(3):449--467, 1965.

\bibitem{fischetti07}
M.~Fischetti and A.~Lodi.
\newblock Optimizing over the first {Chv{\'a}tal} closure.
\newblock {\em Mathematical Programming}, 110(1):3--20, 2007.

\bibitem{Gomory58}
R.~Gomory.
\newblock Outline of an algorithm for integer solutions to linear programs.
\newblock {\em Bulletin of the American Mathematical Society}, 64(5):275--278,
  1958.

\bibitem{gomory-cutting-plane-G60}
R.~Gomory.
\newblock Solving linear programs in integers.
\newblock In {\em Proceedings of Symposia in Applied Mathematics}, volume~10,
  pages 211--215, July 1960.

\bibitem{Gomory63}
R.~Gomory.
\newblock An algorithm for integer solutions to linear programs.
\newblock {\em Recent advances in mathematical programming}, page 269, 1963.

\bibitem{Grotschel85}
M.~Gr{\"o}tschel and O.~Holland.
\newblock Solving matching problems with linear programming.
\newblock {\em Mathematical Programming}, 33(3):243--259, 1985.

\bibitem{GLS}
M.~Gr{\"o}tschel, L.~Lov{\'a}sz, and A.~Schrijver.
\newblock {\em Geometric Algorithms and Combinatorial Optimization}.
\newblock Springer, 1988.

\bibitem{Kannan87}
R.~Kannan.
\newblock Minkowski's convex body theorem and integer programming.
\newblock {\em Mathematics of Operations Research}, 12(3):415--440, 1987.

\bibitem{Lovasz86}
L.~Lov{\'a}sz and M.~Plummer.
\newblock {\em Matching theory}.
\newblock North Holland, 1986.

\bibitem{lovasz-schrijver91}
L.~Lov{\'a}sz and A.~Schrijver.
\newblock {Cones of matrices and set-functions and 0-1 optimization}.
\newblock {\em SIAM Journal of Optimization}, 1:166--190, 1991.

\bibitem{Miller95}
D.~L. Miller and J.~F. Pekny.
\newblock A staged primal-dual algorithm for perfect b-matching with edge
  capacities.
\newblock {\em ORSA Journal on Computing}, 7(3):298--320, 1995.

\bibitem{nemhauser-wolsey}
G.~Nemhauser and L.~Wolsey.
\newblock {\em Integer and Combinatorial Optimization}.
\newblock Wiley-Interscience, 1999.

\bibitem{MIR-cuts-NW90}
G.~L. Nemhauser and L.~A. Wolsey.
\newblock {A recursive procedure to generate all cuts for 0-1 mixed integer
  programs}.
\newblock {\em Mathematical Programming}, 46(1):379--390, Jan 1990.

\bibitem{Padberg82}
M.~Padberg and M.~Rao.
\newblock Odd minimum cut-sets and $b$-matchings.
\newblock {\em Mathematics of Operations Research}, 7(1):67--80, 1982.

\bibitem{schrijver-IP-book}
A.~Schrijver.
\newblock {\em Theory of Linear and Integer Programming}.
\newblock John Wiley \& sons, 1998.

\bibitem{Schrijver03}
A.~Schrijver.
\newblock {\em Combinatorial optimization}.
\newblock Springer, 2003.

\bibitem{sherali-adams94}
H.~D. Sherali and W.~P. Adams.
\newblock {A hierarchy of relaxations and convex hull characterizations for
  mixed-integer zero-one programming problems}.
\newblock {\em Discrete Applied Mathematics}, 52(1):83--106, Jul 1994.

\bibitem{Trick87}
M.~Trick.
\newblock {\em Networks with additional structured constraints}.
\newblock PhD thesis, Georgia Institute of Technology, 1987.

\end{thebibliography}
\end{document}

\section{Appendix}
\begin{lemma}\label{lem:even-closed-walk}
Given a graph $G=(V,E)$ and $\alpha_u\in [0,1]$ for each $u\in V$, consider the
polytope $P$ described by the following constraints.
\begin{align*}
X(\delta(u)&=1-\alpha_u\ \forall\ u\in V\\
X(e)&\ge 0\ \forall e\in E.
\end{align*}
Let $x\in P$. Then, $x$ is a basic feasible solution if and only if $\supp(x)$ does not
contain an even closed walk.
\end{lemma}
\begin{proof}
We may assume that the $\alpha$ values are not identically $1$ for otherwise $x=0$ is
the unique solution.

Suppose $\supp(x)$ contains an even closed walk. Then, we can obtain two non-identical
solutions in the polytope by increasing $x$ along every even edge and decreasing $x$
along every odd edge by a small amount $\eps$ and vice-versa -- we can modify without
violating the degree constraints since we are modifying along an even closed walk; by
taking $\eps$ to be small enough, we can ensure that non-negativity constraints are not
violated. Now, $x$ can be written as a convex combination of these two solutions, and
hence $x$ is not a basic feasible solutions.

Suppose $\supp(x)$ does not contain an even closed walk. Pick the smallest counter
example $G$ and a non-basic feasible solution $x\in P$ such that $\supp(x)$ does not
contain an even closed walk. We may assume that $\alpha_u<1$ for each node $u\in V(G)$.
Then, each component in  $\supp(x)$ contains at most one odd cycle (if there are two
odd cycles, then we get an even closed walk). By the minimal choice, we may assume that
$\supp(x)$ contains exactly one component. If this component is a tree or an odd cycle,
then it is straightforward to verify that the feasible solution with such a support is
unique. If this component is not a tree, then the edge set of this component can be
partitioned into the edge-set of an odd cycle $C$ and the edge-sets $E_v$ of trees
rooted at nodes $v\in V(C)$. Once again, it is straightforward to verify that a
feasible solution with such a support is unique -- first observe that the solutions on
edges $E_v$ along each rooted tree is unique by iteratively fixing $x(e)$ starting from
an edge incident to a leaf and repeating it; after fixing these values, we have a
sub-problem where we need to fix the edge values $x(e)$ along an odd cycle and this can
once again be done uniquely.
\end{proof}

\end{document}